\newtheorem{theorem}{Theorem}[section]
\newtheorem{corollary}[theorem]{Corollary}
\newtheorem{lemma}[theorem]{Lemma}
\newtheorem{definition}[theorem]{Definition}
\newtheorem{remark}[theorem]{Remark}
\newcommand{\ra}[1]{\renewcommand{\arraystretch}{#1}}
\def\Ph{\mathbf{X}}
\def\Gm{\Gamma}
\def\RR{\mathbb{R}}
\newcommand{\support}{\mathcal S}
\newcommand{\model}{\mathcal M}
\newcommand{\obs}{\mathbf y}
\newcommand{\x}{\boldsymbol{\beta}}
\newcommand{\z}{\mathbf z}
\newcommand{\w}{\mathbf w}
\newcommand{\X}{\mathbf X}
\newcommand{\bestx}{\x^{\star}} % SRB changing from \ast to \star
\newcommand{\numsam}{n}
\def \dim {p}      % Dimension of signals
\newcommand{\noise}{\boldsymbol{\xi}}
\newcommand{\sparsity}{k}
\newcommand{\EE}{\mathcal E}
\newcommand{\Real}{\mathbb{R}}
\newcommand{\bPhi}{\X}
\newcommand{\bitem}{\begin{itemize}}
\newcommand{\eitem}{\end{itemize}}
\DeclareMathOperator*{\argmax}{argmax}        % puts subscripts in the right place
\DeclareMathOperator*{\argmin}{argmin}
\newcommand{\beqn}{\begin{equation}}
\newcommand{\eeqn}{\end{equation}}
\newcommand{\balign}{\begin{align}}
\newcommand{\ealign}{\end{align}}
\newcommand{\R}{ \mathbb{R} }     %{{\bf R}}
\title{Convex block-sparse linear regression with expanders -- provably}
\author{Anastasios Kyrillidis \\ UT Austin \and 
 Bubacarr Bah \\ UT Austin \and Rouzbeh Hasheminezhad \\ Sharif University of Technology \and Quoc Tran-Dinh \\ University of North Carolina \and Luca Baldassarre \\ EPFL \and Volkan Cevher \\ EPFL}
\begin{document}
\maketitle

\begin{abstract}
Sparse matrices are favorable objects in machine learning and optimization. 
When such matrices are used, in place of dense ones, the overall complexity requirements in optimization can be significantly reduced in practice, both in terms of space and run-time. 
Prompted by this observation, we study a convex optimization scheme for block-sparse recovery from linear measurements. 
To obtain linear sketches, we use expander matrices, \textit{i.e.}, sparse matrices containing only few non-zeros per column. Hitherto, to the best of our knowledge, such algorithmic solutions have been only studied from a non-convex perspective. Our aim here is to theoretically characterize the performance of convex approaches under such setting. 

Our key novelty is the expression of the recovery error in terms of the model-based norm, while assuring that solution lives in the model. 
To achieve this, we show that sparse model-based matrices satisfy a \emph{group} version of the null-space property. 
Our experimental findings on synthetic and real applications support our claims for faster recovery in the convex setting -- as opposed to using dense sensing matrices, while showing a competitive recovery performance.  
\end{abstract}

\section{Introduction} \label{sec:intro}

Consider the {\em sparse recovery problem} in the linear setting: given a measurement matrix $\Ph \in \R^{\numsam \times \dim}$ ($\numsam < \dim$), we seek a sparse vector $\x^\star \in \R^{\dim}$ that satisfies a set of measurements $\obs = \Ph \x^\star$ or finds a solution in the feasible set $\left\{\x ~:~\|\obs - \bPhi \x\|_2 \leq \|\noise\|_2\right\}$, when noise $\noise \in \R^\numsam$ is present. % This can also be extended to a noisy setup $\obs = \Ph \x^\star + \noise$, where $ \noise \in \R^\numsam$ models additive noise or model perturbations. 
This problem is closely related to compressive sensing \cite{donoho2006compressed, candes08introduction} and the subset selection problem \cite{tibshirani1996regression}, as well as to graph sketching \cite{ahn2012graph} and data streaming \cite{nelson2012deterministic}. 

Typically, the analysis included in such task focuses on investigating conditions on $\Ph$ and the recovery algorithm $\Delta$ to obtain estimate $\widehat{\x} = \Delta(\Ph\x^\star + \noise)$ that yields a guarantee:
\begin{equation}
\label{eqn:cs_error_bnd}
\|\x^\star - \widehat{\x}\|_{q_1} \leq C_1 \|\x^\star - \x_s^\star\|_{q_2} + C_2\|\noise\|_{q_1}.
\end{equation}
Here, $\x_s^\star$ denotes the best $s$-sparse approximation to $\x^\star$ for constants $C_1, C_2>0$ and $1\leq q_2 \leq q_1 \leq 2 $ \cite{candes2006stable,berinde2008combining}. This type of guarantee is known as the $\ell_{q_1}/\ell_{q_2}$ error guarantee.

Without further assumptions, there is a range of recovery algorithms that achieve the above, both from a non-convex \cite{needell2009cosamp, kyrillidis2011recipes, kyrillidis2012combinatorial} and a convex perspective \cite{chen1998atomic, berinde2008combining}. \emph{In this work, we focus on the latter case.} % while non-convex approaches seem to be a more natural choice to describe a sparsity model, convex relaxations typically yield more stable results with strong recovery guarantees.
Within this context, we highlight $\ell_1$-minimization method, also known as {\em Basis pursuit} (BP) \cite{chen1998atomic}, which is one of the prevalent schemes for compressive sensing: % where we solve:
\begin{equation}
\min_{\x} \|\x\|_{1} \quad \mbox{subject to: }  \obs = \Ph \x. \label{eq:BP}
\end{equation} %for $\sigma = $.
Next, we discuss common configurations for such problem settings, as in \eqref{eq:BP}, and how a priori knowledge can be incorporated in optimization. We conclude this section with our contributions.

\definecolor{maroon}{cmyk}{1,0.4,0,0}

\noindent \textbf{Measurement matrices in sparse recovery:} 
Conventional wisdom states that the best compression performance, \textit{i.e.}, using the least number of measurements to guarantee \eqref{eqn:cs_error_bnd}, is achieved by random, independent and identically distributed \emph{dense} sub-gaussian matrices \cite{candes2005decoding}. Such matrices are known to satisfy the \emph{$\ell_q$-norm restricted isometry property} (RIP-$q$) with high-probability. That is, for some $\delta_s \in (0, 1)$ and $\forall ~s\text{-sparse}~\x \in \R^\dim$ \cite{candes2006robust, berinde2008combining}: 
\begin{equation}
\label{eqn:rip}
(1-\delta_s)\|\x\|_q^q \leq \|\Ph\x\|_q^q \leq (1+\delta_s)\|\x\|_q^q.
\end{equation} In this case, we can tractably and provably approximate sparse signals from $\numsam = \mathcal{O}\left(s \log\left(\dim/s\right)\right)$ sketches for generic $s$-sparse signals. %\cite{candes2006robust} 
\emph{Unfortunately, dense matrices require high time- and storage-complexity when applied in algorithmic solutions.}\footnote{There exist \emph{structured} dense matrices that achieve better time and storage complexity, as compared to random Gaussian matrices; see \cite{price2013sparse} for discussion on subsampled Fourier matrices for sparse recovery. However, sparse design matrices have been proved to be  more advantageous over such structured dense matrices \cite{berinde2008combining}.}

On the other hand, in several areas, such as data stream computing and combinatorial group testing \cite{nelson2012deterministic}, it is vital to use a {\em sparse} sensing matrix $\Ph$, containing only very few non-zero elements per column \cite{indyk2008near,berinde2008combining}. In this work, we consider random sparse matrices $\bPhi $ that are adjacency matrices of expander graphs; a precise definition is given below.
\begin{definition}
\label{def:mdlexpander}
Let a graph $\left(\dim, \numsam, \EE \right)$ be a left-regular bipartite graph with $\dim$ left (variable) nodes, $\numsam$ right (check) nodes, a set of edges $\EE$ and left degree $d$.
If, any $\support \subseteq \left\{1, \dots, \dim\right\}$ with $|\support| = s \ll \dim$ and for some $\epsilon_s \in (0,1/2)$, we have that the set of neighbours of $\support$, $\Gamma(\support)$, satisfy $|\Gamma(\support)| \geq (1-\epsilon_s)d|\support|$, then the graph is called a {\em $(s,~d,~\epsilon_s)$-lossless expander graph}. The adjacency matrix of an expander graph is denoted as $\bPhi \in \left\{0, 1\right\}^{\numsam \times \dim}$ and is called as \emph{expander matrix}.
\end{definition}

\noindent In the above, the attribute $|\Gamma(\support)| \geq (1-\epsilon_s)d|\support|$ is known as the \emph{expansion} property, where $\epsilon_s \ll 1$ \cite{jafarpour2009efficient}. As \cite{berinde2008combining} states, expander matrices satisfy the RIP-1 condition if $\forall s\mbox{-sparse vector } \x \in \R^\dim$, we have:
\begin{equation}
\label{eqn:mrip1}
(1-2\epsilon_s)d\|\x\|_1 \leq \|\Ph\x\|_1 \leq d\|\x\|_1.
\end{equation} %where $\delta_{\sparsity} = C \epsilon, $ for some constant $C > 1$.
Since the recovery algorithms typically use $\Ph$ and its adjoint $\Ph^T$ as subroutines over vectors, \emph{sparse matrices have low computational complexity, even in the convex case \eqref{eq:BP}}, without any loss in sample complexity $\numsam = \mathcal{O}\left(s \log\left(\dim/s\right)\right)$; see Section \ref{sec:experiments} for some illustrative results on this matter. Along with computational benefits, \cite{berinde2008combining} provides recovery guarantees of the form \eqref{eqn:cs_error_bnd} for BP recovery where $q_1 = q_2 = 1$. 

\noindent \textbf{Model-based sparse recovery:} 
Nevertheless, 
%while solving \eqref{eq:BP} results in sparse model selections, in many cases it does not capture the true underlying signal structure. S
sparsity is merely a first-order description of $\x^\star$ and in many applications we have considerably more information \emph{a priori}. In this work, we consider the \emph{$\sparsity$-sparse block model} \cite{yuan2006model, baldassarre2013group, kyrillidis2015structured}: %\cite{bah2014model}:

%\subsection{Block sparse model and the $\ell_{2,1}$-norm} \label{sec:structured} 
%A fixed subset of $\sparsity \subseteq [M]$ groups is denoted by .

\medskip
\begin{definition}{\label{def:block}}
We denote a {\em block-sparse structure} by $\model := \left\{ \mathcal{G}_1, \dots, \mathcal{G}_M\right\}$ where $\mathcal{G}_i \subseteq [\dim], |\mathcal{G}_i| = g$ for $i = 1, \ldots, M$, $\mathcal{G}_i \cap \mathcal{G}_j = \emptyset, ~i \neq j$, and $M$ is the total number of groups. Given a group structure $\model$, the {\em $\sparsity$-sparse block} model is defined as the collection of sets $\model_{\sparsity} := \left\{ \mathcal{G}_{i_1}, \dots, \mathcal{G}_{i_{\sparsity}}\right\}$ of $\sparsity$ groups from $\model$.
%A set $\model_{\sparsity}$ is {\em block $\sparsity$-sparse} if $\model_\sparsity = \cup_{\mathcal{G}\in\GG_k} \mathcal{G}$ and $\support$ is comprised $\sparsity$-groups.
\end{definition} 

%Such group sparsity models feature collections of groups of variables that should not overlap;
%\textit{i.e.}, we do not allow the intersection between any two sets $\G_j$ and $\G_\ell$ from $\model	$, $j \neq \ell$. 
We consider block-sparse models such that $\displaystyle \cup_{\mathcal{G} \in \model}\mathcal{G} \equiv \left\{1, \dots, \dim\right\}$ and $g = \dim/M$ where $M$ is the number of groups. % and denote $g = \max_{i = 1, \dots M} g_i$. 
The idea behind group models is the identification of group of variates that should be either selected or discarded (\textit{i.e.}, set to zero) together. Such settings naturally appear in applications such as gene expression data \cite{zhou2010association} and neuroimaging \cite{gramfort2009improving}.

As an indicator of what can be achieved, Model-based Compressive Sensing \cite{baraniuk2010model}, leverages such models with \emph{dense sensing matrices} to provably reduce the number of measurements for stable recovery from $\mathcal{O}\left(\sparsity g \log \left(\dim / (\sparsity g\right)\right)$ to $\mathcal{O}\left(\sparsity g + \sparsity \log (M/\sparsity)\right)$ using \emph{non-convex} schemes\footnote{Most of the non-convex approaches known heretofore consider a (block) sparse constrained optimization criterion, where one is minimizing a data fidelity term (\textit{e.g.}, a least-squares function) over cardinality constraints.}; see \cite{hegde2015nearly} for a more recent discussion and extensions on this matter. Along this research direction, \cite{indyk2013model} and \cite{bah2014model} propose non-convex algorithms for equally-sized and variable-sized block-sparse signal recovery, respectively, using sparse matrices. To show reduction in sampling complexity, $\bPhi$ is assumed to satisfy the \emph{model RIP-1} \cite{indyk2013model}:\footnote{While probabilistic constructions of expander matrices satisfying model RIP-1 coincide with those of regular RIP-1 expander matrices, the model-based assumption on signals $\x$ in \eqref{eqn:mrip2} results into constructions of expanders $\Ph$ with less number of rows and guaranteed signal recovery.}
\begin{equation}
\label{eqn:mrip2}
(1-2\epsilon_{\model_{\sparsity}})d\|\x\|_1 \leq \|\Ph\x\|_1 \leq d\|\x\|_1, 
\end{equation} $\forall \sparsity\mbox{-block-sparse vector	}~ \x \in \R^\dim$, where $\epsilon_{\model_\sparsity}$ represents the \emph{model-based} expansion constant. 
%From a different perspective, \cite{price2011efficient} proposes decoupling locating and estimating equally-sized blocks for efficiency, where only $\mathcal{O}(\sparsity\log\dim + \sparsity g)$ measurements  are sufficient for stable recovery using sparse matrices.
%See Table \ref{table:algo_comp_summary} for a brief comparison of related works with regards to algorithms, their runtimes and error guarantees.

%To the best of our knowledge, there are no recovery guarantees available in this case, when $\Ph$ is expander-based sparse matrix.

%\footnotetext[1]{Loopless overlapping groups  \cite{bah2014model}}

%\footnotetext[2]{Overlapping and hierarchical groups -- more general models which include tree and block models.}

\subsection{Contributions}
Restricting error guarantees to only variations of standard $\ell_{q_1} / \ell_{q_2}$-norms might be inconvenient in some applications. Here, we broaden the results in \cite{indyk2011k} towards having non-standard $\ell_q$ distances in approximation guarantees.
%Hitherto, to the best of our knowledge, only non-convex approaches of the model-based sparse recovery with expanders exist, with non-trivial  \cite{indyk2013model,bah2014model}, 
To the best of our knowledge, this work is the first attempt for provable convex recovery with approximate guarantees in the appropriate norm and using sparse matrices. Similar results -- but for a different model -- for the case of dense (Gaussian) sensing matrices are presented in \cite{eldar2009robust}.
%One of the first natural approaches to structured sparsity in convex optimization has been to use structured sparsity-inducing convex norms for a given model $\model_\sparsity$ \cite{bach2008exploring,jacob2009group}. Albeit convex formulation leads to superlinear runtime complexity, a practical implementation of such scheme gains significant speed-ups and constitutes an alternative solution to the non-convex realizations.

In particular, in Section \ref{sec:results}, we provide provable error guarantees for the convex criterion:
\begin{equation}{\label{eq:modelBP}}
\min_{\x} \|\x\|_{2,1} \quad \mbox{subject to: }  \obs = \Ph \x,
\end{equation} where $\|\cdot\|_{2,1}$ is the $\ell_{2,1}$-norm; see next Section for details. Our key novelty is to provide $\ell_{2,1}/\ell_{2,1}$ approximation guarantees using sparse matrices for sensing; see Section \ref{sec:proof}.
%\item [$(ii)$] We further support our findings with near-optimal sample complexity guarantees that supplement similar results presented in the recent works \cite{indyk2013model,bah2014model}; see Section \ref{sec:sampling}. % and supports the structures considered in this work. This is the first work to provide such guarantees in the appropriate model-based norm. 
%We also derive sampling bounds of $\bigO(k)$ for the models we consider, i.e. overlapping and hierarchical groups. To the best of our knowledge, this is the first results to achieve $\bigO(k)$ number of measurement for model-based RIP-1 matrices. 
In practice, we show the merits of using sparse matrices in convex approaches for \eqref{eq:modelBP}, both on synthetic and real data cases; see Section \ref{sec:experiments}.
%\end{itemize}

%\subsection{Our Techniques}
%The above-mentioned results are derived using techniques that include the following.
%
%We combine the model RIP-1 property and the null space property to provide convex recovery guarantees in the group norm. We build on the proof technique of \cite{berinde2008combining} that uses the standard null space property (NSP) and the RIP-1 for the recovery of generic $k$-sparse vectors. Our innovation is the use of NSP in the right structure sparsity norms as proposed in the recent paper \cite{juditsky2014unified}. 
%
%The derivation of the reduction in sample complexity follows standard probabilistic arguments for existence of expanders. We carefully enumerate the cardinality of the sets we considered. Then we use union bounds to achieve the tails bounds required. Key in these proofs has been the use of a tighter upper bound on $\binom{n}{k}$ (which involves the  Shannon entropy function) coming from {\em Stirling's approximation} \cite[p.45]{david1997penguin} rather than the much used in the literature looser bound: $\binom{n}{k} \leq \left(\frac{ne}{k}\right)^k$.

\section{Preliminaries} \label{sec:prelim}

Scalars are mostly denoted by plain letters (\emph{e.g.} $\sparsity$, $\dim$), vectors by lowercase boldface letters (\emph{e.g.}, ${\bf x}$), matrices by uppercase boldface letters (\emph{e.g.}, ${\bf A}$) and sets by uppercase calligraphic letters (\emph{e.g.}, $\mathcal{S}$), with the exception of $[\dim]$ which denotes the index set $\{1, \ldots, \dim\}$.
Given a set $\mathcal{S}$ in item space $\mathcal{I}$ such that $\mathcal{S} \subseteq \mathcal{I}$, we denote its complement by $\mathcal{I} \setminus \mathcal{S}$; we also use the notation $\mathcal{S}^c$, when the item space is clearly described by context. For vector $\x\in\RR^\dim$, $\x_\mathcal{S}$ denotes the restriction of $\x$ onto $\mathcal{S}$, \emph{i.e.}, $(\x_\mathcal{S})_i = \x_i$ if $i \in \mathcal{S}$ and $0$ otherwise.
We use $|\mathcal{S}|$ to denote the cardinality of a set $\mathcal{S}$.
The $\ell_p$ norm of a vector ${\bf x} \in \Real^\dim$ is defined as $\|{\bf x}\|_p := \left ( \sum_{i=1}^\dim |x_i|^p \right )^{1/p}$.

\noindent \textbf{The $\ell_{2,1}$-norm:} The first convex relaxation for block-sparse approximation is due to \cite{yuan2006model}, who propose group sparsity-inducing $\ell_{2,1}$ convex norm:
\begin{equation*}
%\label{eqn:groupnorm}
\|\x\|_{2,1} := \sum_{\mathcal{G} \in \model} w_i \|\x_{\mathcal{G}}\|_2,~~w_i > 0.
\end{equation*} The $\ell_{2,1}$-norm construction follows group supports according to a \emph{predefined} model $\mathcal{M}$ and promotes sparsity in the group level. %: discarded groups set to zero set of indices simultaneously and only a ``sparse'' set of groups is active.
%In addition, \cite{jacob2009group} extend the work of \cite{yuan2006model} on general group-sparse approximations of signals, where \emph{arbitrary} overlaps are allowed; this model is known as the \emph{latent group model}. %In this case, each signal $\x \in \R^\dim$ can be decomposed as $\x = \sum_{\mathcal{G} \in \model} \bv_{\x}^{\mathcal{G}}$ where $\bv_{\x}^{\mathcal{G}} \in \R^\dim$ and $\text{supp}(\bv_{\x}^{\mathcal{G}}) \subset \mathcal{G}$, $\forall \mathcal{G} \in \model$; i.e., a set of latent vectors $\left\{\bv_{\x}^{\mathcal{G}_1}, \dots, \bv_{\x}^{\mathcal{G}_M} \right\}$ is combined linearly to form $\x$. In this case, \eqref{eqn:groupnorm} becomes:
%\begin{equation}
%\label{eqn:latent_groupnorm}
%\|\x\|_{2,1} := \min_{\bv_{\x}^{\mathcal{G}}: \x = \sum_{\mathcal{G} \in \model} \bv_{\x}^{\mathcal{G}}} \sum_{\mathcal{G} \in \model} w_i \|\bv_{\x}^{\mathcal{G}}\|_2,~~w_i > 0.
%\end{equation}
%In this work, we focus on the block-sparse (non-overlapping) case where the two definitions of the $\ell_{2,1}$-norm coincide and correspond to the same standard group norm. %; without loss of generality, we will use the signal decomposition $\x = \sum_{\mathcal{G} \in \model} \bv_{\x}^{\mathcal{G}}$ in our proofs. 
We assume all the $w_i = 1$, but may be otherwise for a more general setting \cite{jacob2009group}.

Given a vector $\w \in \R^\dim$, we define $\model_{\sparsity}^\star$ as the \emph{best $\sparsity$-block sparse index set} according to:
%\begin{align*}
%\model_{\sparsity}^\star \in \argmin_{\substack{\model_{\sparsity} \text{ is } \sparsity\text{-block sparse}, \\ \text{according to~} \model}} \|\w - \w_{\model_{\sparsity}}\|_{2,1}
%\end{align*}
\begin{align*}
\model_{\sparsity}^\star \in \argmin_{\model_{\sparsity} \subseteq \model} \|\w - \w_{\model_{\sparsity}}\|_{2,1}
\end{align*}

\section{Problem statement}

To properly set up the problem and our results, consider the following question:

\medskip
\noindent \textsc{Question:} {\it Let $\model$ be a predefined and known a priori block sparse model of $M$ groups of size $g$ and $\sparsity > 0$ be a user-defined group-sparsity level. Consider $\Ph \in \left\{0, 1\right\}^{\numsam \times \dim} (\numsam < \dim$) be a known $(\sparsity\cdot g, ~d, ~\epsilon_{\model_{\sparsity}})$ expander \emph{sparse} matrix, satisfying model-RIP-1 in \eqref{eqn:mrip2} for some degree $d > 0$ and $\epsilon_{\model_{\sparsity}} \in (0, 1/2)$. Assume an unknown vector $\x^\star \in \R^\dim$ is observed through $\obs = \Ph \x^\star$. Using criterion \eqref{eq:modelBP} to obtain a solution $\widehat{\x}$ such that $\obs = \Ph \widehat{\x}$, can we achieve a constant approximation of the form:}
\begin{align*}
\|\widehat{\x} - \x^\star\|_{2,1} \leq C_1 \cdot \|\x^\star - \x^\star_{\model_\sparsity^\star}\|_{2,1}, ~~C_1 > 0,
\end{align*} {\it where $\model_{\sparsity}^\star$ contains the groups of the best $\sparsity$-block sparse approximation of $\x^\star$?
}

\section{Main result}{\label{sec:results}}

In this section, we answer affirmatively to the above question with the following result:

\begin{theorem}{\label{sec_results:thm1}}
Let $\Ph \in \left\{ 0, 1\right\}^{\numsam \times \dim}$ be a $(k\cdot g, ~d, ~\epsilon_{\model_{\sparsity}})$ expander, satisfying model-RIP-1 \eqref{eqn:mrip2} for block sparse model $\model$, as in Definition \ref{def:block}. %Let $\delta_{\model_\sparsity} = C\epsilon,~C > 1$ and
%\textcolor{magenta}{and $d \geq C_2 \cdot \frac{\log M}{\delta_{\model_\sparsity}}$. Moreover, we assume $g < \frac{d^2}{9}$.}
Consider two vectors $\x^\star, \widehat{\x} \in \R^{\dim}$ such that $\Ph \x^\star = \Ph \widehat{\x}$, where $\x^\star$ is unknown and $\widehat{\x}$ is the solution to \eqref{eq:modelBP}. Then, $\|\widehat{\x}\|_{2,1} \leq \|\x^\star\|_{2,1}$ and:
\begin{align*}
\|\x^\star - \widehat{\x}\|_{2,1} \leq \frac{2}{1 - \frac{4\epsilon_{\model_\sparsity}\cdot g}{1-2\epsilon_{\model_\sparsity}}} \cdot \|\x^\star - \x^\star_{\model_{\sparsity}^\star}\|_{2,1}.
\end{align*} %where $\model_{\sparsity}^\star$ contains the group indices of the best $k$-group-sparse representation of $\x^\star$, with respect to $\model$. For $\delta_{\model_\sparsity} \in (0,1)$ and $g < d^2/9$, we have $\frac{2}{1-2\frac{\sqrt{g}}{(1-\delta_{\model_k})d + \sqrt{g}}} < 4$.
\end{theorem}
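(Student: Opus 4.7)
I would follow the classical null-space-property template for compressive recovery, adapted from $\ell_1/\ell_1$ expander recovery to the block-sparse $\ell_{2,1}/\ell_{2,1}$ setting. The first claim, $\|\widehat{\x}\|_{2,1} \leq \|\x^\star\|_{2,1}$, is immediate from optimality of $\widehat{\x}$ and feasibility of $\x^\star$ in \eqref{eq:modelBP}. Setting $T := \model_\sparsity^\star$ and $\eta := \widehat{\x}-\x^\star$, we have $\eta \in \ker(\Ph)$. Decomposing $\|\widehat{\x}\|_{2,1} \leq \|\x^\star\|_{2,1}$ group-by-group and applying the reverse triangle inequality $\|\x^\star_\mathcal{G}+\eta_\mathcal{G}\|_2 \geq \bigl|\|\x^\star_\mathcal{G}\|_2 - \|\eta_\mathcal{G}\|_2\bigr|$ on each group yields the cone bound $\|\eta_{T^c}\|_{2,1} - \|\eta_T\|_{2,1} \leq 2\|\x^\star_{T^c}\|_{2,1}$, equivalently $\|\eta\|_{2,1} \leq 2\|\eta_T\|_{2,1} + 2\|\x^\star_{T^c}\|_{2,1}$. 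This step uses only the definition of the $\ell_{2,1}$-norm, not the measurement matrix.

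The technical heart of the proof is then to establish a group null-space property of the form $\|\eta_T\|_{2,1} \leq \rho'\,\|\eta\|_{2,1}$ with $\rho' := \frac{2\,\epsilon_{\model_\sparsity}\,g}{1-2\,\epsilon_{\model_\sparsity}}$, valid for every $\eta \in \ker(\Ph)$ and every $\sparsity$-block-sparse support $T$. The natural starting point is $\Ph\eta_T = -\Ph\eta_{T^c}$ together with model-RIP-1 \eqref{eqn:mrip2} applied to the $\sparsity$-block-sparse vector $\eta_T$: $(1-2\epsilon_{\model_\sparsity})d\|\eta_T\|_1 \leq \|\Ph\eta_T\|_1 = \|\Ph\eta_{T^c}\|_1 \leq d\|\eta_{T^c}\|_1$. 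I expect the transition from this $\ell_1/\ell_1$ relation to an $\ell_{2,1}/\ell_{2,1}$ bound to be the main obstacle: a brute-force conversion through $\|\cdot\|_{2,1} \leq \|\cdot\|_1 \leq \sqrt{g}\,\|\cdot\|_{2,1}$ loses a factor of $\sqrt{g}$, which is too large to ever make $\rho'<1/2$ for nontrivial $g$. To recover the tighter dependence $\rho' \propto \epsilon_{\model_\sparsity}\,g$, I would argue group-by-group on the underlying bipartite graph: by the expansion property, at least $(1-2\epsilon_{\model_\sparsity})\,d\,g$ of the $d\,g$ edges leaving each active group $\mathcal{G}\subseteq T$ land on unique check-nodes; the contribution of the remaining $2\epsilon_{\model_\sparsity}\,d\,g$ ``collision'' edges is controlled by the model-expansion counting argument, while Cauchy--Schwarz applied locally on each active group ensures that the left-hand side accumulates $\|\eta_\mathcal{G}\|_2$ rather than $\|\eta_\mathcal{G}\|_1$. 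This replaces the naive $\sqrt{g}$ blow-up with the announced $\epsilon_{\model_\sparsity}\,g$ factor.

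Substituting the group NSP into the cone bound gives $(1-2\rho')\|\eta\|_{2,1} \leq 2\|\x^\star_{T^c}\|_{2,1}$. Setting $\rho := 2\rho' = \frac{4\,\epsilon_{\model_\sparsity}\,g}{1-2\,\epsilon_{\model_\sparsity}}$ and using $\|\x^\star - \x^\star_{\model_\sparsity^\star}\|_{2,1} = \|\x^\star_{T^c}\|_{2,1}$ then produces exactly the bound $\|\x^\star - \widehat{\x}\|_{2,1} \leq \frac{2}{1-\rho}\,\|\x^\star - \x^\star_{\model_\sparsity^\star}\|_{2,1}$ claimed in Theorem \ref{sec_results:thm1}. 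The admissibility condition $\rho<1$, i.e.\ $\epsilon_{\model_\sparsity} < \tfrac{1}{4g+2}$, is the analogue of the usual small-coherence assumption for expanders in the block-sparse regime.
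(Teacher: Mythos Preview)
Your overall architecture matches the paper exactly: optimality of $\widehat{\x}$ in \eqref{eq:modelBP} gives $\|\widehat{\x}\|_{2,1}\le\|\x^\star\|_{2,1}$, the group-wise reverse triangle inequality produces the cone bound $\|\eta\|_{2,1}\le 2\|\eta_T\|_{2,1}+2\|\x^\star_{T^c}\|_{2,1}$, and the remaining work is the group null-space property $\|\eta_T\|_{2,1}\le\rho'\|\eta\|_{2,1}$ with $\rho'=\tfrac{2\epsilon_{\model_\sparsity} g}{1-2\epsilon_{\model_\sparsity}}$. This is precisely the paper's Lemma~\ref{sec_proof:lem1}, and your final substitution to obtain the theorem is correct.

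The gap is in your sketch of that lemma. You correctly flag that the naive $\ell_1\!\to\!\ell_{2,1}$ conversion loses $\sqrt{g}$, but your proposed fix attacks the wrong side. On the left, the conversion is free: $\|\eta_T\|_1=\sum_{\mathcal{G}\in T}\|\eta_\mathcal{G}\|_1\ge\sum_{\mathcal{G}\in T}\|\eta_\mathcal{G}\|_2=\|\eta_T\|_{2,1}$, so model-RIP-1 already gives $(1-2\epsilon_{\model_\sparsity})d\,\|\eta_T\|_{2,1}\le\|\Ph\eta_T\|_1$ with no loss; Cauchy--Schwarz is not needed and in fact points the wrong way. Your unique-neighbor and collision-edge discussion concerns edges leaving groups \emph{inside} $T$, but the entire $\sqrt{g}$ deficit lives on the right-hand side, where one must control $\|\Ph_\Gamma\eta_{T^c}\|_1$ (with $\Gamma:=\Gamma(T)$) in terms of $\|\eta\|_{2,1}$ rather than $\|\eta\|_1$. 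The paper handles this via a shell decomposition of $T^c$: partition the remaining groups into blocks $\model_\sparsity^1,\model_\sparsity^2,\ldots$ of $\sparsity$ groups each, ordered by decreasing $\|\z_\mathcal{G}\|_2$. For each shell, the expansion property applied to the $2\sparsity$-block union $\model_\sparsity^0\cup\model_\sparsity^l$ bounds the number of edges from $\model_\sparsity^l$ into $\Gamma$ by $2d\sparsity\epsilon_{\model_\sparsity}g$, and each such edge contributes at most $\max_{\mathcal{G}\in\model_\sparsity^l}\|\z_\mathcal{G}\|_\infty\le\max_{\mathcal{G}\in\model_\sparsity^l}\|\z_\mathcal{G}\|_2\le\min_{\mathcal{G}\in\model_\sparsity^{l-1}}\|\z_\mathcal{G}\|_2\le\tfrac{1}{\sparsity}\sum_{\mathcal{G}\in\model_\sparsity^{l-1}}\|\z_\mathcal{G}\|_2$. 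Summing over $l$ yields $\|\Ph_\Gamma\eta_{T^c}\|_1\le 2d\epsilon_{\model_\sparsity}g\,\|\eta\|_{2,1}$, and dividing through by $(1-2\epsilon_{\model_\sparsity})d$ gives $\rho'$. The two ingredients your sketch is missing are (i) counting edges from $T^c$ into $\Gamma(T)$ via expansion on a $2\sparsity$-block set, and (ii) the shell ordering, which is what converts the per-edge $\ell_\infty$ contribution into an $\ell_{2,1}$ sum without the $\sqrt{g}$ penalty.
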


In other words, Theorem \ref{sec_results:thm1} shows that, given a proper sparse matrix $\Ph$ satisfying the model-RIP-1 \eqref{eqn:mrip2} for structure $\model$, the $\ell_{2,1}$-convex program in \eqref{eq:modelBP} provides a good block-sparse solution $\widehat{\x} \in \R^{\dim}$. Next, we also state a corollary to the theorem above for the more realistic noisy case.

%We also observe the following result as a corollary; the proof can be easily deduced from Theorem \ref{sec_results:thm1}.
\begin{corollary}{\label{sec_results:cor1}}
Assume the setting in Theorem \ref{sec_results:thm1} and let two vectors $\x^\star, \widehat{\x} \in \R^{\dim}$ such that $\|\Ph (\x^\star - \widehat{\x})\|_1 = \gamma \geq 0$, where $\x^\star$ is unknown and $\widehat{\x}$ is the solution to \eqref{eq:modelBP}. Then, $\|\widehat{\x}\|_{2,1} \leq \|\x^\star\|_{2,1}$ and:
\begin{align*}
\|\x^\star - \widehat{\x}\|_{2,1} &\leq \frac{2}{1 - \frac{4\epsilon_{\model_\sparsity}\cdot g}{1-2\epsilon_{\model_\sparsity}}} \cdot \|\x^\star - \x^\star_{\model_{\sparsity}^\star}\|_{2,1} \nonumber + \frac{\gamma}{1 - \frac{4\epsilon_{\model_\sparsity}\cdot g}{1-2\epsilon_{\model_\sparsity}}}.
\end{align*}
\end{corollary}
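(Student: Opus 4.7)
The plan is to mirror the proof of Theorem \ref{sec_results:thm1} step by step, carrying the additional residual $\gamma = \|\Ph(\x^\star - \widehat{\x})\|_1$ through every inequality; structurally the argument is identical except that each point where Theorem \ref{sec_results:thm1} uses $\Ph(\x^\star - \widehat{\x}) = \mathbf{0}$ is replaced by a triangle inequality that costs $\gamma$.

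Write $\mathbf{h} := \widehat{\x} - \x^\star$ and $T := \model_\sparsity^\star$. The first step is the optimality-based splitting of $\|\widehat{\x}\|_{2,1} = \|\x^\star_T + \mathbf{h}_T\|_{2,1} + \|\x^\star_{T^c} + \mathbf{h}_{T^c}\|_{2,1}$; combining the hypothesis $\|\widehat{\x}\|_{2,1} \leq \|\x^\star\|_{2,1}$ with the triangle inequality on each block yields
\[
\|\mathbf{h}_{T^c}\|_{2,1} \leq \|\mathbf{h}_T\|_{2,1} + 2\|\x^\star - \x^\star_{\model_\sparsity^\star}\|_{2,1}.
\]
This step is independent of $\Ph$ and hence identical to Theorem \ref{sec_results:thm1}. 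The second step is to re-derive a noisy version of the group null-space property. Since $\mathbf{h}_T$ is $\sparsity$-block-sparse, model-RIP-1 \eqref{eqn:mrip2} gives $(1 - 2\epsilon_{\model_\sparsity})d\|\mathbf{h}_T\|_1 \leq \|\Ph\mathbf{h}_T\|_1$. Where Theorem \ref{sec_results:thm1} would now use $\Ph\mathbf{h} = \mathbf{0}$ to replace $\|\Ph\mathbf{h}_T\|_1$ by $\|\Ph\mathbf{h}_{T^c}\|_1$, here I instead invoke the triangle inequality
\[
\|\Ph\mathbf{h}_T\|_1 \leq \|\Ph\mathbf{h}\|_1 + \|\Ph\mathbf{h}_{T^c}\|_1 = \gamma + \|\Ph\mathbf{h}_{T^c}\|_1,
\]
and then reuse the expansion bound on $\|\Ph\mathbf{h}_{T^c}\|_1$ and the $\ell_1$-to-$\ell_{2,1}$ conversion from Theorem \ref{sec_results:thm1} verbatim. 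The $\gamma$ term propagates additively, producing a noisy null-space inequality of the form
\[
\|\mathbf{h}_T\|_{2,1} \leq \frac{2\epsilon_{\model_\sparsity}\, g}{1 - 2\epsilon_{\model_\sparsity}} \, \|\mathbf{h}\|_{2,1} + \frac{\gamma}{2},
\]
which collapses to the Theorem \ref{sec_results:thm1} bound when $\gamma = 0$.

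Finally, combine the two inequalities. Writing $\|\mathbf{h}\|_{2,1} = \|\mathbf{h}_T\|_{2,1} + \|\mathbf{h}_{T^c}\|_{2,1} \leq 2\|\mathbf{h}_T\|_{2,1} + 2\|\x^\star - \x^\star_{\model_\sparsity^\star}\|_{2,1}$ and plugging in the noisy null-space inequality,
\[
\|\mathbf{h}\|_{2,1} \leq \tfrac{4\epsilon_{\model_\sparsity}\, g}{1 - 2\epsilon_{\model_\sparsity}}\|\mathbf{h}\|_{2,1} + 2\|\x^\star - \x^\star_{\model_\sparsity^\star}\|_{2,1} + \gamma,
\]
and rearranging isolates $\|\mathbf{h}\|_{2,1}$ with the common denominator $1 - \tfrac{4\epsilon_{\model_\sparsity}\, g}{1 - 2\epsilon_{\model_\sparsity}}$ in front of both the model-error term and the residual $\gamma$, matching the corollary exactly.

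The main obstacle is bookkeeping the constant in front of $\gamma$ inside the noisy null-space step. After the model-RIP-1 lower bound and the conversions between $\|\cdot\|_1$ and $\|\cdot\|_{2,1}$ (in which the block size $g$ naturally appears), the residual must enter with coefficient exactly $\tfrac{1}{2}$ so that the final $\gamma$ coefficient equals $(1 - \tfrac{4\epsilon_{\model_\sparsity}\, g}{1 - 2\epsilon_{\model_\sparsity}})^{-1}$ without spurious factors of $d$. If the raw propagation yields a differently scaled residual, for instance $\tfrac{\gamma}{(1 - 2\epsilon_{\model_\sparsity})d}$, the statement should be read with $\gamma$ interpreted as this normalized residual; the structural argument is otherwise unchanged.
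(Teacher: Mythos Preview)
The paper does not give a separate proof of the corollary; it is presented as an immediate consequence of the argument for Theorem~\ref{sec_results:thm1}, the only change being that the hypothesis $\Ph\z=\mathbf 0$ in Lemma~\ref{sec_proof:lem1} is relaxed to $\|\Ph\z\|_1=\gamma$. Your overall plan---carry $\gamma$ additively through the null-space step and then through inequality~\eqref{sec_proof:eq1}---is exactly the intended route, and the optimality/splitting step you wrote is identical to the paper's.

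Your execution of the noisy null-space step, however, mischaracterizes the structure of Lemma~\ref{sec_proof:lem1}. You propose to chain $(1-2\epsilon_{\model_\sparsity})d\,\|\mathbf h_T\|_1 \le \|\Ph\mathbf h_T\|_1 \le \gamma + \|\Ph\mathbf h_{T^c}\|_1$ and then ``reuse the expansion bound on $\|\Ph\mathbf h_{T^c}\|_1$.'' But the edge-counting argument in Lemma~\ref{sec_proof:lem1} does \emph{not} bound $\|\Ph\mathbf h_{T^c}\|_1$; it bounds $\|\Ph_{\Gamma}\mathbf h_{T^c}\|_1$, where $\Gamma=\Gamma(\model_\sparsity^\star)$ is the neighbor set of the top $k$ blocks. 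The whole point of that step is that at most $2dk g\,\epsilon_{\model_\sparsity}$ edges from $T^c$ land in $\Gamma$; without restricting to the rows in $\Gamma$ you are facing all $d$ edges per coordinate of $T^c$ and the bound becomes vacuous. The correct (and simpler) modification keeps the proof of Lemma~\ref{sec_proof:lem1} verbatim but replaces its opening line $0=\|\Ph\z\|_1=\|\Ph_\Gamma\z\|_1$ by $\gamma=\|\Ph\z\|_1\ge\|\Ph_\Gamma\z\|_1$; every subsequent inequality then carries an additive $\gamma$, and one arrives at
\[
\sum_{\mathcal G\in\model_\sparsity^\star}\|\z_{\mathcal G}\|_2 \;\le\; \frac{2\epsilon_{\model_\sparsity}\,g}{1-2\epsilon_{\model_\sparsity}}\sum_{\mathcal G\in\model}\|\z_{\mathcal G}\|_2 \;+\; \frac{\gamma}{(1-2\epsilon_{\model_\sparsity})\,d}\,.
\]
Substituting this into~\eqref{sec_proof:eq1} and rearranging yields the corollary, but with residual coefficient $\tfrac{2}{(1-2\epsilon_{\model_\sparsity})\,d}\bigl(1-\tfrac{4\epsilon_{\model_\sparsity} g}{1-2\epsilon_{\model_\sparsity}}\bigr)^{-1}$ rather than $\bigl(1-\tfrac{4\epsilon_{\model_\sparsity} g}{1-2\epsilon_{\model_\sparsity}}\bigr)^{-1}$. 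So your closing suspicion is justified: the displayed constant in front of $\gamma$ in the corollary is nominal and suppresses a factor of order $1/d$; your guess that the noisy Lemma should produce exactly $\gamma/2$ does not hold.
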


\section{Proof of Theorem \ref{sec_results:thm1}}{\label{sec:proof}}

Let us consider two solutions to the set of linear equations $\Ph \x^\star = \obs = \Ph \widehat{\x}$, where $\x^\star, \widehat{\x} \in \R^\dim$ and let $\Ph \in \left\{ 0, 1\right\}^{\numsam \times \dim}$ be a given expander satisfying \eqref{eqn:mrip2}. We define $\z = \widehat{\x} - \x^\star \in \text{ker}(\Ph)$ since $\Ph \widehat{\x} = \Ph \x^\star \Longrightarrow \Ph\left(\widehat{\x} - \x^\star\right) = \mathbf{0}$. 

Using the group model formulation, each vector $\widehat{\x}, \x^\star$ can be decomposed as:
%\begin{align}
$\widehat{\x} = \sum_{\mathcal{G} \in \model} \widehat{\x}_{\mathcal{G}} ~~\text{ and }~~ \x^\star = \sum_{\mathcal{G} \in \model} \x^{\star}_{\mathcal{G}},$
%\end{align} 
where $\text{supp}(\widehat{\x}_{\mathcal{G}}) = \mathcal{G},~\text{supp}(\x^{\star}_{\mathcal{G}}) = \mathcal{G},$ $\forall \mathcal{G} \in \model$; moreover, assume that $\widehat{\x} \neq \x^\star$. Thus, given the above decompositions, we also have:
\begin{align}
\z = \widehat{\x} - \x^\star = \sum_{\mathcal{G} \in \model} \left(\widehat{\x}_{\mathcal{G}} - \x^{\star}_{\mathcal{G}} \right) =: \sum_{\mathcal{G} \in \model} \z_{\mathcal{G}},
\end{align} where $\text{supp}(\z_{\mathcal{G}}) = \mathcal{G}$. By assumption, we know that $\|\x^\star\|_{2,1} \geq \|\widehat{\x}\|_{2,1}$. Using the definition of the $\|\cdot\|_{2,1}$, we have:
\begin{align}
\sum_{\mathcal{G} \in \model} \|\x^{\star}_{\mathcal{G}}\|_2 \geq \sum_{\mathcal{G} \in \model} \|\widehat{\x}_{\mathcal{G}}\|_2 = \sum_{\mathcal{G} \in \model} \|\z_{\mathcal{G}} + \x^{\star}_{\mathcal{G}}\|_2
\end{align} Denote the set of $k$ groups in the best $\sparsity$-block sparse approximation of $\x^\star$ as $\model_{\sparsity}^\star$. Then, 
\begin{align}
\sum_{\mathcal{G} \in \model} \|\x^{\star}_{\mathcal{G}}\|_2 \nonumber &\geq \sum_{\mathcal{G} \in \model_{\sparsity}^\star} \|\z_{\mathcal{G}} + \x^{\star}_{\mathcal{G}}\|_2 + \sum_{\mathcal{G} \in \model \setminus \model_{\sparsity}^\star} \|\z_{\mathcal{G}} + \x^{\star}_{\mathcal{G}}\|_2 \nonumber \\
&\geq \sum_{\mathcal{G} \in \model_{\sparsity}^\star} \|\x^{\star}_{\mathcal{G}}\|_2 - \sum_{\mathcal{G} \in \model_{\sparsity}^\star} \|\z_{\mathcal{G}}\|_2 \nonumber \\ 
&+ \sum_{\mathcal{G} \in \model \setminus \model_{\sparsity}^\star} \|\z_{\mathcal{G}}\|_2 - \sum_{\mathcal{G} \in \model \setminus \model_{\sparsity}^\star} \|\x^{\star}_{\mathcal{G}}\|_2 \nonumber \\
&= \sum_{\mathcal{G} \in \model} \|\x^{\star}_{\mathcal{G}}\|_2 - 2\sum_{\mathcal{G} \in \model \setminus \model_{\sparsity}^\star} \|\x^{\star}_{\mathcal{G}}\|_2 \nonumber \\ 
&+\sum_{\mathcal{G} \in \model} \|\z_{\mathcal{G}}\|_2 - 2\sum_{\mathcal{G} \in \model_{\sparsity}^\star} \|\z_{\mathcal{G}}\|_2 \label{sec_proof:eq1}
\end{align} where the first equality is due to the block-sparse model and the first inequality is due to the triangle inequality. To proceed, we require the following Lemma; the proof is based on \cite{berinde2008combining}.

\begin{lemma}{\label{sec_proof:lem1}}
Let $\z \in \R^\dim$ such that $\Ph \z = \mathbf{0}$ and $\z = \sum_{\mathcal{G} \in \model} \z_{\mathcal{G}}$, where $\text{supp}(\z_\mathcal{G}) \subseteq \mathcal{G}, \forall \mathcal{G} \in \model$. Then,
\begin{align}
\sum_{\mathcal{G} \in \model_{\sparsity}^\star} \|\z_{\mathcal{G}}\|_2 \leq \frac{2\epsilon_{\model_\sparsity}\cdot g}{1-2\epsilon_{\model_\sparsity}} \cdot \sum_{\mathcal{G} \in \model} \|\z_{\mathcal{G}}\|_2, \label{sec_proof:eq2}
\end{align} where $\epsilon_{\model_\sparsity} \in (0, 1/2)$ denotes the model-based expansion parameter of expander matrix $\Ph \in \{0, 1 \}^{\numsam \times \dim}$. 
\end{lemma}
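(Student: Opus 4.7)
My plan is to combine a Berinde-style $\ell_1$ null space property at the set level with a per-group conversion between the $\ell_1$ and $\ell_{2,1}$ norms. Put $T := \bigcup_{\G \in \model_\sparsity^\star}\G$, so that $|T| = \sparsity g$ and $\z_T$ is supported on a union of $\sparsity$ groups; by hypothesis $\Ph\z = \mathbf 0$, i.e., $\Ph\z_T = -\Ph\z_{T^c}$.

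The heart of the proof -- and the main obstacle -- is to establish the Berinde-type $\ell_1$-NSP on $T$:
\begin{equation*}
\|\z_T\|_1 \;\leq\; \frac{2\epsilon_{\model_\sparsity}}{1 - 2\epsilon_{\model_\sparsity}}\,\|\z\|_1.
\end{equation*}
Naively applying model-RIP-1 \eqref{eqn:mrip2} to both sides of $\Ph\z_T = -\Ph\z_{T^c}$ only yields $\|\z_T\|_1 \leq \frac{1}{1-2\epsilon_{\model_\sparsity}}\|\z_{T^c}\|_1$, which is too weak to feed into Theorem \ref{sec_results:thm1}. To extract the tighter factor $2\epsilon_{\model_\sparsity}$, one must exploit the expansion property directly, following \cite{berinde2008combining}: the $d|T|$ edges incident to $T$ land in at least $(1-\epsilon_{\model_\sparsity})d|T|$ distinct right-nodes, so only a small fraction of them can collide either among themselves or with edges from $T^c$. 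Writing $|(\Ph\z)_j|$ at each right node $j$ via the triangle inequality and splitting its neighbors into those in $T$ and those in $T^c$ produces the standard expander bookkeeping estimate
\begin{equation*}
\|\Ph\z_{T^c}\|_1 \;\geq\; d\,\|\z_{T^c}\|_1 - 2d\,\epsilon_{\model_\sparsity}\,\|\z_T\|_1;
\end{equation*}
combining this with $\|\Ph\z_{T^c}\|_1 = \|\Ph\z_T\|_1 \leq d\,\|\z_T\|_1$ (the trivial $d$-column bound) and rearranging delivers the displayed NSP.

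It remains to transfer the bound from $\ell_1$ to $\ell_{2,1}$, which is routine. For each group $\G$ of size $g$ we have $\|\z_\G\|_2 \leq \|\z_\G\|_1$ and $\|\z_\G\|_1 \leq g\,\|\z_\G\|_\infty \leq g\,\|\z_\G\|_2$. Summing the first inequality over $\G \in \model_\sparsity^\star$ gives $\sum_{\G \in \model_\sparsity^\star} \|\z_\G\|_2 \leq \|\z_T\|_1$, while summing the second over $\G \in \model$ gives $\|\z\|_1 \leq g\,\sum_{\G \in \model}\|\z_\G\|_2$. Chaining these two bounds with the $\ell_1$-NSP above yields
\begin{equation*}
\sum_{\G\in\model_\sparsity^\star}\|\z_\G\|_2 \;\leq\; \|\z_T\|_1 \;\leq\; \frac{2\epsilon_{\model_\sparsity}}{1-2\epsilon_{\model_\sparsity}}\,\|\z\|_1 \;\leq\; \frac{2\epsilon_{\model_\sparsity}\, g}{1 - 2\epsilon_{\model_\sparsity}} \sum_{\G\in\model}\|\z_\G\|_2,
\end{equation*}
which is exactly \eqref{sec_proof:eq2}.
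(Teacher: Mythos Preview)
Your overall strategy is sound: the Berinde $\ell_1$ null-space property $\|\z_T\|_1 \le \tfrac{2\epsilon_{\model_\sparsity}}{1-2\epsilon_{\model_\sparsity}}\|\z\|_1$, together with the per-group bounds $\|\z_\G\|_2 \le \|\z_\G\|_1$ and $\|\z_\G\|_1 \le g\|\z_\G\|_2$, does deliver \eqref{sec_proof:eq2}. This is a somewhat more modular route than the paper's, which interleaves the expander bookkeeping with the $\ell_1\!\to\!\ell_2$ conversions rather than first isolating a clean $\ell_1$-NSP and converting at the end.

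However, your sketch of the NSP itself is broken. The displayed estimate
\[
\|\Ph\z_{T^c}\|_1 \;\ge\; d\,\|\z_{T^c}\|_1 - 2d\,\epsilon_{\model_\sparsity}\,\|\z_T\|_1
\]
cannot hold: $\z_{T^c}$ is a dense vector, and expansion of $T$ says nothing that would prevent arbitrary cancellation in $\Ph\z_{T^c}$. Worse, combining it with $\|\Ph\z_{T^c}\|_1 = \|\Ph\z_T\|_1 \le d\|\z_T\|_1$ as you propose yields $(1+2\epsilon_{\model_\sparsity})\|\z_T\|_1 \ge \|\z_{T^c}\|_1$---the \emph{opposite} direction to what you need. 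The correct argument (which is exactly what the paper carries out, in group form) restricts to the rows $\Gamma=\Gamma(T)$, writes
\[
0 \;\ge\; \|\Ph_\Gamma\z\|_1 \;\ge\; \|\Ph\z_T\|_1 - \|\Ph_\Gamma\z_{T^c}\|_1 \;\ge\; (1-2\epsilon_{\model_\sparsity})\,d\,\|\z_T\|_1 - \|\Ph_\Gamma\z_{T^c}\|_1,
\]
and controls $\|\Ph_\Gamma\z_{T^c}\|_1$ by partitioning $T^c$ into ordered blocks of size $|T|$, bounding the number of edges from each block into $\Gamma$ by $2d\epsilon_{\model_\sparsity}|T|$ via expansion of the $2|T|$-element union, and using the ordering to replace the maximum entry in block $l$ by $|T|^{-1}$ times the mass of block $l-1$. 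This block decomposition is not a detail you can skip; it is precisely where the factor $2\epsilon_{\model_\sparsity}$ (rather than $1$) is extracted.

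One further contrast worth noting: the paper orders its blocks at the \emph{group} level (by $\|\z_\G\|_2$), so the edge-counting step only invokes expansion on unions of $2\sparsity$ groups. Your $\ell_1$-NSP route implicitly orders individual coordinates, so the corresponding Berinde argument needs the plain expander property on arbitrary sets of size $2\sparsity g$---a slightly stronger hypothesis than the model-based one.
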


\begin{proof}
Following \cite{berinde2008combining}, we assume the following decomposition of $\dim$ indices, according to $\model$: Due to the non-overlapping nature of $\model$, we split coordinates in $\dim$ into $\sparsity$-block sparse index sets $\model_{\sparsity}^0 \equiv \model_{\sparsity}^\star,~ \model_{\sparsity}^1, ~\model_{\sparsity}^2, \dots, ~\model_{\sparsity}^t $ such that $(i)$ each $\model_{\sparsity}^l, \forall l$, has $\sparsity$ groups (except probably $\model_{\sparsity}^t$), $(ii)$ each group $\mathcal{G} \in \model_{\sparsity}^l$ has $g$ indices and $(iii)$ there is ordering on groups such that:
\begin{align*}
\|\z_{\mathcal{G}}\|_2 \geq \|\z_{\mathcal{G}'}\|_2, \quad \forall \mathcal{G} \in \model_{\sparsity}^l,~\forall \mathcal{G}' \in \model_{\sparsity}^q \text{~~s.t.~~} l \leq q.
\end{align*}

Since $\z \in \text{ker}(\Ph)$, we have $0 = \|\Ph \z\|_1$. Moreover, we denote as $\Gm(\model_{\sparsity}^\star)$ the set of indices of the rows of $\Ph$ that correspond to the neighbours of left-nodes in $\model_{\sparsity}^\star$; see Figure \ref{sec_proof:fig1}. 
Thus, without loss of generality, we reorder the rows of $\bPhi$ such that the top $|\cup_{\mathcal{G} \in \model_{\sparsity}^\star} \mathcal{G}|$ rows are indexed by $\Gm\left(\model_{\sparsity}^\star\right)$---the {\em union} of the set of neighbors of all $\mathcal{G} \in \model_{\sparsity}^\star$. Given the above partition, we denote the submatrix of $\bPhi$ composed of these rows as $\bPhi_{\Gm}$, such that 
\begin{align*}
\bPhi = \begin{pmatrix}
\rule[.5ex]{3.5em}{0.4pt} & \bPhi_{\Gm} & \rule[.5ex]{3.5em}{0.4pt}\\
& & \\
\rule[.5ex]{3.5em}{0.4pt} & \bPhi_{\Gm^c} & \rule[.5ex]{3.5em}{0.4pt}\\
\end{pmatrix}.
\end{align*} 
\begin{figure}[!tb]
	\begin{center}
		\includegraphics[width=0.3\columnwidth]{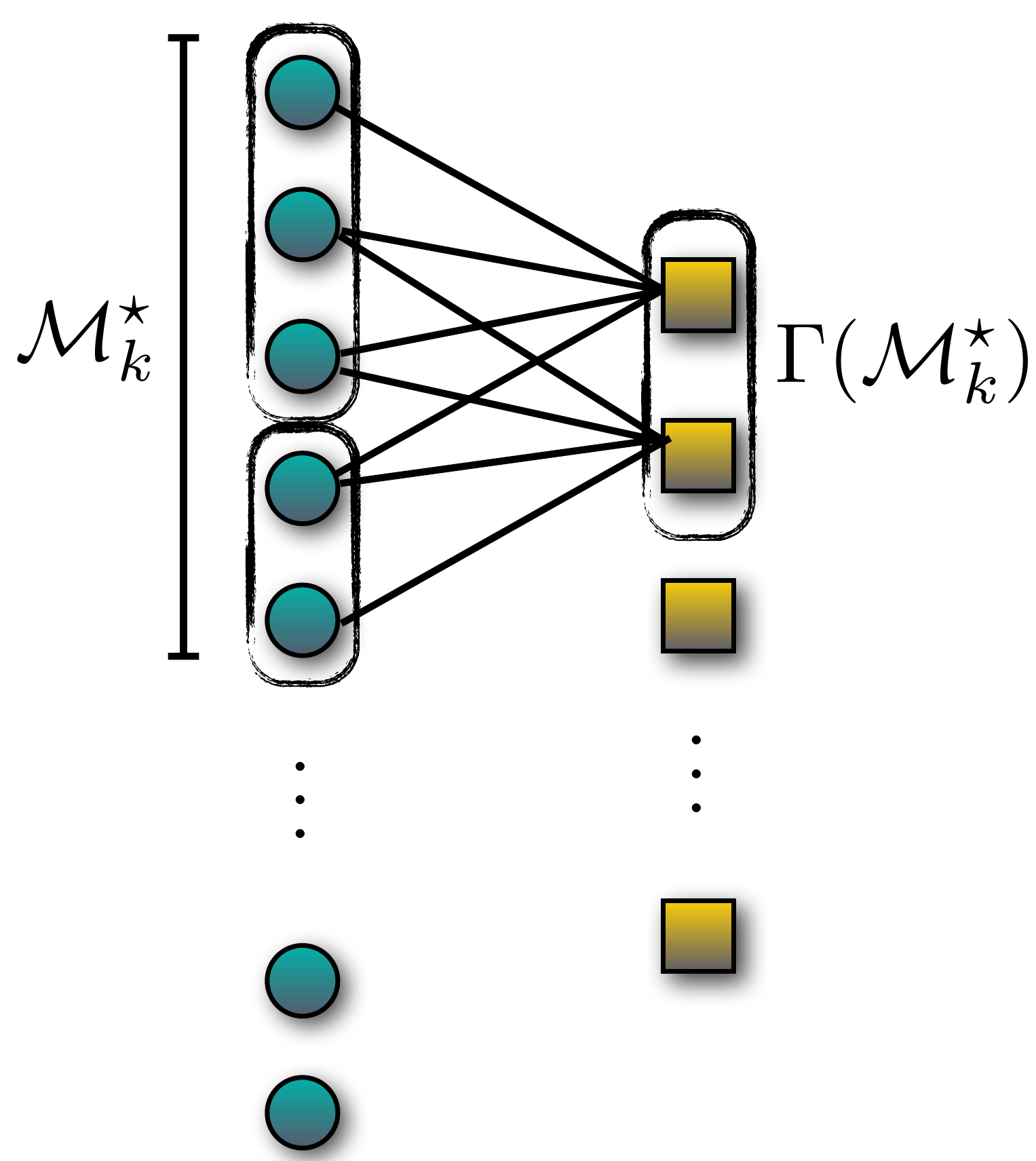}
	\end{center}
	\caption{Representation of neighbours of left-nodes, indexed by the $k$-group sparse set $\model_{\sparsity}^\star$. }{\label{sec_proof:fig1}}
\end{figure}
Thus, we have:
\begin{align*}
0 &= \|\Ph \z\|_1 = \|\Ph_{\Gm} \z\|_1 = \|\Ph_{\Gm} \cdot \sum_{\mathcal{G} \in \model} \z_{\mathcal{G}} \|_1 \nonumber \\
  &= \|\Ph_{\Gm} \cdot \left(\sum_{\mathcal{G} \in \model_{\sparsity}^\star}\z_{\mathcal{G}} + \sum_{\mathcal{G} \in \model \setminus \model_{\sparsity}^\star} \z_{\mathcal{G}}\right) \|_1 \nonumber \\
  &= \|\Ph_{\Gm} \cdot \left(\sum_{\mathcal{G} \in \model_{\sparsity}^0}\z_{\mathcal{G}} + \sum_{\mathcal{G} \in \model \setminus \model_{\sparsity}^0} \z_{\mathcal{G}}\right) \|_1 \nonumber \\
  &\geq \|\Ph_{\Gm} \cdot \sum_{\mathcal{G} \in \model_{\sparsity}^0}\z_{\mathcal{G}}\|_1 - \|\Ph_{\Gm}\sum_{\mathcal{G} \in \model \setminus \model_{\sparsity}^0} \z_{\mathcal{G}}\|_1 \nonumber  
\end{align*} By using the model-RIP-1 property \eqref{eqn:mrip2} of $\Ph$ on the input vector $\sum_{\mathcal{G} \in \model_{\sparsity}^0}\z_{\mathcal{G}}$ we have:
\begin{align*}
\|\Ph_{\Gm} \cdot \sum_{\mathcal{G} \in \model_{\sparsity}^0}\z_{\mathcal{G}}\|_1 &= \|\Ph \cdot \sum_{\mathcal{G} \in \model_{\sparsity}^0}\z_{\mathcal{G}}\|_1 \nonumber \\ &\geq (1-2\epsilon_{\model_\sparsity})\cdot d \cdot \|\sum_{\mathcal{G} \in \model_{\sparsity}^0}\z_{\mathcal{G}}\|_1  \\
&= (1-2\epsilon_{\model_\sparsity})\cdot d \cdot \sum_{\mathcal{G} \in \model_{\sparsity}^0}\|\z_{\mathcal{G}}\|_1 \\
&\geq (1-2\epsilon_{\model_\sparsity})\cdot d \cdot \sum_{\mathcal{G} \in \model_{\sparsity}^0}\|\z_{\mathcal{G}}\|_2
\end{align*} where the first equality is due to $\Ph_{\Gm^c} \cdot \sum_{\mathcal{G} \in \model_{\sparsity}^0} \z_{\mathcal{G}} = \mathbf{0}$, the second equality due to the non-overlapping groups and the last inequality since $\|\x\|_1 \geq \|\x\|_2,~\forall \x \in \R^\dim$. 

Therefore, we have:
\begin{small}
\begin{align*}
0 &\geq (1-2\epsilon_{\model_\sparsity})\cdot d \cdot \sum_{\mathcal{G} \in \model_{\sparsity}^0}\|\z_{\mathcal{G}}\|_2 - \|\Ph_{\Gm}\sum_{\mathcal{G} \in \model \setminus \model_{\sparsity}^0} \z_{\mathcal{G}}\|_1 \nonumber \\
  &\geq (1-2\epsilon_{\model_\sparsity})\cdot d \cdot \sum_{\mathcal{G} \in \model_{\sparsity}^0}\|\z_{\mathcal{G}}\|_2 - \sum_{\mathcal{G} \in \model \setminus \model_{\sparsity}^0} \|\Ph_{\Gm}\z_{\mathcal{G}}\|_1 \nonumber \\
  &= (1-2\epsilon_{\model_\sparsity})\cdot d \cdot \sum_{\mathcal{G} \in \model_{\sparsity}^0}\|\z_{\mathcal{G}}\|_2 - \sum_{l \geq 1} \sum_{\mathcal{G} \in \model_{\sparsity}^l} \sum_{\substack{(i, j) \in \mathcal{E} \\ i \in \mathcal{G} \\ j \in \Gamma}} |(\z_{\mathcal{G}})_i| \\
  &\geq (1-2\epsilon_{\model_\sparsity})\cdot d \cdot \sum_{\mathcal{G} \in \model_{\sparsity}^0}\|\z_{\mathcal{G}}\|_2 - \sum_{l \geq 1} |\text{\# edges}~(i,j)~:~ i\in \mathcal{G}, j \in \Gm, \mathcal{G} \in \model_{\sparsity}^l| 
  \cdot\max_{\mathcal{G} \in \model_{\sparsity}^l} \max_{i \in \mathcal{G}} |(\z_{\mathcal{G}})_i|
\end{align*}
\end{small} 
The quantity $|\text{\# edges}~(i,j)~:~ i\in \mathcal{G}, j \in \Gm, \mathcal{G} \in \model_{\sparsity}^l|$ further satisfies:
\begin{align*}
|\text{\# edges}~(i,j)~:~ i\in \mathcal{G}, j \in \Gm, \mathcal{G} \in \model_{\sparsity}^l| \nonumber
&= |\Gm(\model_{\sparsity}^0) \cap \Gm(\model_{\sparsity}^l)| \nonumber \\
&\stackrel{(i)}{=} |\Gm(\model_{\sparsity}^0)| + |\Gm(\model_{\sparsity}^l)| - |\Gm(\model_{\sparsity}^0) \cup \Gm(\model_{\sparsity}^l)| \\
&\leq d\cdot|\model_{\sparsity}^0| + d\cdot|\model_{\sparsity}^l| - |\Gm(\model_{\sparsity}^0) \cup \Gm(\model_{\sparsity}^l)| \\
&\leq 2d\cdot\sparsity\cdot g - |\Gm(\model_{\sparsity}^0) \cup \Gm(\model_{\sparsity}^l)| \\
&\stackrel{(ii)}{\leq} 2d\cdot\sparsity\cdot g - d(1 - \epsilon_{\model_\sparsity})|\model_{\sparsity}^0 \cup \model_{\sparsity}^l| \\
&\stackrel{(iii)}{\leq} 2d\cdot\sparsity\cdot \epsilon_{\model_\sparsity} \cdot g
\end{align*} where $(i)$ is due to the inclusion-exclusion principle, $(ii)$ is due to the expansion property and, $(iii)$ is due to $|\mathcal{G}| = g,~\forall \mathcal{G} \in \model$.

%However, due to the expansion property, for any $l$, we have: $|\Gm(\model_{\sparsity}^0 \cup \model_{\sparsity}^l)| \geq d(1 - \epsilon)|\model_{\sparsity}^0 \cup \model_{\sparsity}^l|$ where $|\model_{\sparsity}^0 \cup \model_{\sparsity}^l| \leq 2\sparsity g$. Thus, it follows that at most $2 d \cdot \epsilon \cdot \sparsity \cdot g$ edges can cross from $\model_{\sparsity}^l$ to $\Gm(\model_{\sparsity}^0)$. 
Thus, the above inequality becomes:
\begin{small}
\begin{align*}
0  &\geq (1-2\epsilon_{\model_\sparsity})\cdot d \cdot \sum_{\mathcal{G} \in \model_{\sparsity}^0}\|\z_{\mathcal{G}}\|_2 - 2d\cdot\sparsity\cdot \epsilon_{\model_\sparsity}\cdot g \sum_{l \geq 1} \max_{\mathcal{G} \in \model_{\sparsity}^l} \max_{i \in \mathcal{G}} |(\z_{\mathcal{G}})_i| 
\end{align*}
\end{small} Let $\mathcal{G}_{\max}^l$ denote the group that contains the maximizing index $i_{\max}$ such that: %$(\z_{\mathcal{G}})_{\max}^l$ be an element such that:
\begin{align*}
i_{\max} \in \argmax_{i \in \mathcal{G} \text{~s.t.~} \mathcal{G} \in \model_{\sparsity}^l} |(\z_{\mathcal{G}})_i|.
\end{align*} Then:
\begin{small}
\begin{align*}
0 &\geq (1-2\epsilon_{\model_\sparsity})\cdot d \cdot \sum_{\mathcal{G} \in \model_{\sparsity}^0}\|\z_{\mathcal{G}}\|_2 - 2d\cdot\sparsity\cdot \epsilon_{\model_\sparsity}\cdot g \sum_{l \geq 1} \|\z_{\mathcal{G}_{\max}^l}\|_{\infty}
\end{align*}
\end{small} However, due to the ordering of $\ell_2$-norm of groups:
\begin{small}
\begin{align*}
\|\z_{\mathcal{G}_{\max}^l}\|_{\infty} \leq \|\z_{\mathcal{G}_{\max}^l}\|_{2} \leq \min_{\mathcal{G} \in \model_{\sparsity}^{l-1}} \|\z_{\mathcal{G}}\|_{2} \leq \frac{1}{\sparsity} \sum_{\mathcal{G} \in \model_{\sparsity}^{l-1}} \|\z_{\mathcal{G}}\|_{2},
\end{align*}
\end{small} we further have:
\begin{small}
\begin{align*}
0 &\geq (1-2\epsilon_{\model_\sparsity})\cdot d \cdot \sum_{\mathcal{G} \in \model_{\sparsity}^0}\|\z_{\mathcal{G}}\|_2 - 2d\cdot\sparsity\cdot \epsilon_{\model_\sparsity}\cdot g \sum_{l \geq 1} \frac{1}{\sparsity} \sum_{\mathcal{G} \in \model_{\sparsity}^{l-1}} \|\z_{\mathcal{G}}\|_{2} \\
  &\geq (1-2\epsilon_{\model_\sparsity})\cdot d \cdot \sum_{\mathcal{G} \in \model_{\sparsity}^0}\|\z_{\mathcal{G}}\|_2 - 2d\cdot \epsilon_{\model_\sparsity} \cdot g \sum_{\mathcal{G} \in \model} \|\z_{\mathcal{G}}\|_{2}
\end{align*}
\end{small} which leads to:
\begin{align*}
\sum_{\mathcal{G} \in \model_{\sparsity}^0} \|\z_{\mathcal{G}}\|_2 \leq \frac{2 \epsilon_{\model_\sparsity} \cdot g}{1-2\epsilon_{\model_\sparsity}} \cdot \sum_{\mathcal{G} \in \model} \|\z_{\mathcal{G}}\|_2 \end{align*}
\end{proof}
Using \eqref{sec_proof:eq2} in \eqref{sec_proof:eq1}, we further have:
\begin{small}
\begin{align*}
0 &\geq - 2\sum_{\mathcal{G} \in \model \setminus \model_{\sparsity}^0} \|\x^{\star}_{\mathcal{G}}\|_2 + \sum_{\mathcal{G} \in \model} \|\z_{\mathcal{G}}\|_2 \nonumber - \frac{4 \epsilon_{\model_\sparsity} \cdot g}{1-2\epsilon_{\model_\sparsity}} \cdot \sum_{\mathcal{G} \in \model} \|\z_{\mathcal{G}}\|_2 \Longrightarrow \\
\sum_{\mathcal{G} \in \model} \|\z_{\mathcal{G}}\|_2 &\leq \frac{2}{1 - \frac{4 \epsilon_{\model_\sparsity} \cdot g}{1-2\epsilon_{\model_\sparsity}}} \sum_{\mathcal{G} \in \model \setminus \model_{\sparsity}^0} \|\x^{\star}_{\mathcal{G}}\|_2,
\end{align*}
\end{small} which is the desired result. 
%leads to:
%\begin{small}
%\begin{align*}
%%\|\z\|_{2,1} = 
%\|\x^\star - \widehat{\x}\|_{2,1} \leq \frac{2}{1 - 4\cdot \frac{\delta_{\model_\sparsity} \cdot g}{C(1-\delta_{\model_\sparsity})}} \cdot \|\x^\star - \x^\star_{\model_{\sparsity}^\star}\|_{2,1},
%\end{align*}
%\end{small} for $\model_{\sparsity}^0 \equiv \model_{\sparsity}^\star$.

\section{Comparison to state-of-the-art}

To justify our theoretical results, here we compare with the only known results on \emph{convex sparse recovery using expander matrices} of \cite{berinde2008combining}. We note that, in \cite{berinde2008combining}, no apriori knowledge is assumed, beyond plain sparsity. 

We start our discussion with the following remark. %We start with the following remark; for our discussions we will use the simple case where $g := g_{\max} = g_{\min}$:

\begin{remark}
In the extreme case of $g = 1$, Lemma \ref{sec_proof:lem1} is analogous to Lemma 16 of \cite{berinde2008combining}. To see this, observe that $\|\z_{\mathcal{G}}\|_2 = |(\z)_{i}|$ for $g = 1$, where $|\mathcal{G}| = 1$ and $\mathcal{G} \equiv i \in \model_{\sparsity}^0$ or $\in \model$. %Moreover, for $\delta_{\model_{\sparsity}} = C\epsilon, ~C = 2$, Lemma \ref{sec_proof:lem1} is identical to Lemma 16 of \cite{berinde2008combining}.
\end{remark}

We highlight that, as $g$ grows, feasible values of $\epsilon_{\model_\sparsity} \rightarrow 0$, \textit{i.e.}, 
%For the case where $g > 1$, we will make the convention that $\delta_{\model_\sparsity} = 2\cdot \epsilon$ for the sake of clarity, where $\epsilon \in (0, 1/2)$. We highlight that, as $\epsilon \rightarrow 0$, 
we require more rows to construct an expander matrix $\Ph$ with such expansion property. 
For simplicity, in the discussion below we use $\epsilon \equiv \epsilon_{\model_\sparsity}$ interchangeably, where the type of $\epsilon$ used is apparent from context.

%Given the above, our error guarantee becomes:
%\begin{align}
%\|\x^\star - \widehat{\x}\|_{2,1} \leq \frac{2}{1 - 4\cdot \frac{\epsilon \cdot g}{1-2\epsilon}} \cdot \|\x^\star - \x^\star_{\model_{\sparsity}^\star}\|_{2,1}. \label{eq:l21_error1}
%\end{align} 
In the case where we are \emph{oblivious to any, a-priori known, structured sparsity model}, \cite{berinde2008combining} prove the following error guarantees for the vanilla BP formulation \eqref{eq:BP}, using expander matrices:
\begin{align}
\|\x^\star - \widehat{\x}\|_{1} \leq \frac{2}{1 - 4\cdot \frac{\epsilon}{1-2\epsilon}} \cdot \|\x^\star - \x^\star_{\mathcal{S}}\|_{1}, \label{eq:l1_error}
\end{align} where $\mathcal{S} \subseteq [\dim]$ such that $|\mathcal{S}| = |\cup_{\mathcal{G} \in \model_{\sparsity}^\star} \cup_{i \in \mathcal{G}} i|$; \textit{i.e.}, we are looking for a solution of the same sparsity as the union of groups in the block-sparse case. In order to compare \eqref{eq:l1_error} with our result, a naive transformation of \eqref{eq:l1_error} into $\ell_{2,1}$ terms leads to:
\begin{align}
\|\x^\star - \widehat{\x}\|_{2,1} \leq \frac{2\sqrt{g}}{1 - 4\cdot \frac{\epsilon}{1-2\epsilon}} \cdot \|\x^\star - \x^\star_{\model_{\sparsity}^\star}\|_{2,1}. \label{eq:l21_error2}
\end{align} 

%\subsection{Feasibility region for $\epsilon$}

To define the conditions under which \eqref{eq:l21_error2} is valid, we require:
\begin{align*}
1 - 4\cdot \frac{\epsilon}{1 - 2\epsilon} > 0 \Rightarrow \epsilon < 1/6,
\end{align*} \textit{i.e.}, $\epsilon$ is independent of $g$, while in our case, we have:
\begin{align*}
1 - 4\cdot \frac{\epsilon \cdot g}{1 - 2\epsilon} > 0 \Rightarrow \epsilon < \frac{1}{2(1 + 2g)} ~~\left(\stackrel{g = 1}{=} 1/6\right);
\end{align*} \textit{i.e.}, our analysis provides weaker bounds with respect to the range of $\epsilon$ values such that the error guarantees is meaningful; see also Figure \ref{sec_comp:fig1}.

\begin{figure}[!tb]
	\begin{center}
		\includegraphics[width=0.4\columnwidth]{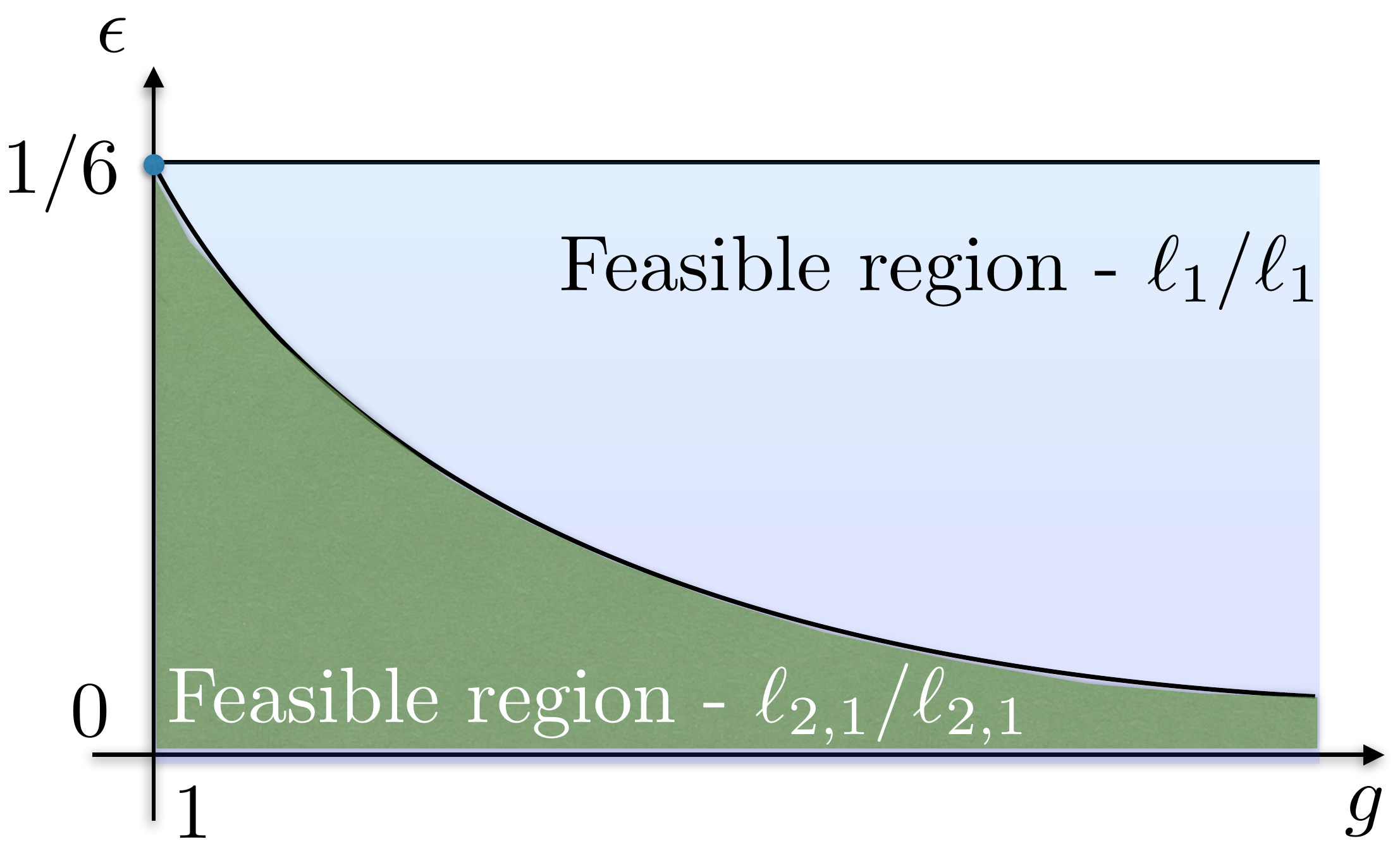}
	\end{center}
	\caption{Feasibility regions for our approach and \cite{berinde2008combining} as a function of $\epsilon$ and $g$ variables.}{\label{sec_comp:fig1}}
\end{figure}

However, as already mentioned, \eqref{eq:l21_error2} is \emph{oblivious} to any model $\model$: \emph{the solution $\widehat{\x}$ does not necessarily belong to $\model$}. This fact provides degrees of freedom to obtain better approximation constants. Nevertheless, this does not guarantee $\widehat{\x} \in \model$, considering only simple sparsity. Section \ref{sec:experiments} includes examples that highlight the superiority of $\ell_{2,1}$-norm in optimization.

%\subsection{Performance comparison w.r.t. approximation constants}
%
%For $\epsilon \in (0, 1/6)$, we study the performance of \eqref{eq:l21_error1} and \eqref{eq:l21_error2} by varying $g > 1$ for fixed values of $\epsilon$. Figure \ref{sec_comp:fig2} shows three different cases: $\epsilon = 1/8, 1/12, 1/24$. As shown, 
%\clearpage
%\input{algorithms}
%
\section{Experiments}{\label{sec:experiments}}

We start our discussion $(i)$ with a comparison between $\ell_1$- and $\ell_{2,1}$-norm convex approaches, when the ground truth is known to be block sparse, and $(ii)$ with a comparison between dense sub-Gaussian and sparse sensing matrices in \eqref{eq:modelBP}, both w.r.t. sampling complexity and computational complexity requirements. We conclude with the task of recovering 2D images from compressed measurements, using block sparsity. 

\paragraph{Solver.} To solve both $\ell_1$- and $\ell_{2,1}$-norm instances \eqref{eq:BP} and \eqref{eq:modelBP}, we use the primal-dual convex optimization framework in \cite{tran2014constrained}, that solves \eqref{eq:BP} and \eqref{eq:modelBP} -- among other objectives, by using ideas from the alternating direction methods of multipliers.
Using the same framework for solution, we obtain more accurate and credible comparison results between different problem settings.
%Such methods show fast convergence by exploiting both the primal and the dual formulations at the same time (via Lagrangian dualization). 
%The authors in  developed a unified primal-dual decomposition framework, named \textsc{DecOpt}, %\textsc{DecOpt} shows optimal convergence rates on the objective residual and the primal feasibility gap of its iterates, separately. 
%Further details on the algorithm are out of the scope of this paper and we refer the reader to \cite{tran2014constrained} for more information.

\begin{figure}[!ht]
	\begin{center}
		\includegraphics[width=0.5\columnwidth]{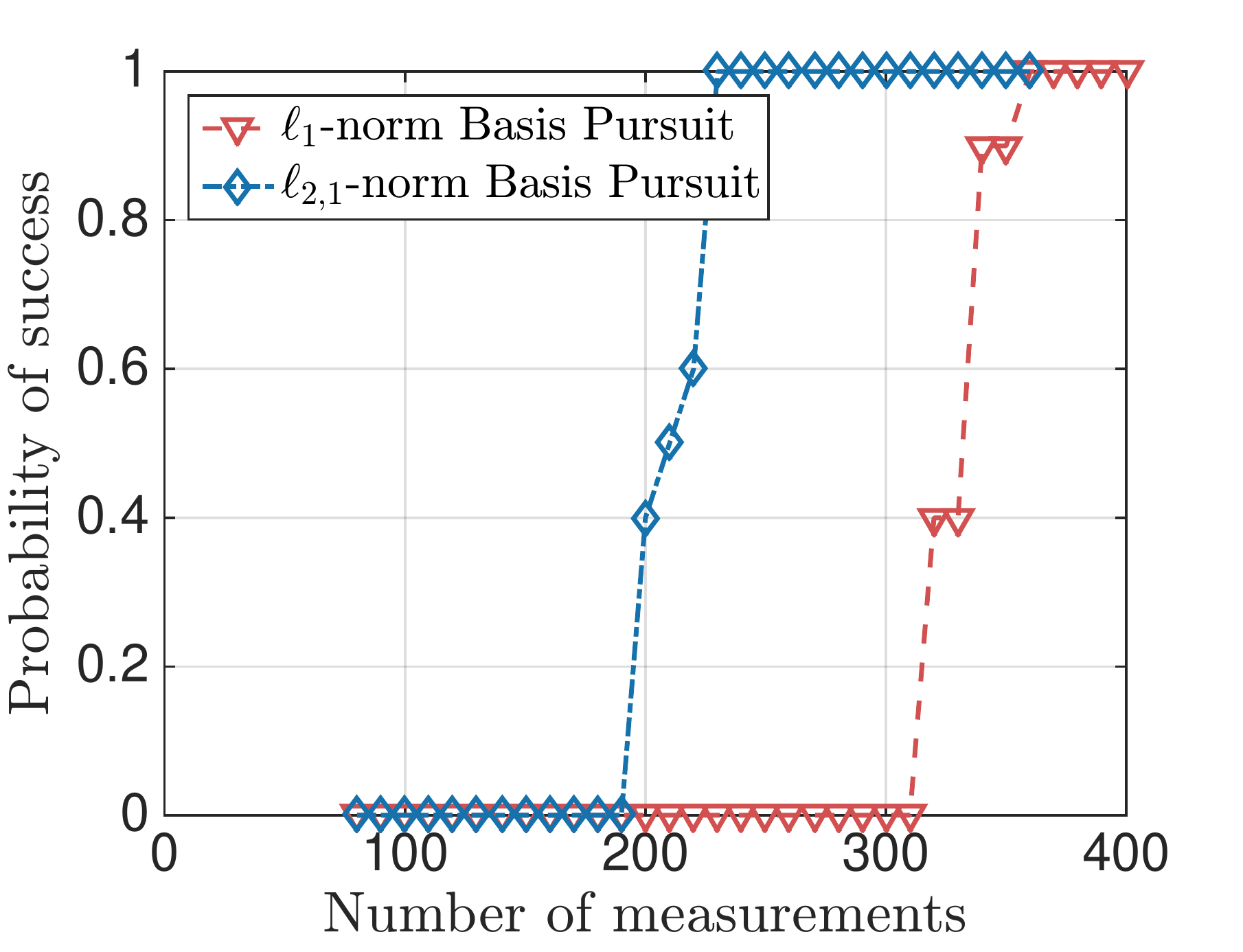} \hspace{-0.5cm}
		\includegraphics[width=0.5\columnwidth]{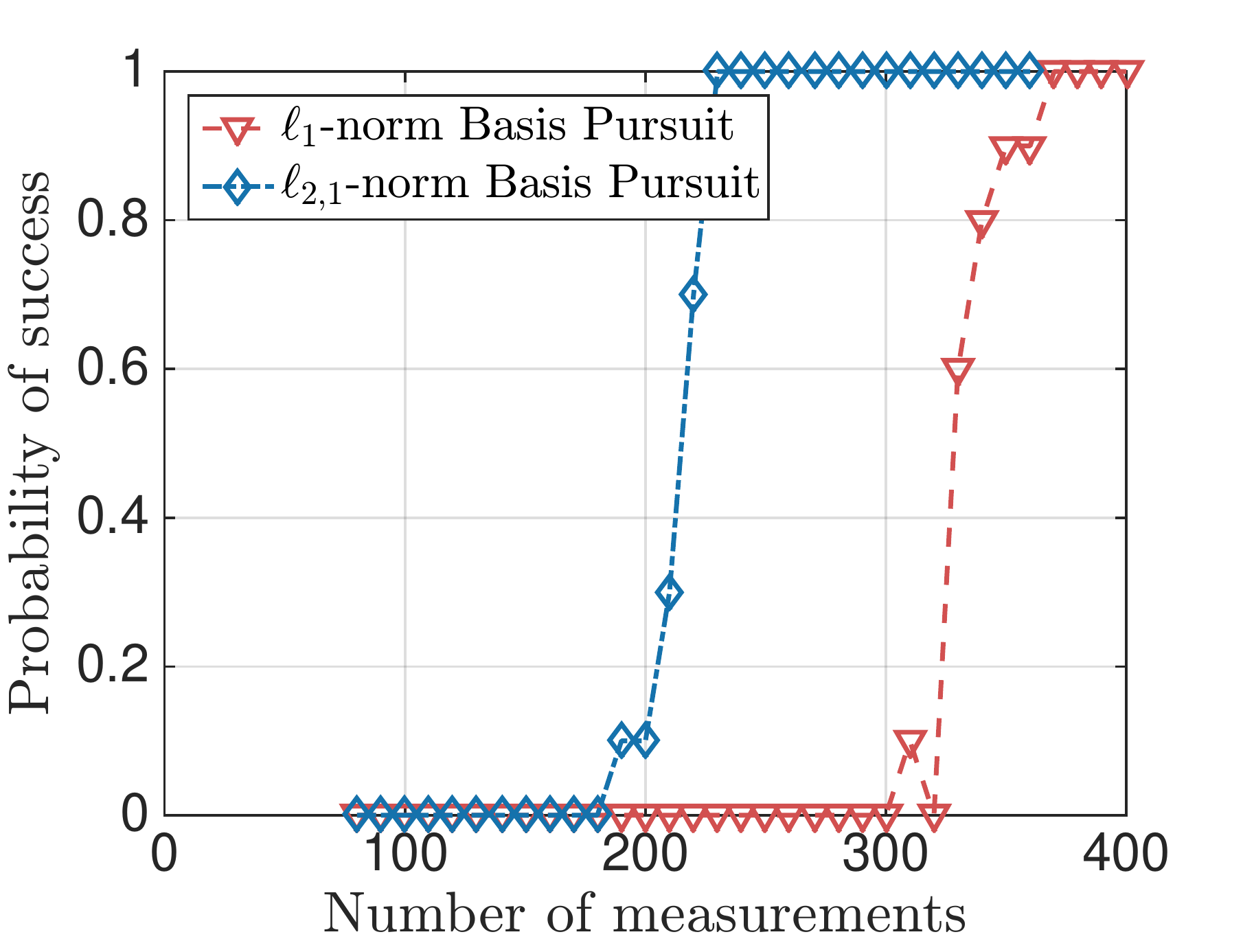}
	\end{center}
	\caption{Average probability of successful recovery, as a function of the total number of observations, over $10$ Monte Carlo iterations. \textit{Top panel}: $\bestx$ is generated from a normal distribution. \textit{Bottom panel}: $\bestx$ is generated from a Bernoulli distribution. $\ell_{2,1}$-norm solver requires much less number of measurements for successful recovery, compared to $\ell_{1}$-norm solver.}{\label{fig:000}}
\end{figure}

\paragraph{$\ell_1$- vs. $\ell_{2,1}$-norm.} In this experiment, we verify that \emph{a priori structure knowledge helps in recovery.} To show this in the convex domain, consider the following artificial example. Let $\obs = \bPhi \bestx$ be the set of observations, where $\bestx \in \R^{10^3}$ is a $k$-block sparse vector, for $k = 8$. Here, we assume a block-sparse model $\model$ with $M = 100$ non-overlapping groups. Observe that $g = \dim/M = 10$.

Both for $\ell_1$- and $\ell_{2,1}$-norm cases, $\bPhi \in \R^{\numsam \times \dim}$ is designed as an expander, with degree $d = \left \lceil 22 \cdot \sfrac{\log \left(M\right)}{g}\right \rceil = 11$, complying with our theory. Further, we make the convention $\bPhi := \sfrac{1}{d} \cdot \bPhi$.

We consider two cases: $(i)$ $\bestx$ is generated from a Gaussian distribution and $(ii)$ $\bestx$ is generated as a binary signal $\in \left\{\pm 1\right\}$. In both cases, $\bestx$ is normalized such that $\|\bestx\|_2 = 1$. For all cases, we used the same algorithmic framework \cite{tran2014constrained} and performed $10$ Monte Carlo realizations. Figure \ref{fig:000} shows the average probability for \emph{successful recovery} as a function of the total number of measurements observed; we declare a success when $\|\widehat{\x} - \bestx\|_2 \leq 10^{-5}$. It is apparent that knowing the model a priori can guarantee recovery with less number of observations.

\paragraph{Sub-Gaussian vs. sparse sensing matrices in $\ell_{2,1}$-norm recovery.} Let us focus now on the $\ell_{2,1}$-norm case. For this setting, we perform two experiments. First, we consider a similar setting as previously; the only difference lies in the way we generate the sensing matrix $\X$. We consider two cases: $(i)$ $\X \sim \mathcal{N}\left(\mathbf{0}, \sfrac{1}{\numsam} \cdot \mathbf{I}\right)$ and $(ii)$ $\X$ is a sparse expander matrix, again with degree $d = \left \lceil 22 \cdot \sfrac{\log \left(M\right)}{g}\right \rceil = 11$. Here, we only use the $\ell_{2,1}$-norm solver. Figure \ref{fig:001} depicts some results. We observe that expander matrices perform worse -- as a function of the total number of measurements required for successful recovery. Nevertheless, using sparse matrices is still considerably competitive to dense Gaussian matrices. 

\begin{figure}[!hb]
	\begin{center}
		\includegraphics[width=0.5\columnwidth]{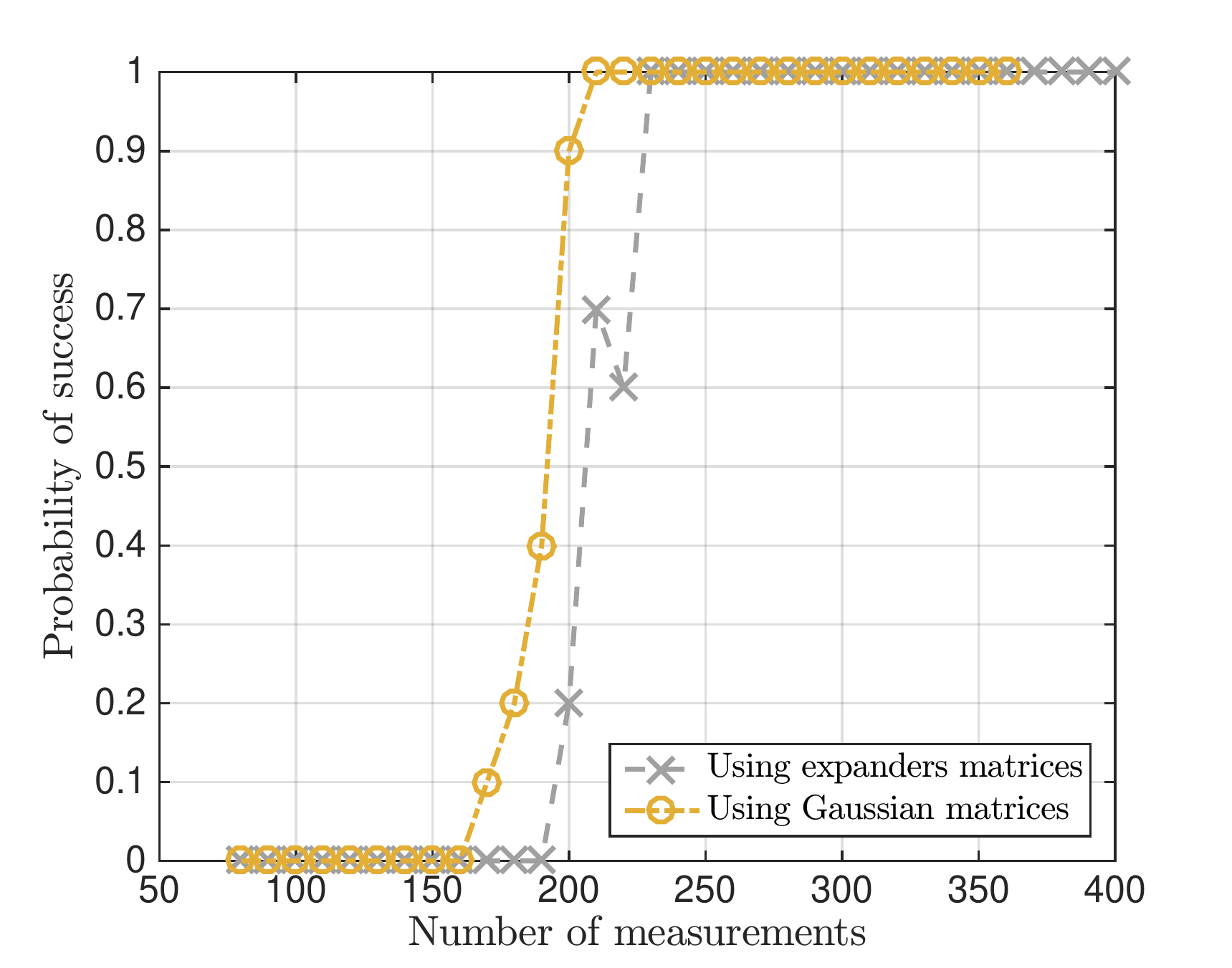}
	\end{center} 
	\caption{Average probability of successful recovery, as a function of the total number of observations, over $10$ Monte Carlo iterations. Expander matrices perform worse but competitive to dense Gaussian matrices.}{\label{fig:001}}
\end{figure}

%Synthetic examples. 
%\begin{itemize}
%	\item [$(i)$] Show how $\ell_{2,1}$-norm minimization helps over $\ell_1$-norm minimization. High dimensional case - the highest we can such that we can do 50 Monte-Carlo iterations in reasonable amount of time. In the same plot, $x$-axis is the number of measurements and $y$-axis is the recovery rate. The experiments should be repeated many times (e.g., 50 Monte-Carlo iterations). We can show both mean and median cases. Also mention the time needed for both algorithms (probably $\ell_1$-norm is much faster but we hope we can reconstruct cases with much less number of samples - tradeoff). Other alternative is to create a phase transition plot but this would need much more experiments.
%	\item [$(ii)$] Focus on $\ell_{2,1}$-norm minimization. Here we compare Gaussian, Bernoulli and Sparse matrices (i don't know whether we should compare with other matrices). Same experiment as above but now we compare different sensing matrices. We should also give examples where dense matrices lead to slow execution time (put a time-wall that $\ell_{2,1}$-norm minimization needs a lot of time).
%\end{itemize}

\begin{table}[!h]
\centering
\ra{1.3}
\begin{scriptsize}
%\rowcolors{2}{white}{black!05!white}
\begin{tabular}{ll c cc c cc} \toprule
\multicolumn{2}{c}{Model} & \phantom{a} & \multicolumn{2}{c}{$\|\widehat{\x} - \bestx\|_2$} &  \phantom{a}  & \multicolumn{2}{c}{Time (sec)} \\
\cmidrule{1-2} \cmidrule{4-5} \cmidrule{7-8} 
\multicolumn{1}{c}{$\dim$} & \multicolumn{1}{c}{$k\cdot g$} & \phantom{a} & Gaus. & Exp. & \phantom{a}  & Gaus. & Exp. \\ \midrule
\multirow{3}{*}{$10000$} & $300$ &  & 8.6e-07  & 3.3e-06 &  & 24.3  & \textcolor[rgb]{0.4,0.1,0}{\textbf{5.8}} \\
 & $400$ &  & 8.2e-06  & 3.4e-06 &  & 27.5  & \textcolor[rgb]{0.4,0.1,0}{\textbf{6.2}} \\
 & $500$ &  & 8.6e-06  & 3.2e-06 &  & 27.8  & \textcolor[rgb]{0.4,0.1,0}{\textbf{6.0}} \\
 & $600$ &  & 8.6e-06  & 3.4e-06 &  & 31.2  & \textcolor[rgb]{0.4,0.1,0}{\textbf{7.6}} \\
 \midrule 
\multirow{3}{*}{$20000$} & $300$ &  & 8.1e-07  & 3.4e-06 &  & 95.5  & \textcolor[rgb]{0.4,0.1,0}{\textbf{18.5}} \\
 & $400$ &  & 8.1e-06  & 3.3e-06 &  & 79.4  & \textcolor[rgb]{0.4,0.1,0}{\textbf{16.4}} \\
 & $500$ &  & 8.5e-06  & 3.4e-06 &  & 83.9  & \textcolor[rgb]{0.4,0.1,0}{\textbf{15.8}} \\
 & $600$ &  & 8.5e-06  & 3.5e-06 &  & 91.3  & \textcolor[rgb]{0.4,0.1,0}{\textbf{18.9}} \\
 \midrule 
\multirow{3}{*}{$50000$} & $300$ &  & 8.2e-06  & 3.3e-06 &  & 419.3  & \textcolor[rgb]{0.4,0.1,0}{\textbf{49.6}} \\
 & $400$ &  & 8.1e-06  & 3.4e-06 &  & 432.8  & \textcolor[rgb]{0.4,0.1,0}{\textbf{45.8}} \\
 & $500$ &  & 8.5e-06  & 3.6e-06 &  & 436.0  & \textcolor[rgb]{0.4,0.1,0}{\textbf{52.9}} \\
 & $600$ &  & 8.4e-06  & 3.5e-06 &  & 435.4  & \textcolor[rgb]{0.4,0.1,0}{\textbf{51.1}} \\
 \midrule 
\multirow{3}{*}{$100000$} & $600$ &  & 8.1e-06  & 9.4e-06 &  & 1585.5  & \textcolor[rgb]{0.4,0.1,0}{\textbf{55.1}} \\
 & $800$ &  & 8.1e-06  & 9.5e-06 &  & 1598.2  & \textcolor[rgb]{0.4,0.1,0}{\textbf{54.5}} \\
 & $1000$ &  & 8.4e-06  & 9.4e-06 &  & 1600.6  & \textcolor[rgb]{0.4,0.1,0}{\textbf{56.2}} \\
 & $1200$ &  & 8.1e-06  & 9.3e-06 &  & 1648.0  & \textcolor[rgb]{0.4,0.1,0}{\textbf{55.6}} \\
\bottomrule
\end{tabular}
\end{scriptsize}
\caption{Summary of comparison results for reconstruction and efficiency. Median values are reported. As a stopping criterion, we used $\sfrac{\|\x_{i+1} - \x_{i}\|_2}{\|\x_{i+1}\|_2} \leq 10^{-6}$, where $\x_i$ is the estimate at the $i$-th iteration. In all cases, $n = \lceil 0.4 \cdot p \rceil$.} \label{tbl:CompSVP}
\end{table}

Now, to grasp the full picture in this setting, we scale our experiments to higher dimensions. Table \ref{tbl:CompSVP} summons up the results. All experiments were repeated for 10 independent realizations and the table contains the median scores. For $\dim = \left\{10^4, 2\cdot 10^4, 5\cdot 10^4, 10^5 \right\}$, the total number of non-overlapping groups in $\model$ was $M = \left\{10^3, 2\cdot 10^3, 5\cdot 10^3, 5\cdot 10^3 \right\}$, respectively. The column cardinality is selected as $d = \left \lceil 22 \cdot \sfrac{\log \left(M\right)}{g}\right \rceil$. For each $\dim$, the block sparsity ranges as $k \in \left\{3, \dots, 6\right\}$. In all cases, $n = \lceil 0.4 \cdot p \rceil$.

One can observe that using sparse matrices in this setting results into \emph{faster convergence} -- as in total time required for stopping criterion to be met. Meanwhile, the solution quality is at least at the same levels, compared to that when dense Gaussian matrices are used. 

Finally, we highlight that, for $\dim = 10^5$, since $M$ does not increase, the number of non-zeros $d$ per column decreases (\textit{i.e.}, $d = 7$ while $d > 11$ in all other cases). This results into a small deterioration of the recovery quality; though, still comparable to the convex counterpart.\footnote{Observe that in most configurations, expander matrices find a solution closer to $\bestx$, compared to the dense setting, except for the case of $\dim = 10^5$, where we decrease the number of zeros per column.} Meanwhile, the time required for convergence in the expander case remains at the same levels as when $\dim = 5 \cdot 10^4$; however, the same does not apply for the dense counterpart. This constitutes the use of expander matrices appealing in real applications.

\paragraph{Block sparsity in image processing.} For this experiment, we use the real background-subtracted image dataset, presented in \cite{huang2011learning}. Out of 205 frames, we randomly selected 100 frames to process. Each frame is rescaled to be of size $2^8 \times 2^8$ pixels. Each pixel takes values in $[0, 1]$. %We followed the standard sparse recovery procedure from compressed measurements: w
We observe $\obs = \bPhi \bestx$ where $\bestx \in [0,1]^{\dim}, ~\dim = 2^{16}, $ is the ground-truth vectorized frame and, $\bPhi$ is either sparse or dense sensing matrix, designed as in the previous experiments. 

\begin{figure*}
\begin{minipage}{0.75\paperwidth}
	\begin{minipage}{0.105\linewidth}
	\centering \scriptsize{Original}
	\end{minipage}
	\begin{minipage}{0.105\linewidth}
	\centering \scriptsize{$\ell_1$ + Exp.}
	\end{minipage}
	\begin{minipage}{0.105\linewidth}
	\centering \scriptsize{$\ell_1$ + Gaus.}
	\end{minipage}
	\begin{minipage}{0.105\linewidth}
	\centering \scriptsize{$\ell_{2,1}$ + Exp.}
	\end{minipage}
	\begin{minipage}{0.105\linewidth}
	\centering \scriptsize{$\ell_{2,1}$ + Gaus.}
	\end{minipage}\\
	\includegraphics[width = 0.105\linewidth]{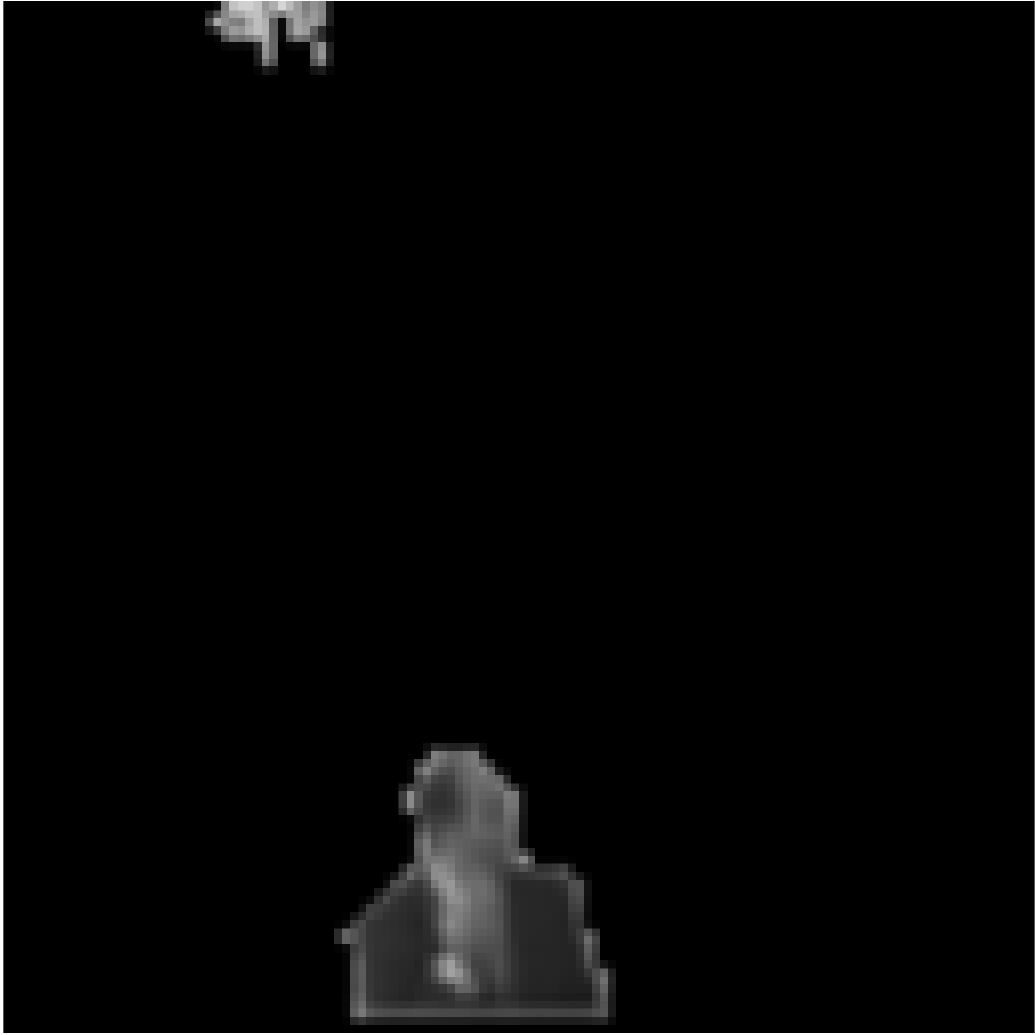}
	\includegraphics[width = 0.105\linewidth]{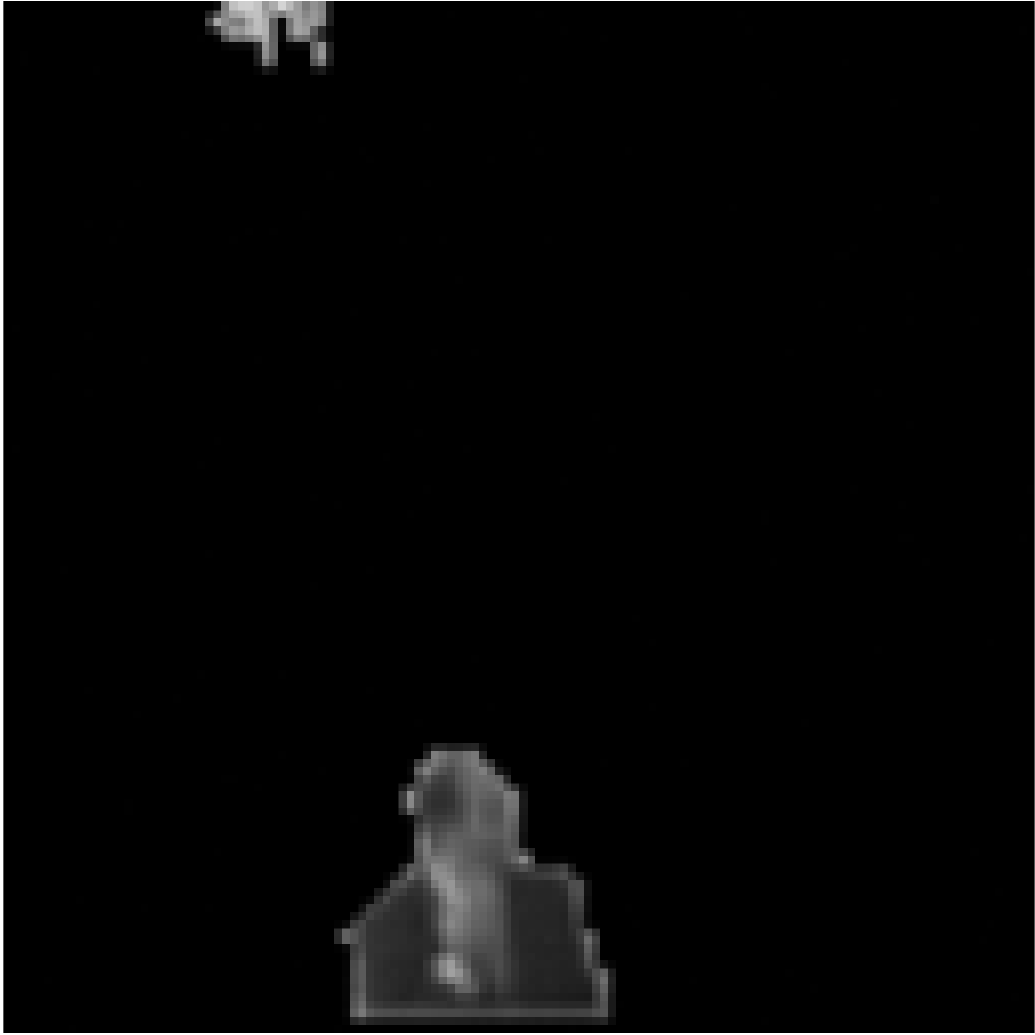}
	\includegraphics[width = 0.105\linewidth]{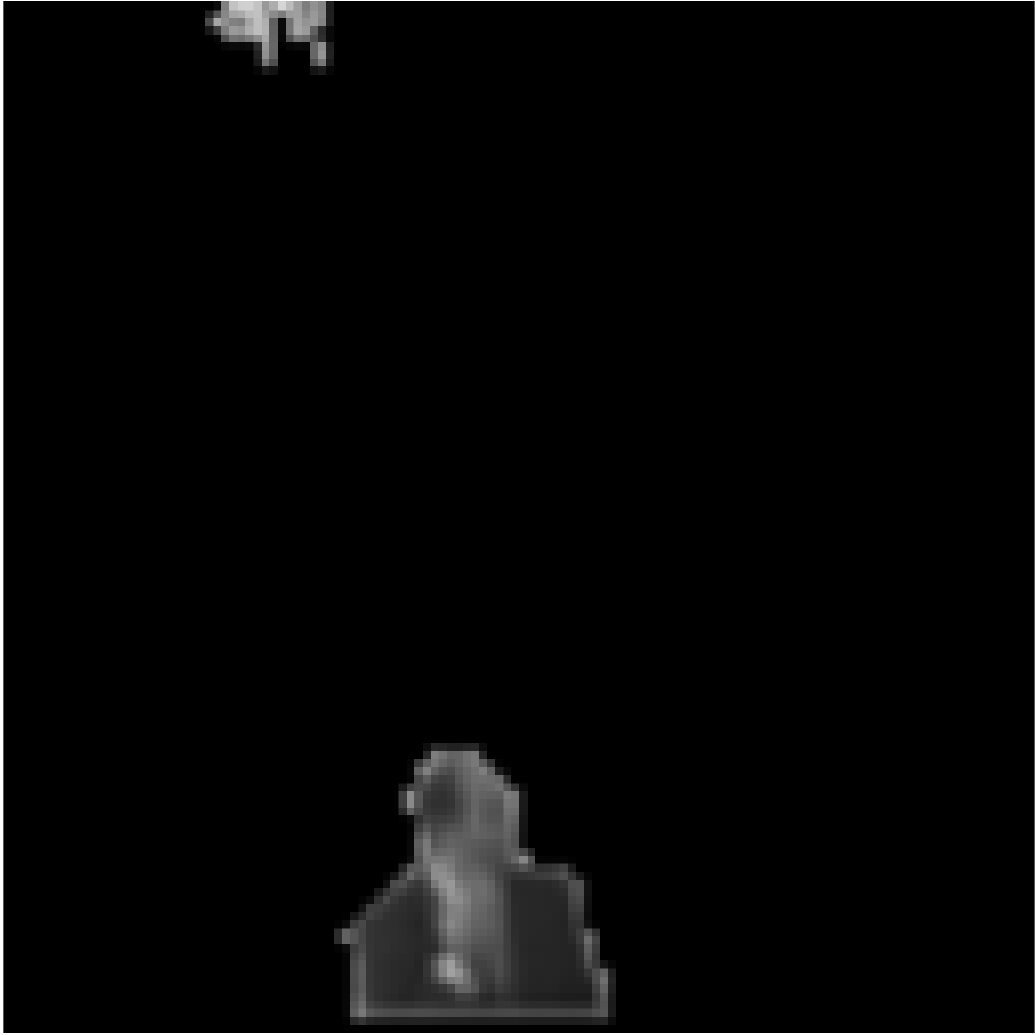}
	\includegraphics[width = 0.105\linewidth]{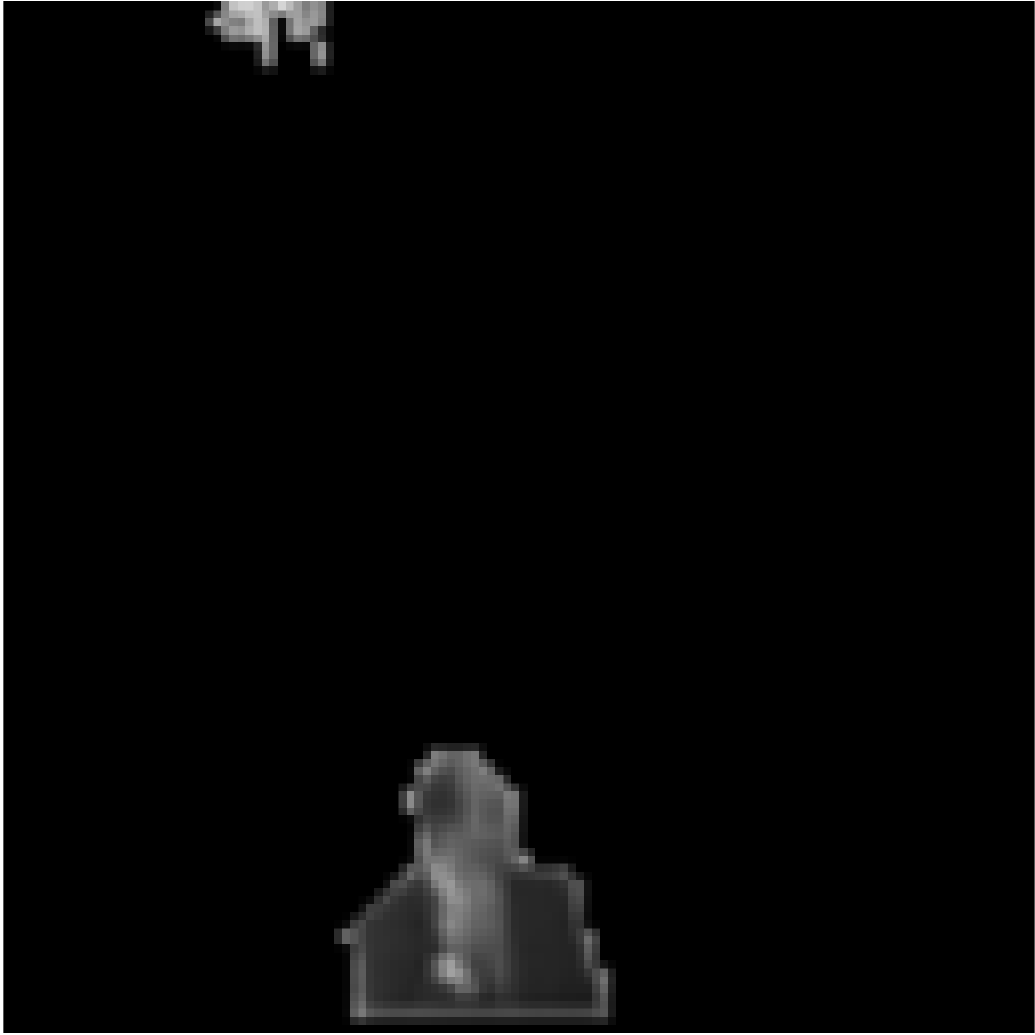}
	\includegraphics[width = 0.105\linewidth]{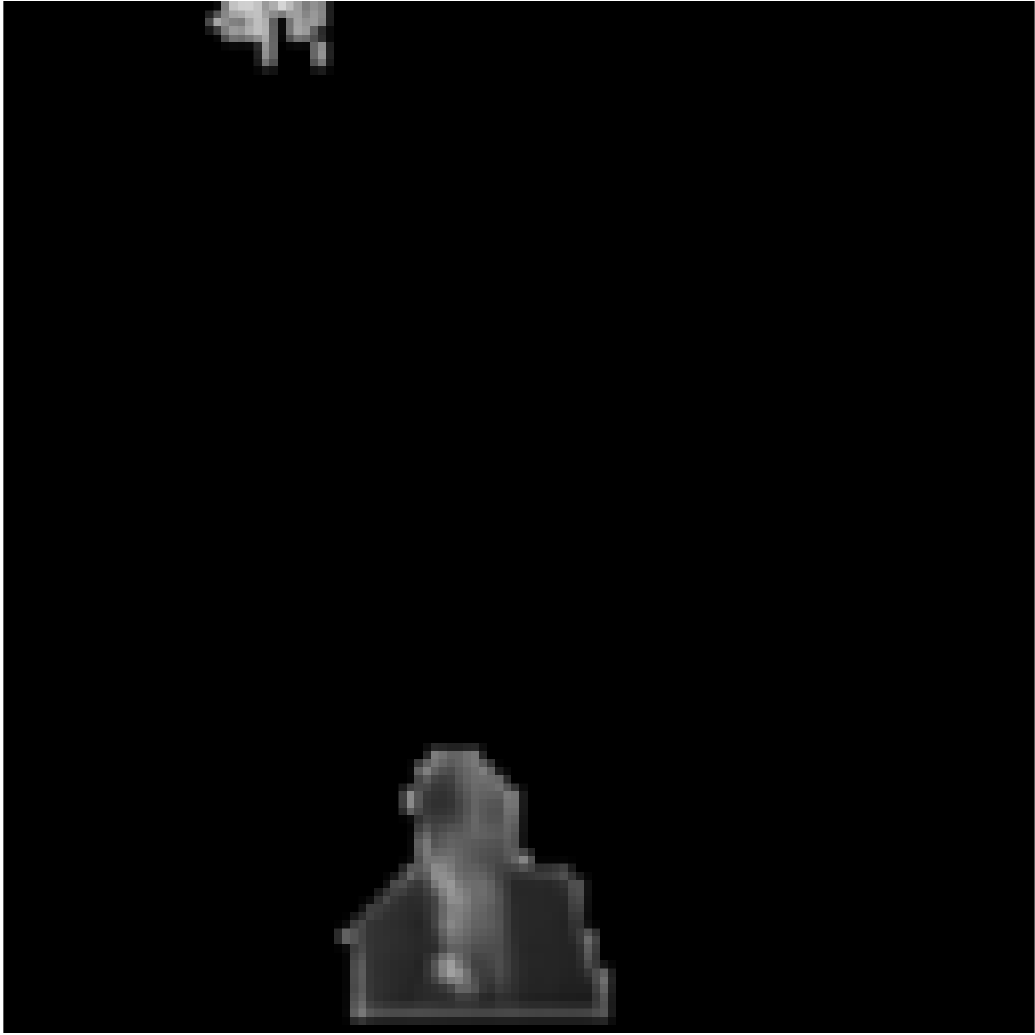} \vspace{-2mm} \\	
	\begin{minipage}{0.105\linewidth}
	\centering  \scriptsize{$\phantom{a}$}
	\end{minipage}	
	\begin{minipage}{0.105\linewidth}
	\centering  \scriptsize{$57.3 ~\rm{dB}$}
	\end{minipage}
	\begin{minipage}{0.105\linewidth}
	\centering \scriptsize{$138.6 ~\rm{dB}$}
	\end{minipage}
	\begin{minipage}{0.105\linewidth}
	\centering \scriptsize{$128.8 ~\rm{dB}$}
	\end{minipage}
	\begin{minipage}{0.105\linewidth}
	\centering \scriptsize{$142.6 ~\rm{dB}$}
	\end{minipage}\\
	\includegraphics[width = 0.105\linewidth]{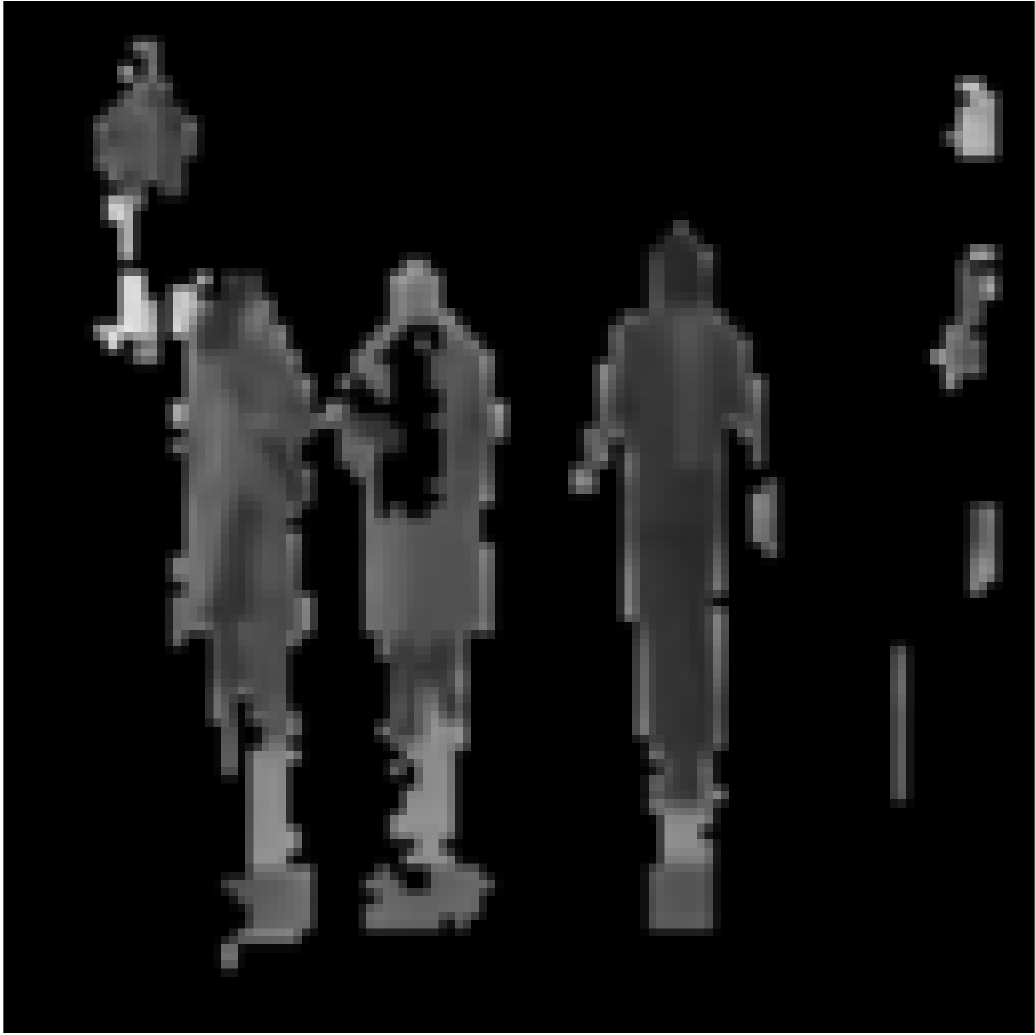}
	\includegraphics[width = 0.105\linewidth]{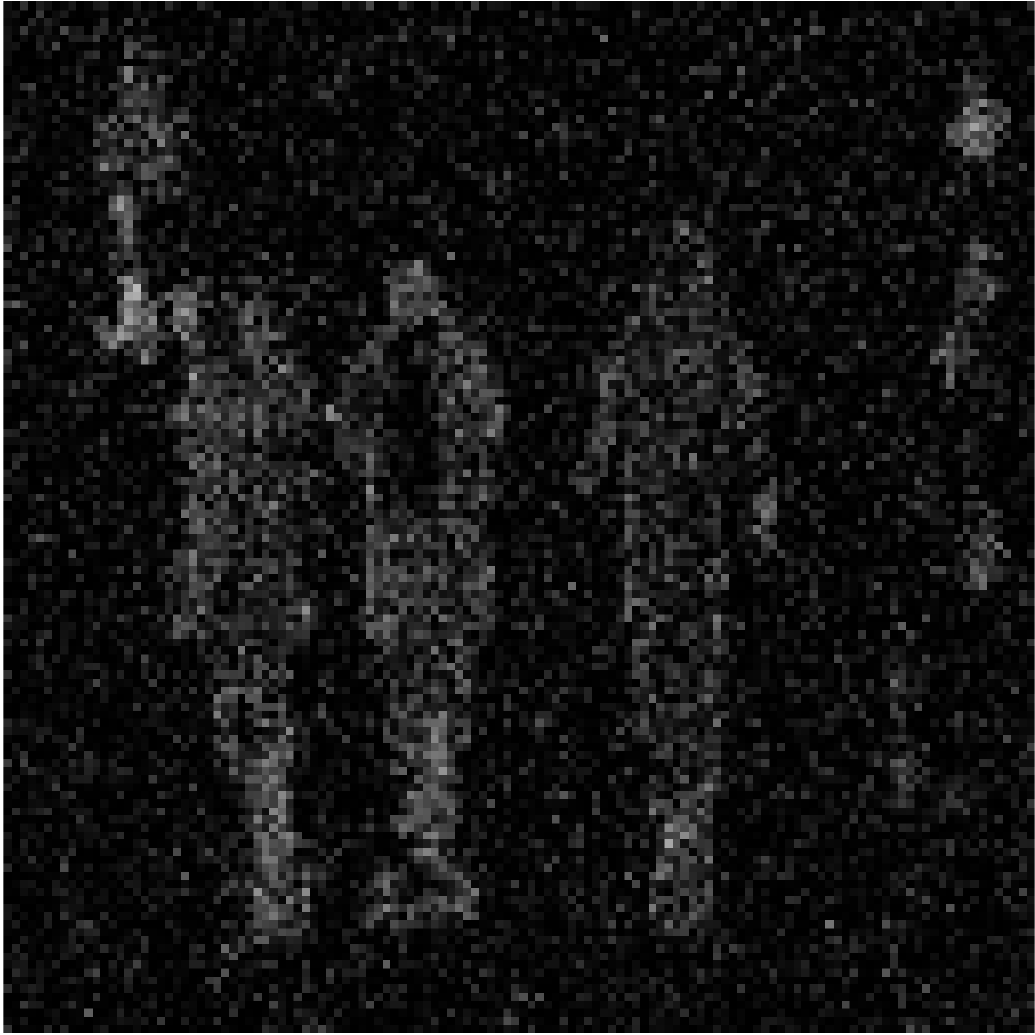}
	\includegraphics[width = 0.105\linewidth]{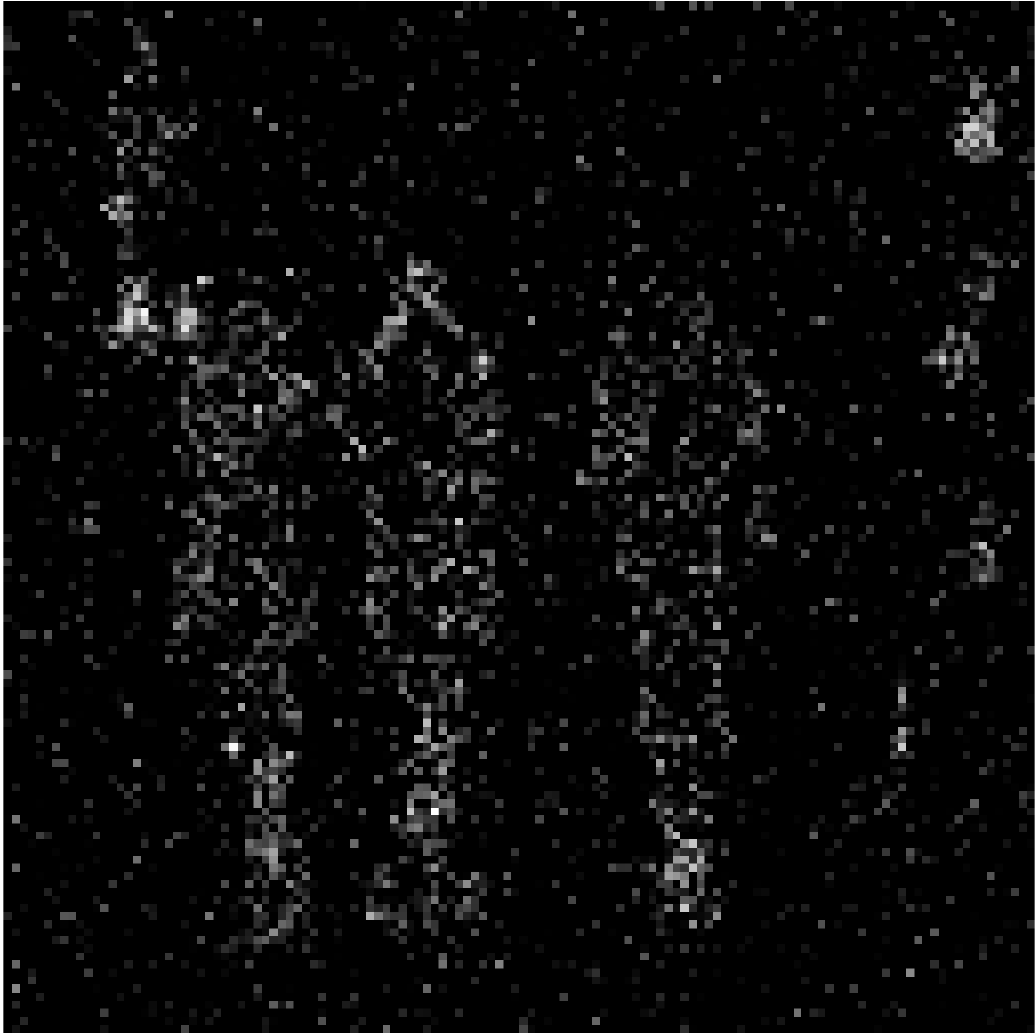}
	\includegraphics[width = 0.105\linewidth]{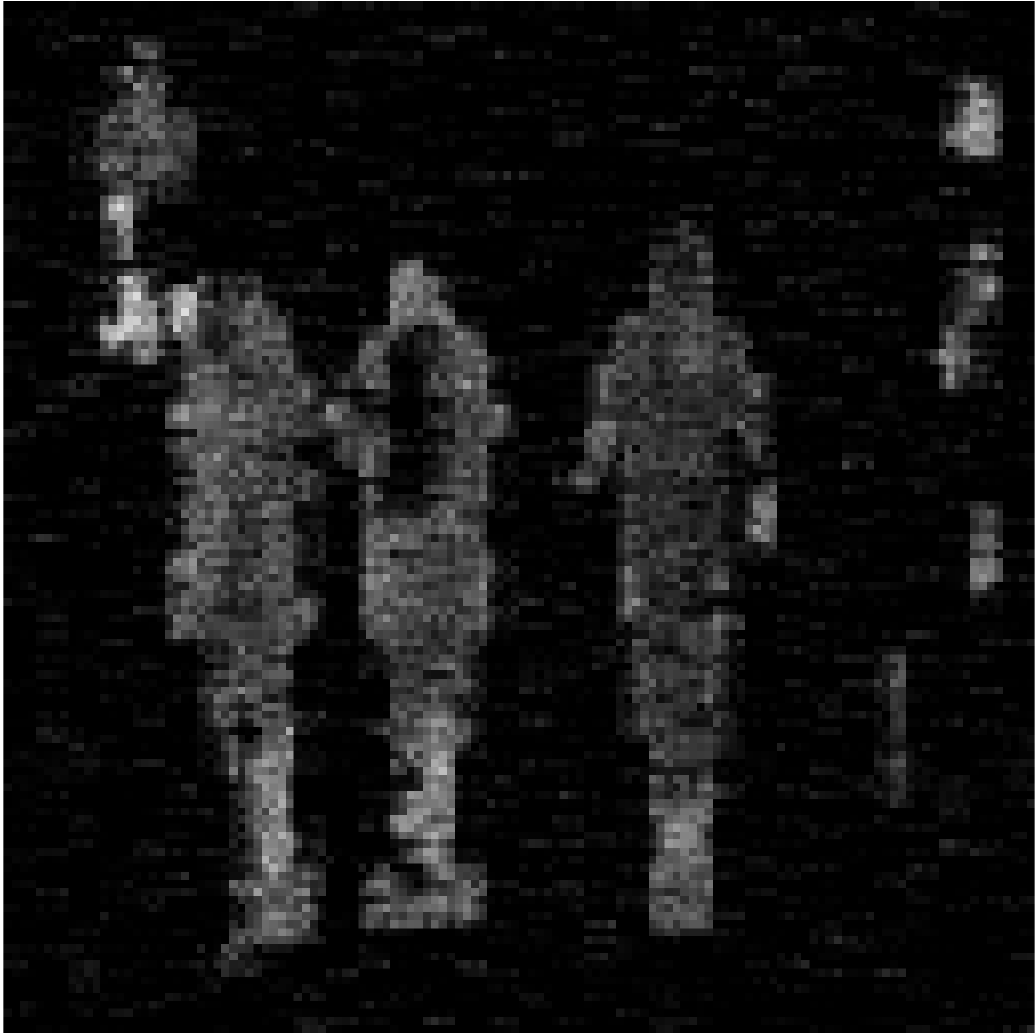}
	\includegraphics[width = 0.105\linewidth]{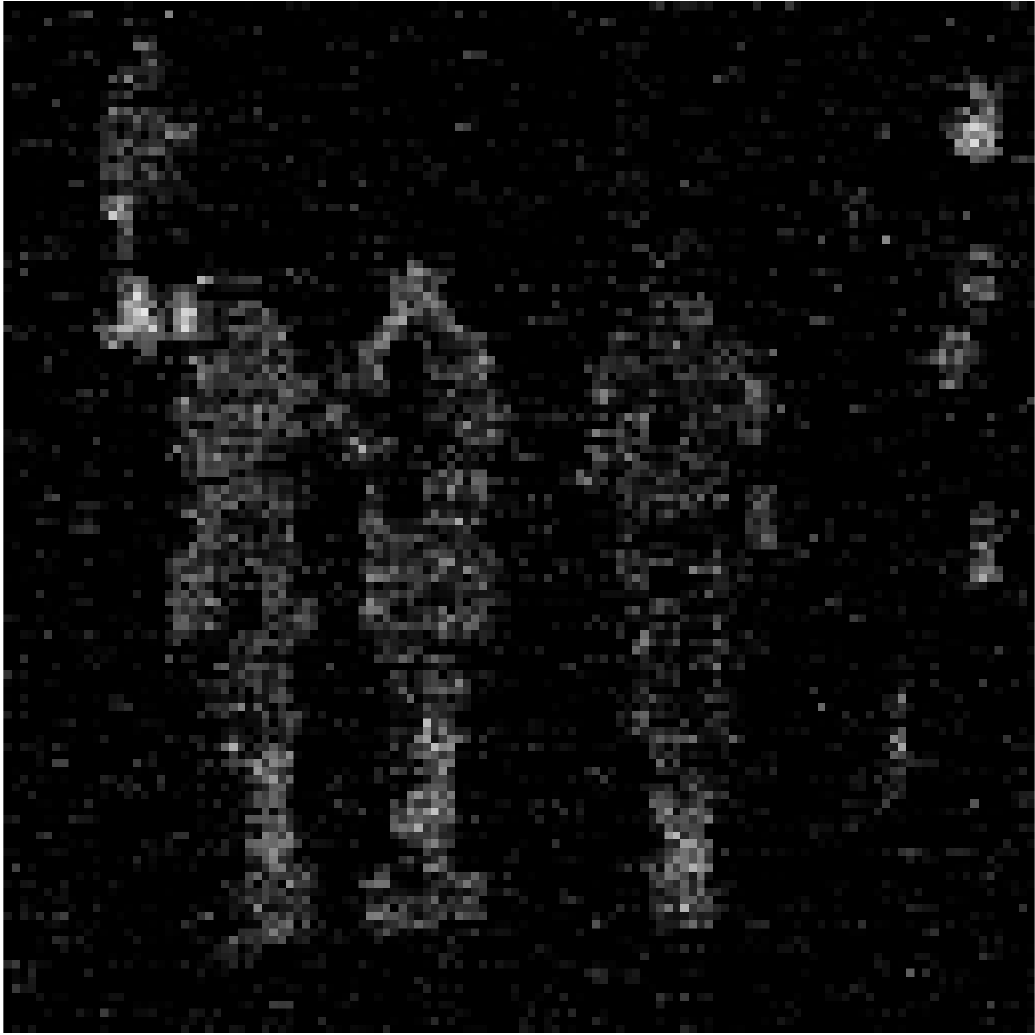} \vspace{-2mm} \\
	\begin{minipage}{0.105\linewidth}
	\centering  \scriptsize{$\phantom{a}$}
	\end{minipage}	
	\begin{minipage}{0.105\linewidth}
	\centering  \scriptsize{$19.2 ~\rm{dB}$}
	\end{minipage}
	\begin{minipage}{0.105\linewidth}
	\centering \scriptsize{$16.9 ~\rm{dB}$}
	\end{minipage}
	\begin{minipage}{0.105\linewidth}
	\centering \scriptsize{$25.8 ~\rm{dB}$}
	\end{minipage}
	\begin{minipage}{0.105\linewidth}
	\centering \scriptsize{$19.4 ~\rm{dB}$}
	\end{minipage}\\
	\includegraphics[width = 0.105\linewidth]{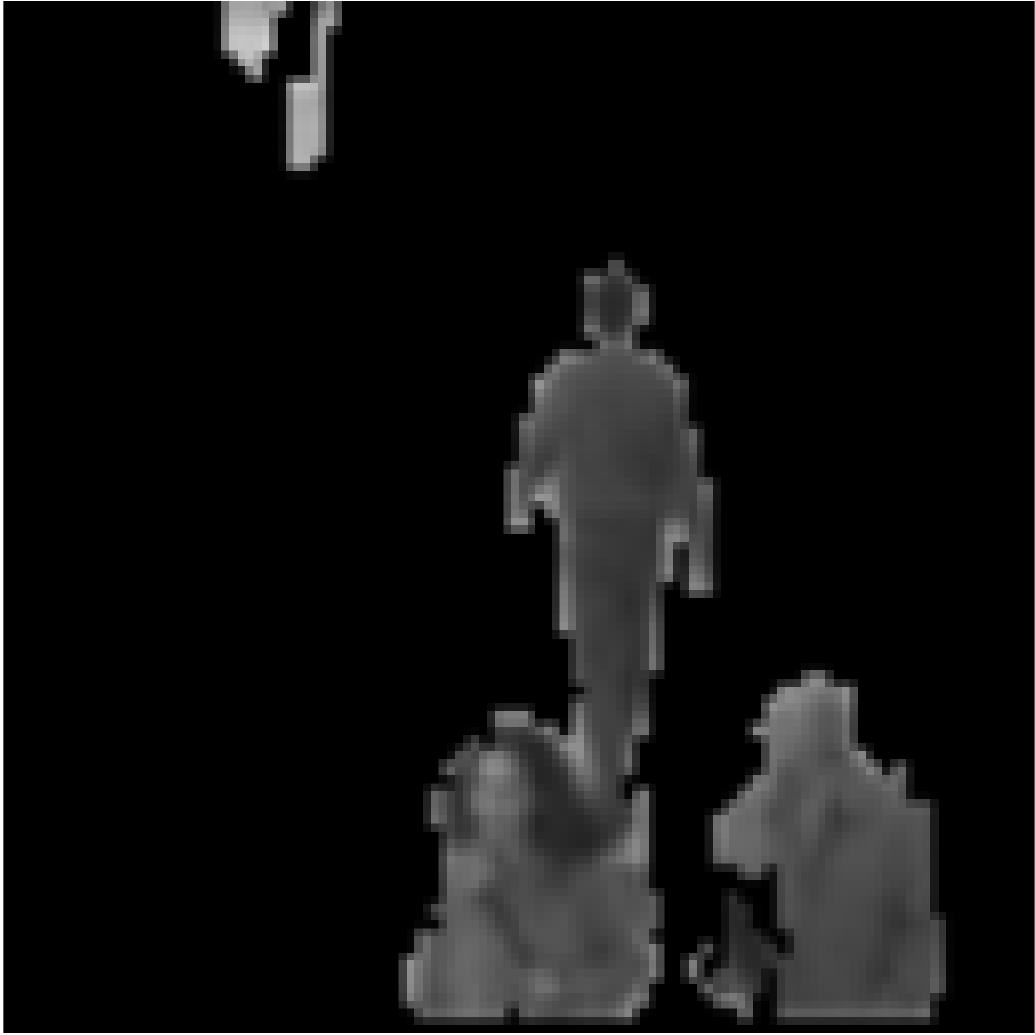}
	\includegraphics[width = 0.105\linewidth]{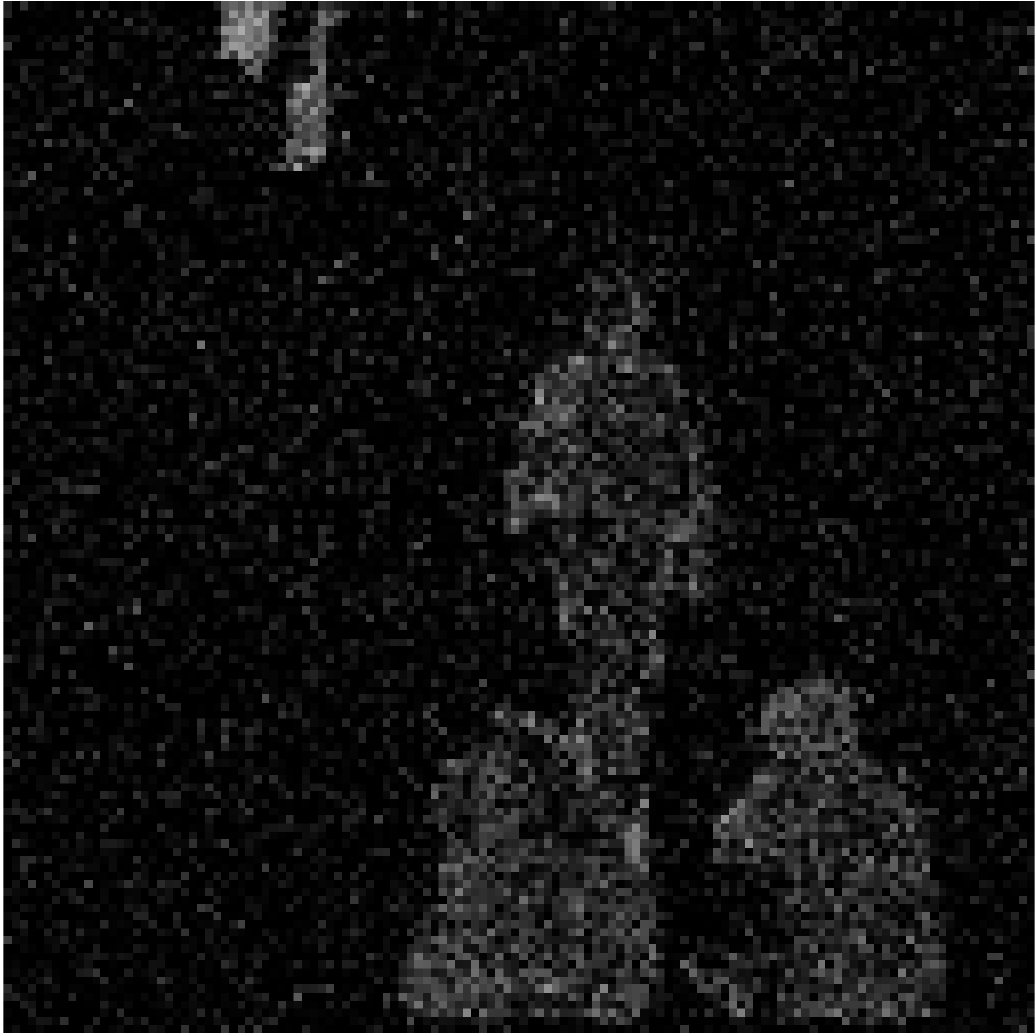}
	\includegraphics[width = 0.105\linewidth]{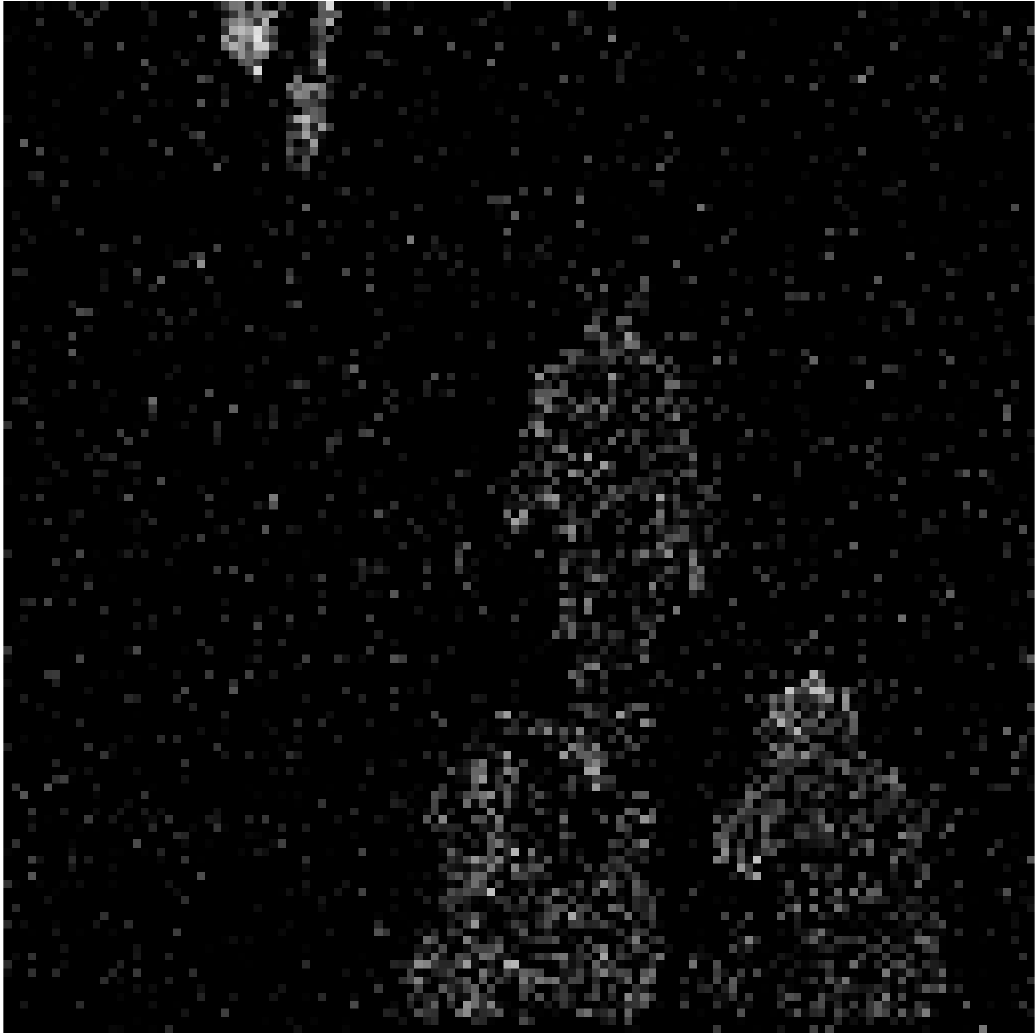}
	\includegraphics[width = 0.105\linewidth]{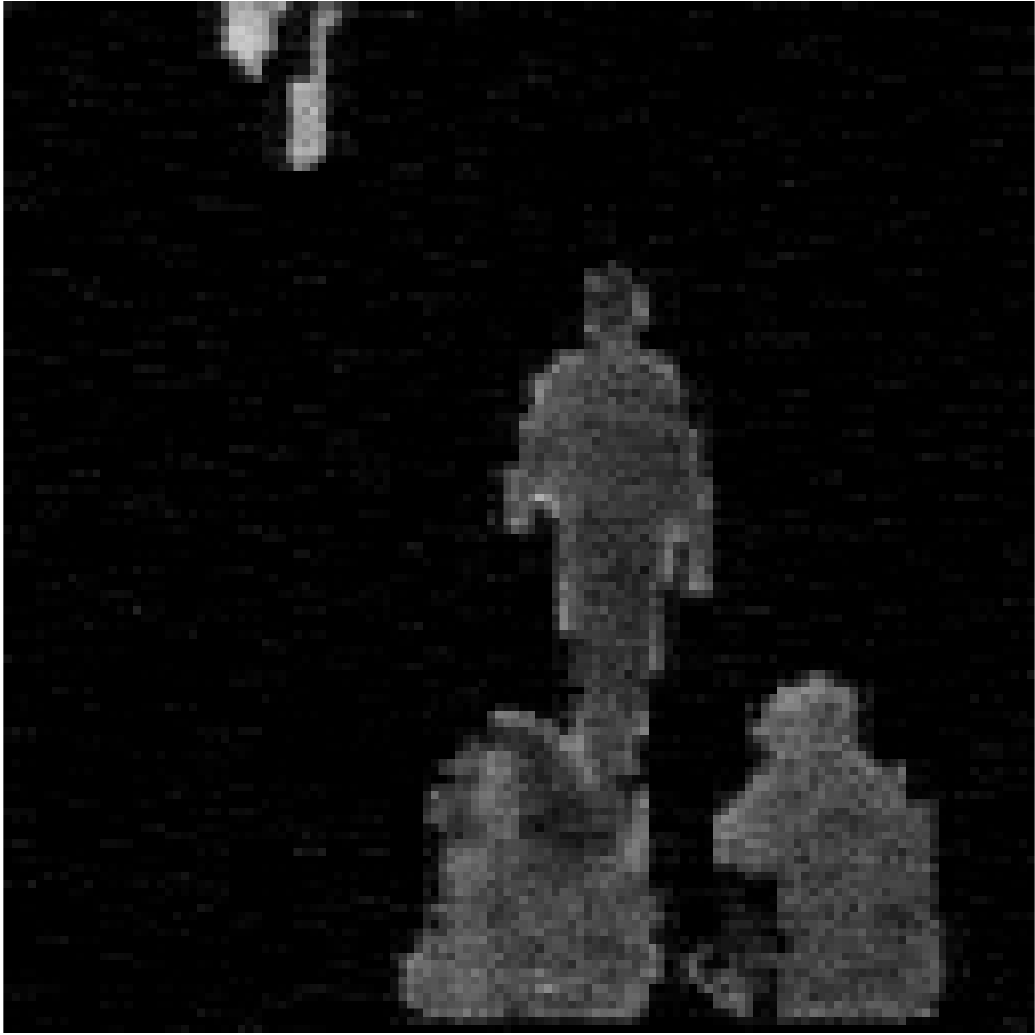}
	\includegraphics[width = 0.105\linewidth]{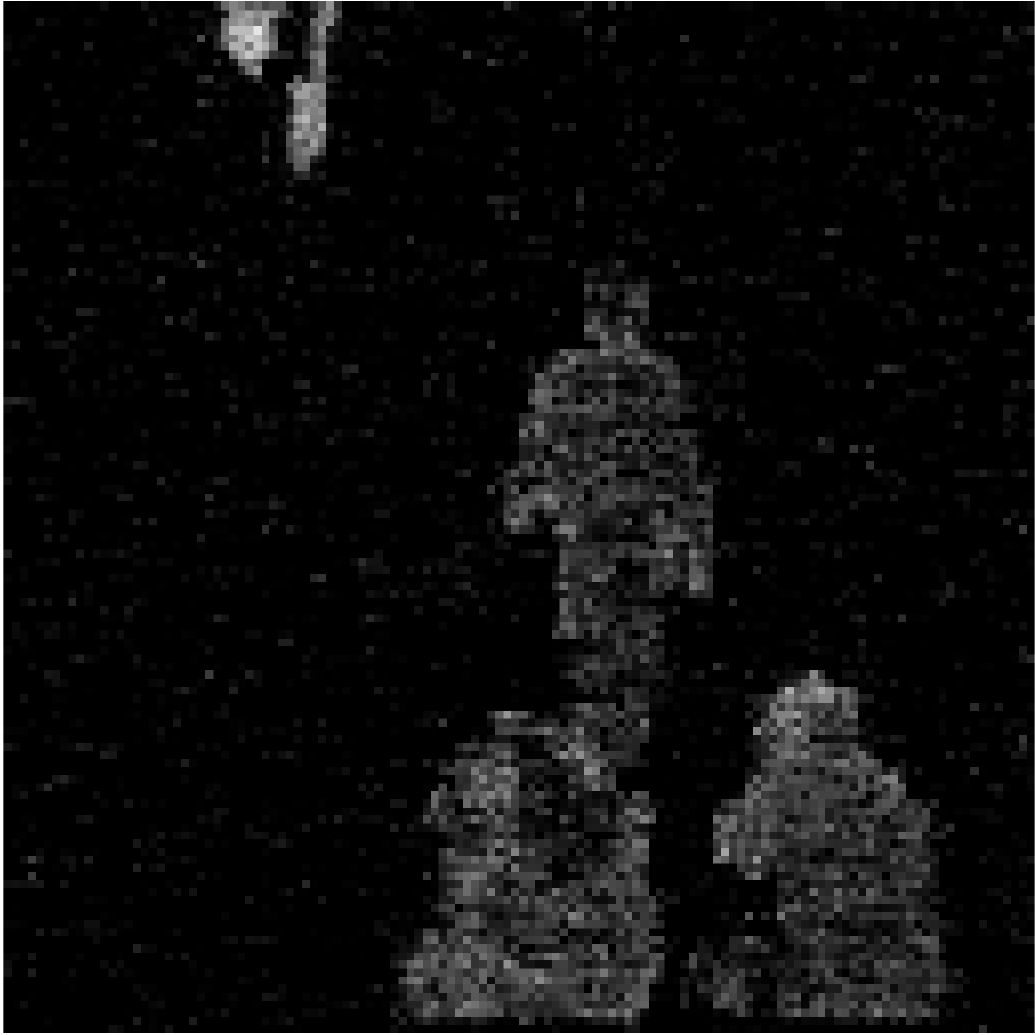} \vspace{-2mm} \\
	\begin{minipage}{0.105\linewidth}
	\centering  \scriptsize{$\phantom{a}$}
	\end{minipage}	
	\begin{minipage}{0.105\linewidth}
	\centering  \scriptsize{$20.6 ~\rm{dB}$}
	\end{minipage}
	\begin{minipage}{0.105\linewidth}
	\centering \scriptsize{$18.5 ~\rm{dB}$}
	\end{minipage}
	\begin{minipage}{0.105\linewidth}
	\centering \scriptsize{$31.5 ~\rm{dB}$}
	\end{minipage}
	\begin{minipage}{0.105\linewidth}
	\centering \scriptsize{$24.8 ~\rm{dB}$}
	\end{minipage} 
\end{minipage} \hspace{-6.5cm}
\begin{minipage}{0.1\paperwidth}
	\includegraphics[width=0.36\paperwidth]{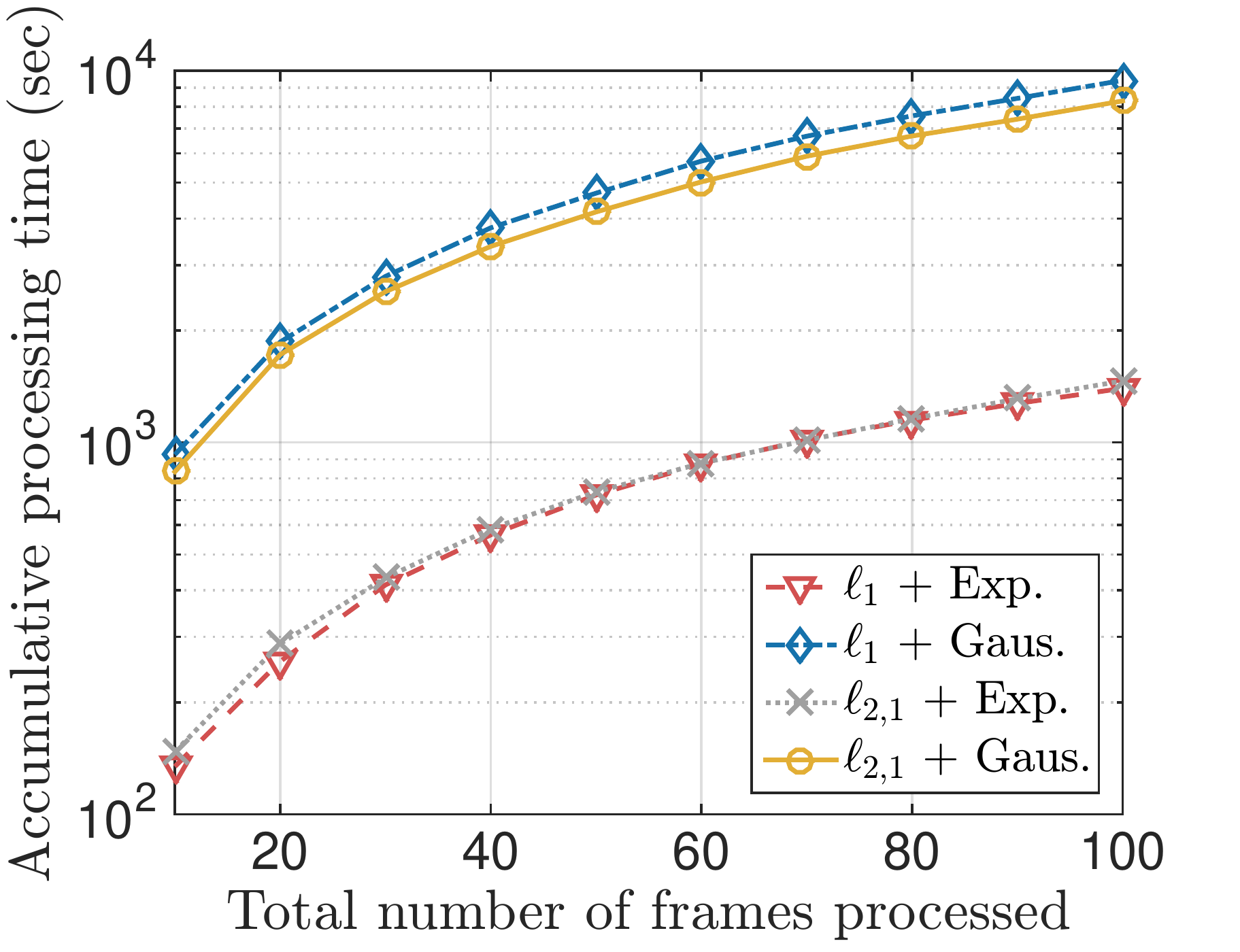}
\end{minipage}
\caption{\textit{Left panel}: representative examples of subtracted frame recovery from compressed measurements. Here,  $\numsam = \lceil 0.3 \cdot \dim \rceil $ measurements are observed for $\dim = 2^{16}$. Block sparse model $\model$ contains groups of consecutive indices where $g = 4$. \textit{Right panel}: Accumulative computational time required to process 100 frames. Overall, using Gaussian matrices in the $\ell_{2,1}$-norm case, \textsc{DecOpt} required almost 2.8 hours (upper bound), as compared to 0.55 hours when $\bPhi$ is a sparse expander matrix. Thus, while Gaussian matrices is known to lead to better recovery results if no time restrictions apply, sparse sensing matrices constitute an appealing choice in practice.} {\label{fig:002}}
\end{figure*}

%\begin{figure}[!ht]
%	\begin{center}
%		\includegraphics[width=0.9\columnwidth]{}
%	\end{center} 
%	\caption{Average probability of successful recovery, as a function of the total number of observations, over $10$ Monte Carlo iterations. Expander matrices perform worse but considerably competitive to dense Gaussian matrices.}{\label{fig:003}}
%\end{figure}
For the purpose of this experiment, we set up an upper wall time of $10^4$ seconds (\textit{i.e.}, $~2.8$ hours) to process $100$ frames for each solver. This translates into $100$ seconds per frame. 

Due to the nature of the dataset, we can safely assume that nonzeros are clustered together. Thus, we assume group models $\model$ where groups are constituted of consecutive column pixels and the $\dim$ indices are divided in consecutive groups of equal size $g = \left\{4, 8, 16\right\}$. No other parameters are required -- this is an advantage over non-convex approaches, where a sparsity level is usually required to be known a priori. All experiments are repeated 10 independent times for different $\bPhi$'s.

Figure \ref{fig:002} shows some representative results. Left panel illustrates the recovery performance for different settings -- $\ell_1$- vs. $\ell_{2,1}$-norm and Gaussian vs. sparse matrices $\bPhi$; results for other configurations are presented in the appendix. The first row considers a ``simple'' image with a small number of non-zeros; the other two rows show two less sparse cases. While for the ``simple'' case, solving $\ell_{2,1}$-norm minimization with Gaussian matrices lead to better recovery results -- within the time constraints, the same does not apply for the more ``complex'' cases. Overall, we observe that, \emph{given such time restrictions per frame}, by using expander matrices one can achieve a better solution in terms of PSNR \emph{faster}. % performs at least competitively, as compared to the other configurations, even leading to better recovery results. % in some cases. 
This is shown in more detail in Figure \ref{fig:002} (right panel); see also the appendix for more results. %plots the accumulated time, required by each solver, for all 100 frames. Using expander matrices, one can save approximately one order of magnitude of computational time.

\section{Conclusions}

Sparse matrices are favorable objects in machine learning and optimization. When such matrices can be applied in favor of dense ones, the computational requirements can be significantly reduced in practice, both in terms of space and runtime complexity. In this work, we both show theoretically and experimentally that such selection is advantageous for the case of group-based basis pursuit recovery from linear measurements. As future work, one can consider other sparsity models, as well as different objective criteria. %While such findings are often accompanied with strong theoretical results, there are still numerous references in the literature where no guarantees are provided.

\begin{small}
\bibliographystyle{plain}
\bibliography{BlockSparseExpanders}
\end{small}
%\clearpage
%%\input{appendix}
\section{Appendix}

Here, we report further results on the 2D image recovery problem. We remind that, for the purpose of this experiment, we set up an upper wall time of $10^4$ seconds (\textit{i.e.}, $~2.8$ hours) to process $100$ frames for each solver. This translates into $100$ seconds per frame.

\subsection{Varying group size $g$}
For this case, we focus on a single frame. Due to its higher number of non-zeros, we have selected the frame shown in Figure \ref{fig:003}. For this case, we consider a roughly sufficient number of measurements is acquired where $\numsam = \lceil 0.3 \cdot \dim \rceil$. By varying the group size $g$, we obtain the results in Figure \ref{fig:003}. 

\subsection{Varying number of measurements}
Here, let $g = 4$ as this group selection performs better, as shown in the previous subection. Here, we consider $\numsam$ take values from $\numsam \in \left\lceil \left\{ 0.25, 0.3, 0.35, 0.4 \right\} \cdot \dim \right \rceil$. The results, are shown in Figure \ref{fig:004}.

\begin{figure*}[t]
\centering
	\begin{minipage}{0.15\linewidth}
	\centering \small{Original}
	\end{minipage}
	\begin{minipage}{0.15\linewidth}
	\centering \small{$\ell_1$ + Exp.}
	\end{minipage}
	\begin{minipage}{0.15\linewidth}
	\centering \small{$\ell_1$ + Gaus.}
	\end{minipage}
	\begin{minipage}{0.15\linewidth}
	\centering \small{$\ell_{2,1}$ + Exp.}
	\end{minipage}
	\begin{minipage}{0.15\linewidth}
	\centering \small{$\ell_{2,1}$ + Gaus.}
	\end{minipage}\\
	\includegraphics[width = 0.18\linewidth]{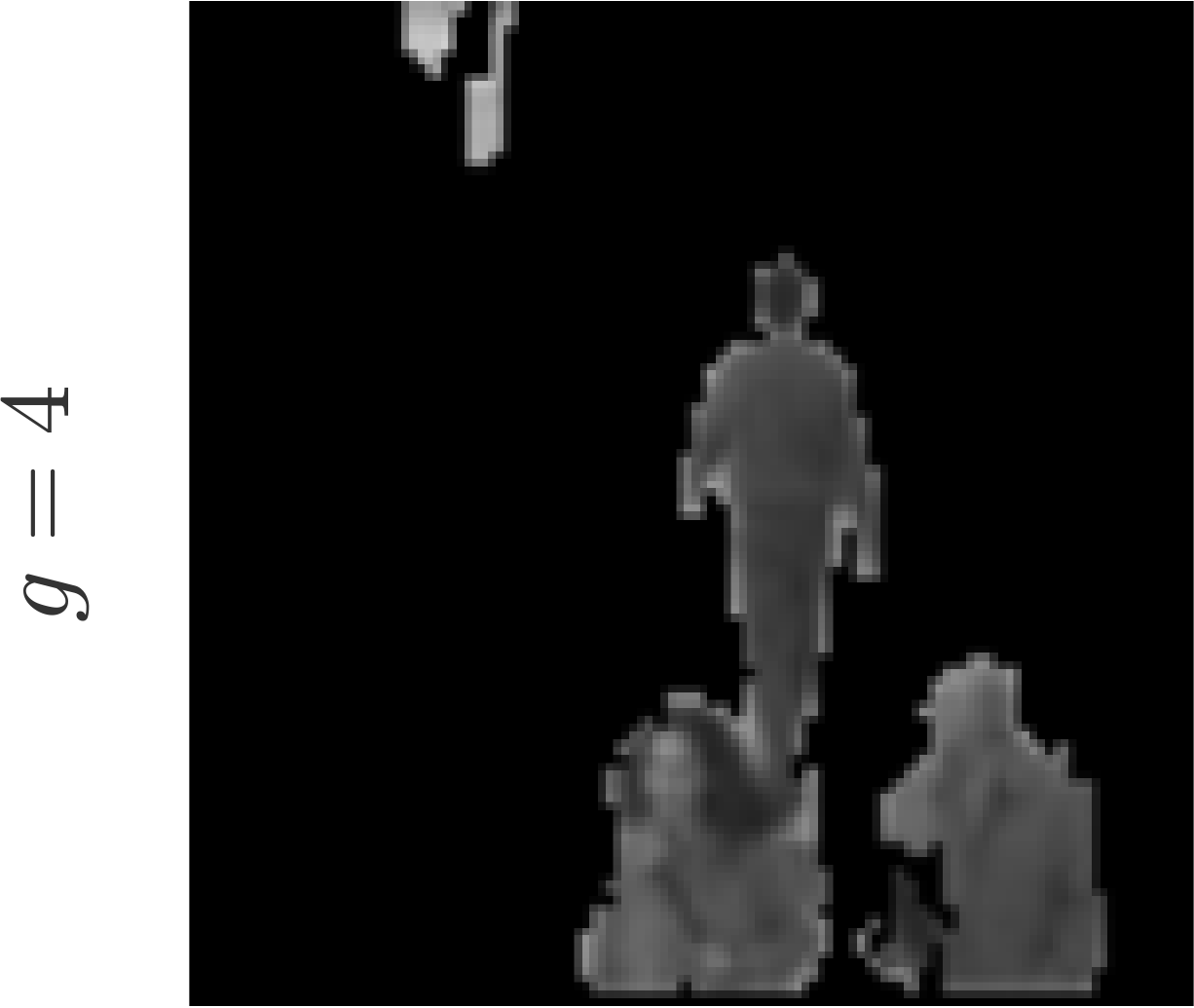}
	\includegraphics[width = 0.15\linewidth]{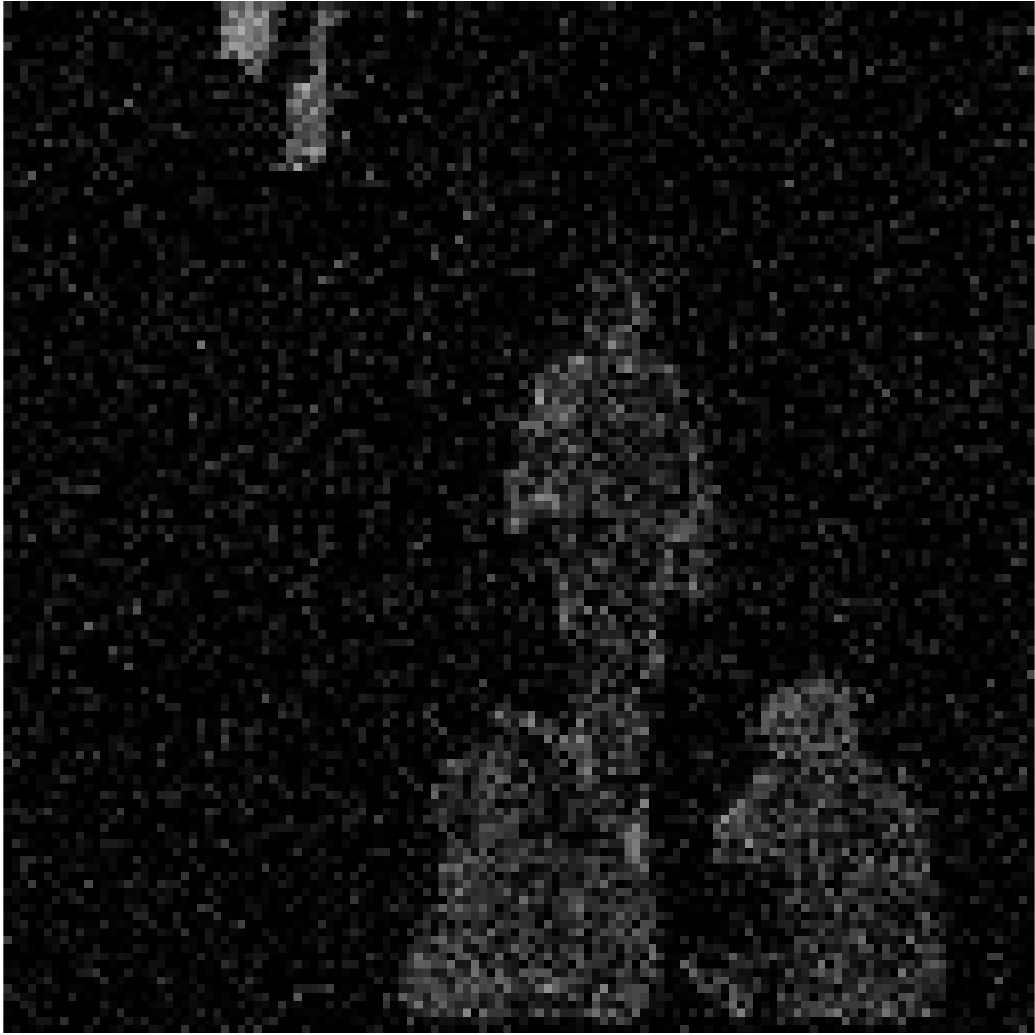}
	\includegraphics[width = 0.15\linewidth]{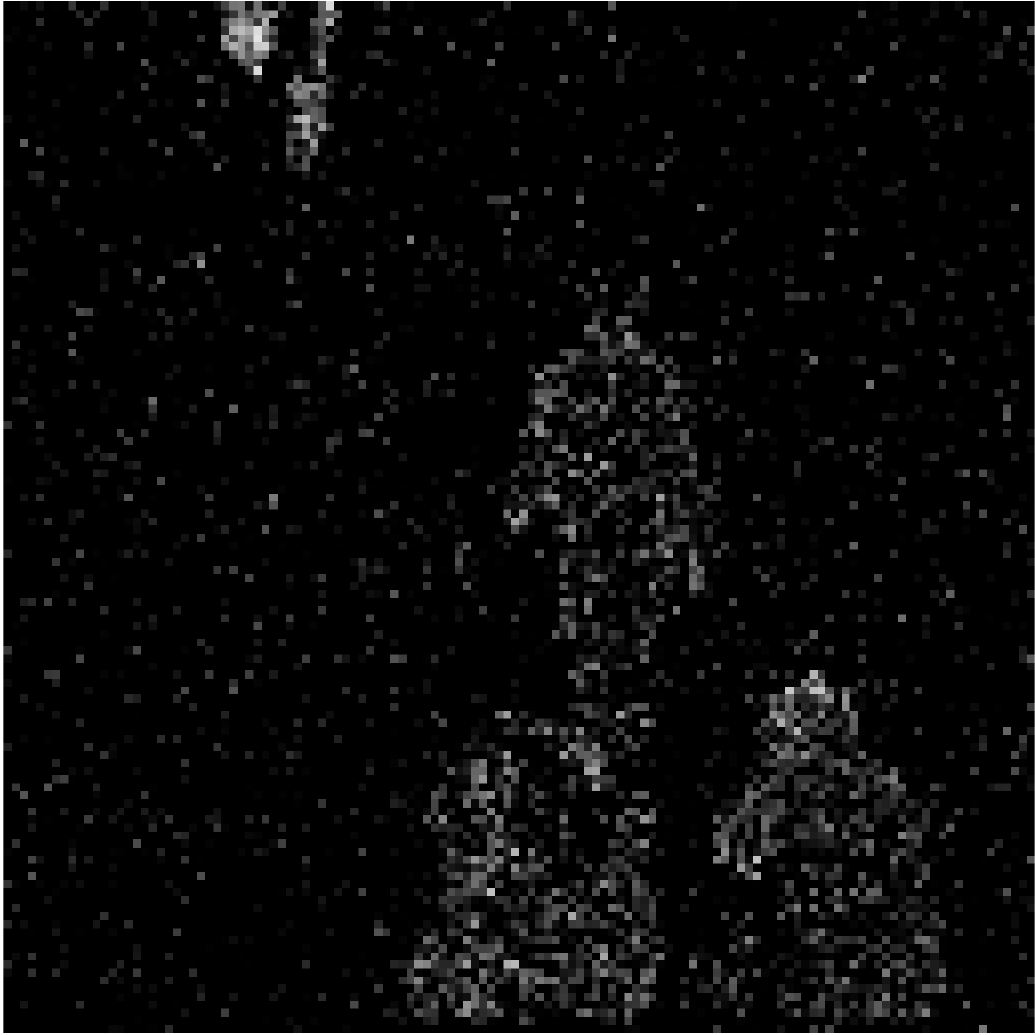}
	\includegraphics[width = 0.15\linewidth]{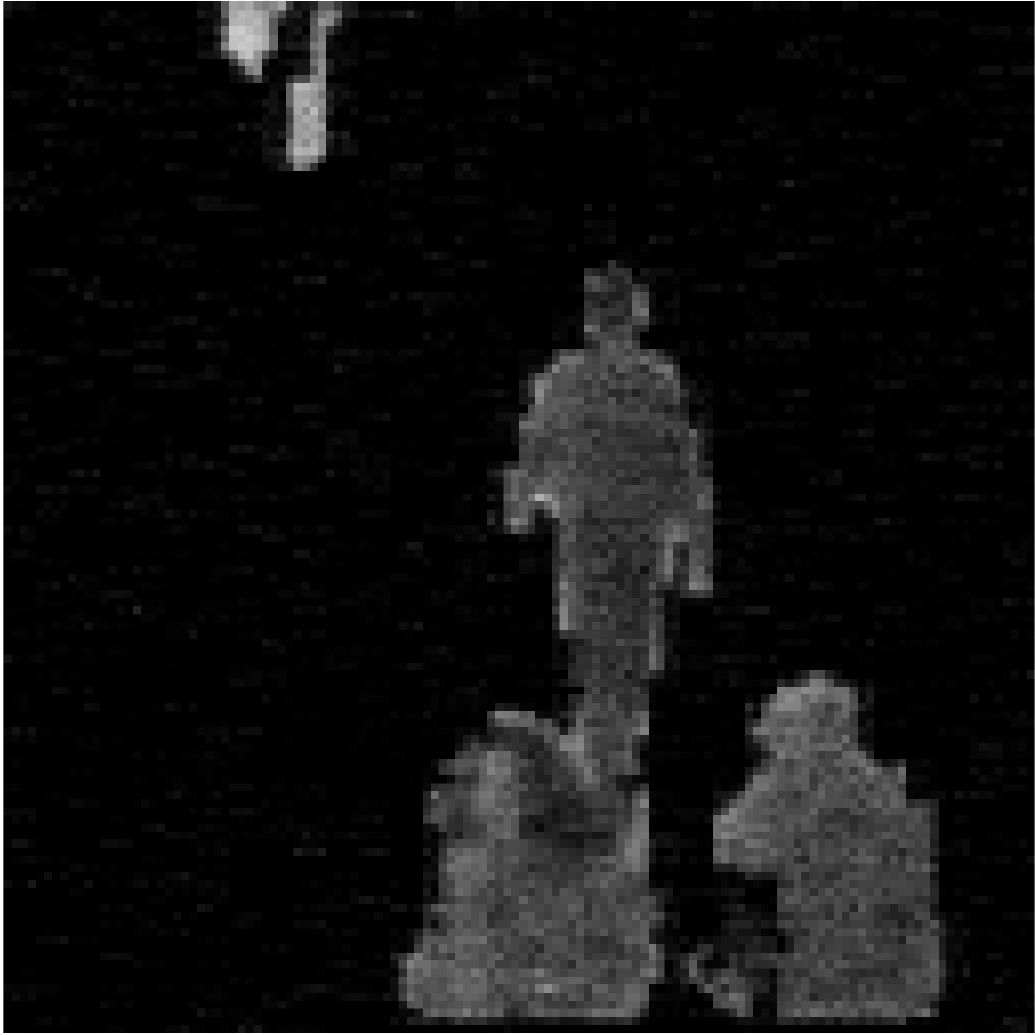}
	\includegraphics[width = 0.15\linewidth]{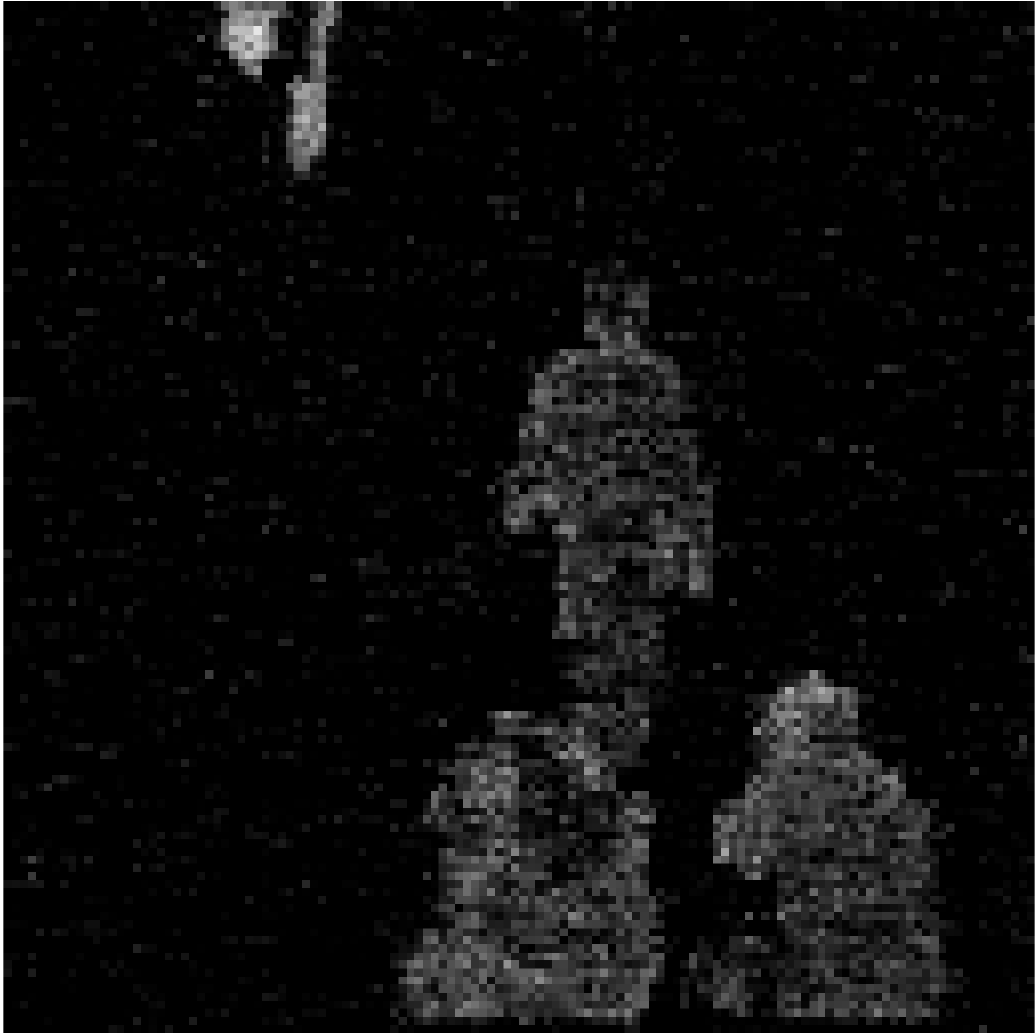} \vspace{-2mm} \\	
	\begin{minipage}{0.15\linewidth}
	\centering  \small{$\phantom{a}$}
	\end{minipage}	
	\begin{minipage}{0.15\linewidth}
	\centering  \small{$20.6 ~\rm{dB}$}
	\end{minipage}
	\begin{minipage}{0.15\linewidth}
	\centering \small{$18.5 ~\rm{dB}$}
	\end{minipage}
	\begin{minipage}{0.15\linewidth}
	\centering \small{$31.5 ~\rm{dB}$}
	\end{minipage}
	\begin{minipage}{0.15\linewidth}
	\centering \small{$24.8 ~\rm{dB}$}
	\end{minipage}\\
	\includegraphics[width = 0.18\linewidth]{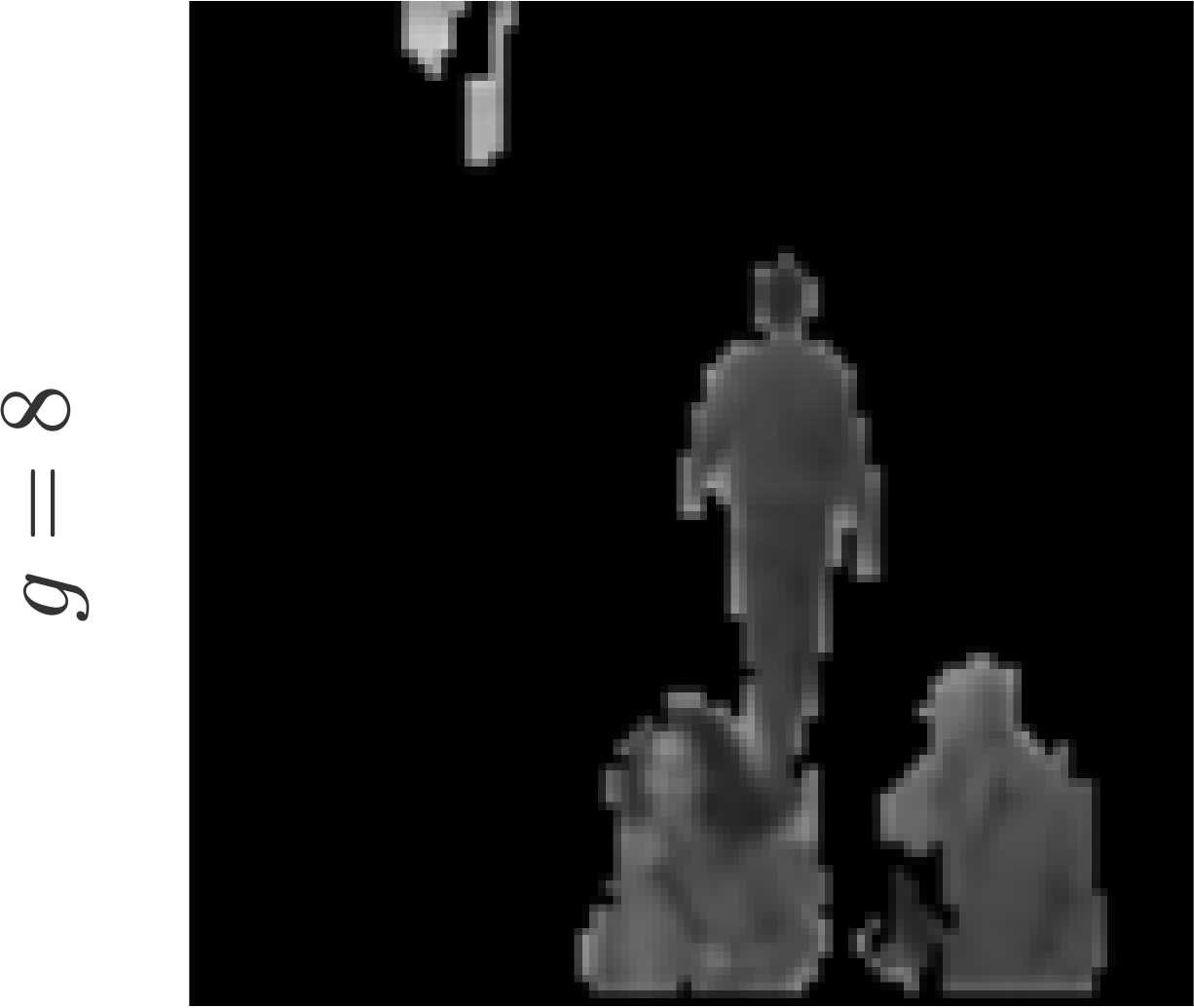}
	\includegraphics[width = 0.15\linewidth]{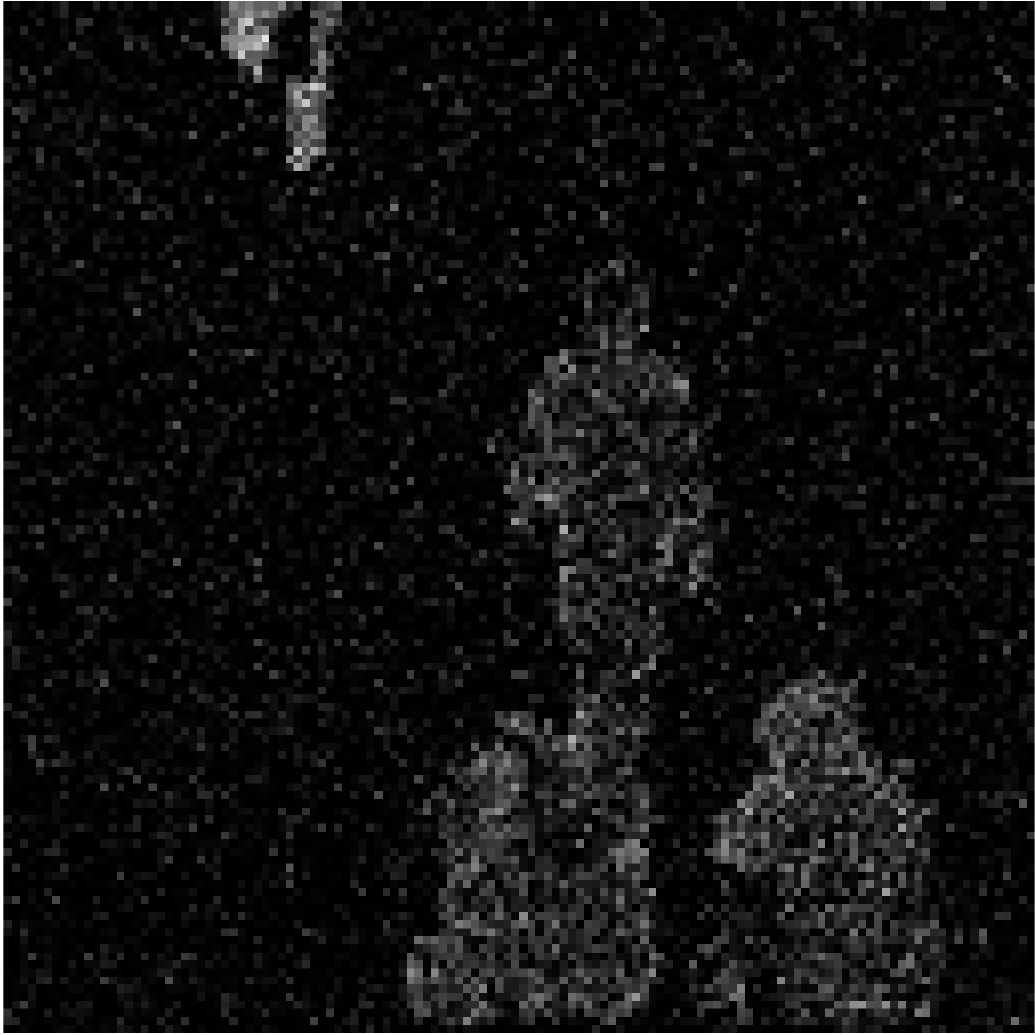}
	\includegraphics[width = 0.15\linewidth]{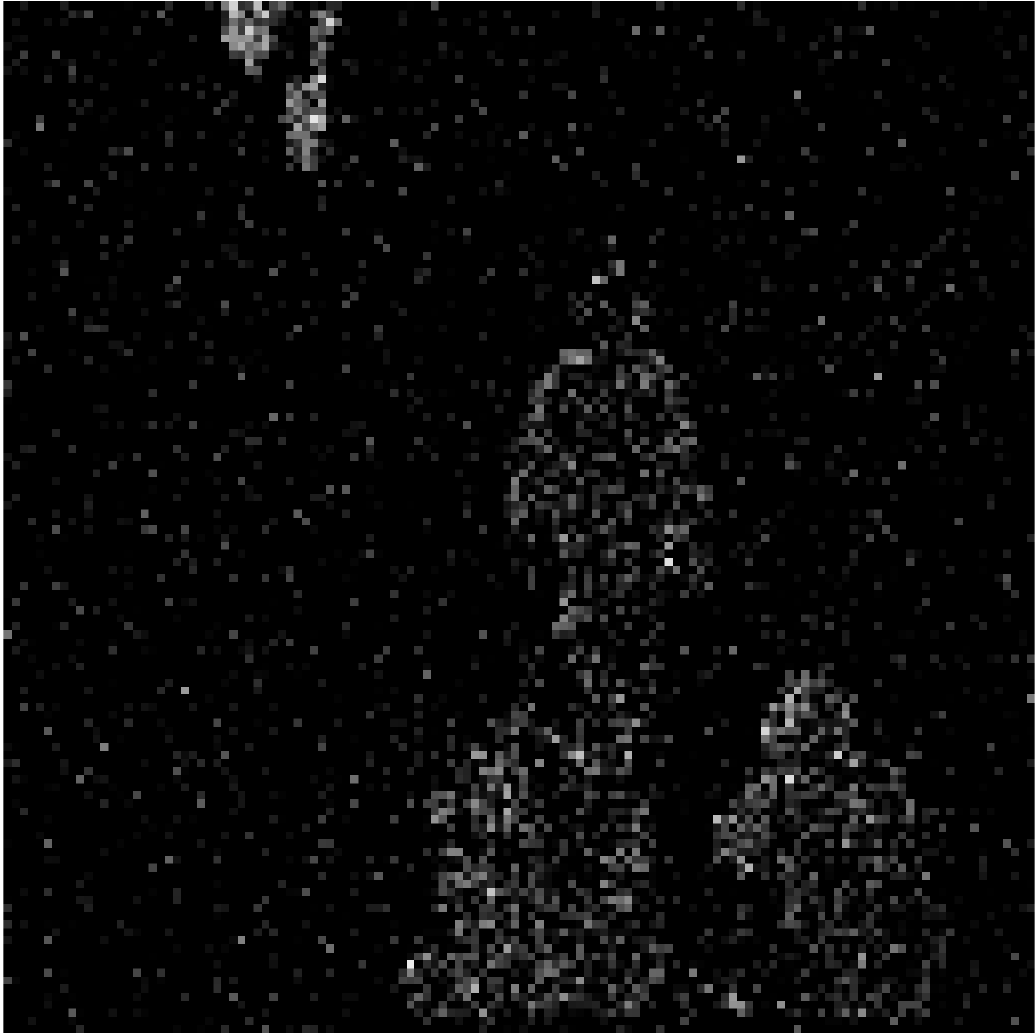}
	\includegraphics[width = 0.15\linewidth]{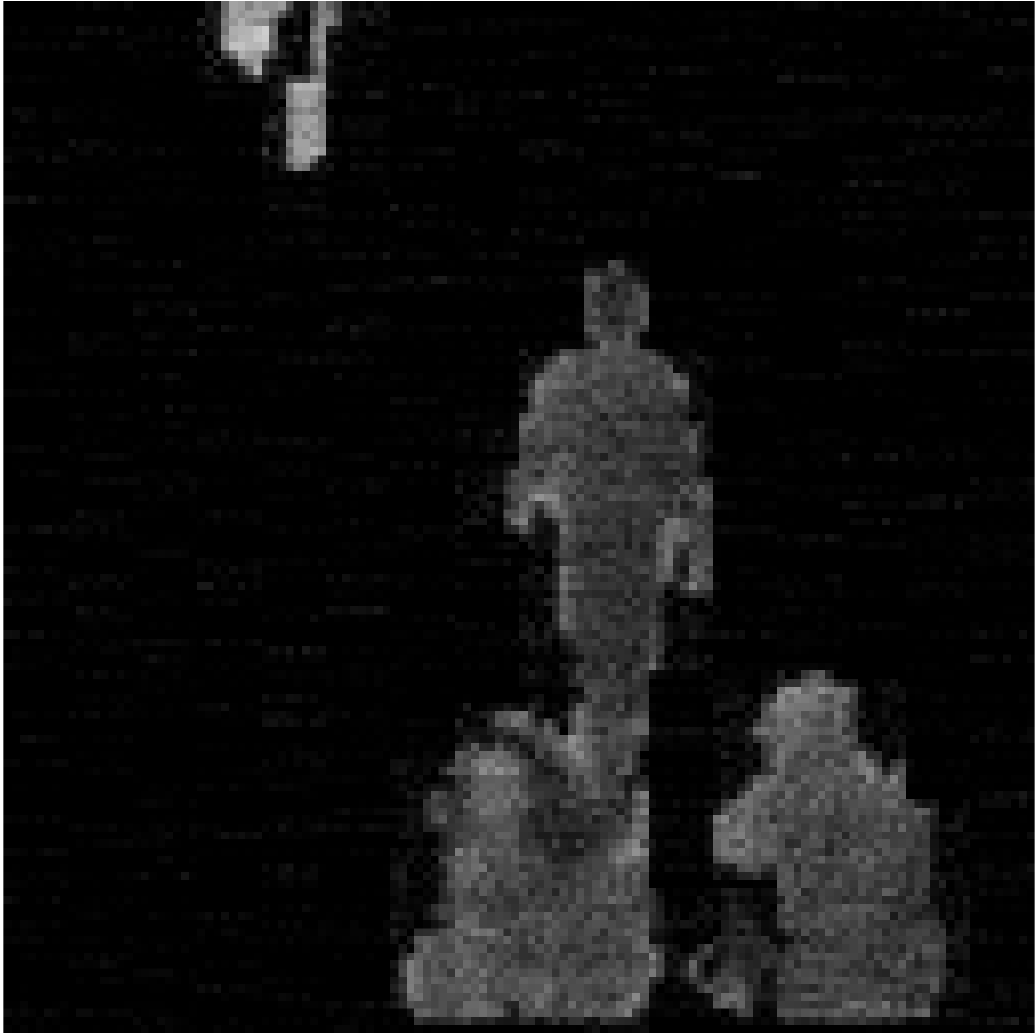}
	\includegraphics[width = 0.15\linewidth]{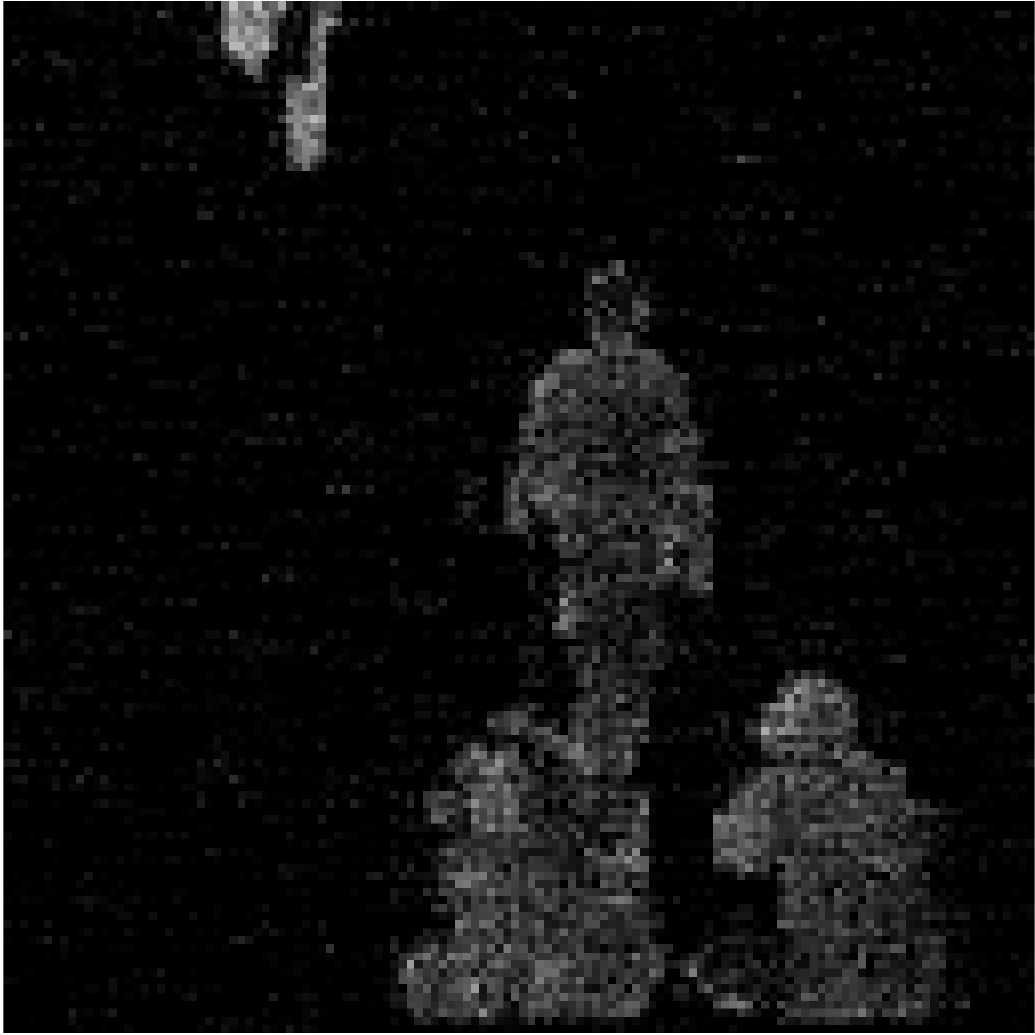} \vspace{-2mm} \\
	\begin{minipage}{0.15\linewidth}
	\centering  \small{$\phantom{a}$}
	\end{minipage}	
	\begin{minipage}{0.15\linewidth}
	\centering  \small{$20.6 ~\rm{dB}$}
	\end{minipage}
	\begin{minipage}{0.15\linewidth}
	\centering \small{$18.4 ~\rm{dB}$}
	\end{minipage}
	\begin{minipage}{0.15\linewidth}
	\centering \small{$30.8 ~\rm{dB}$}
	\end{minipage}
	\begin{minipage}{0.15\linewidth}
	\centering \small{$23.3 ~\rm{dB}$}
	\end{minipage}\\
	\includegraphics[width = 0.18\linewidth]{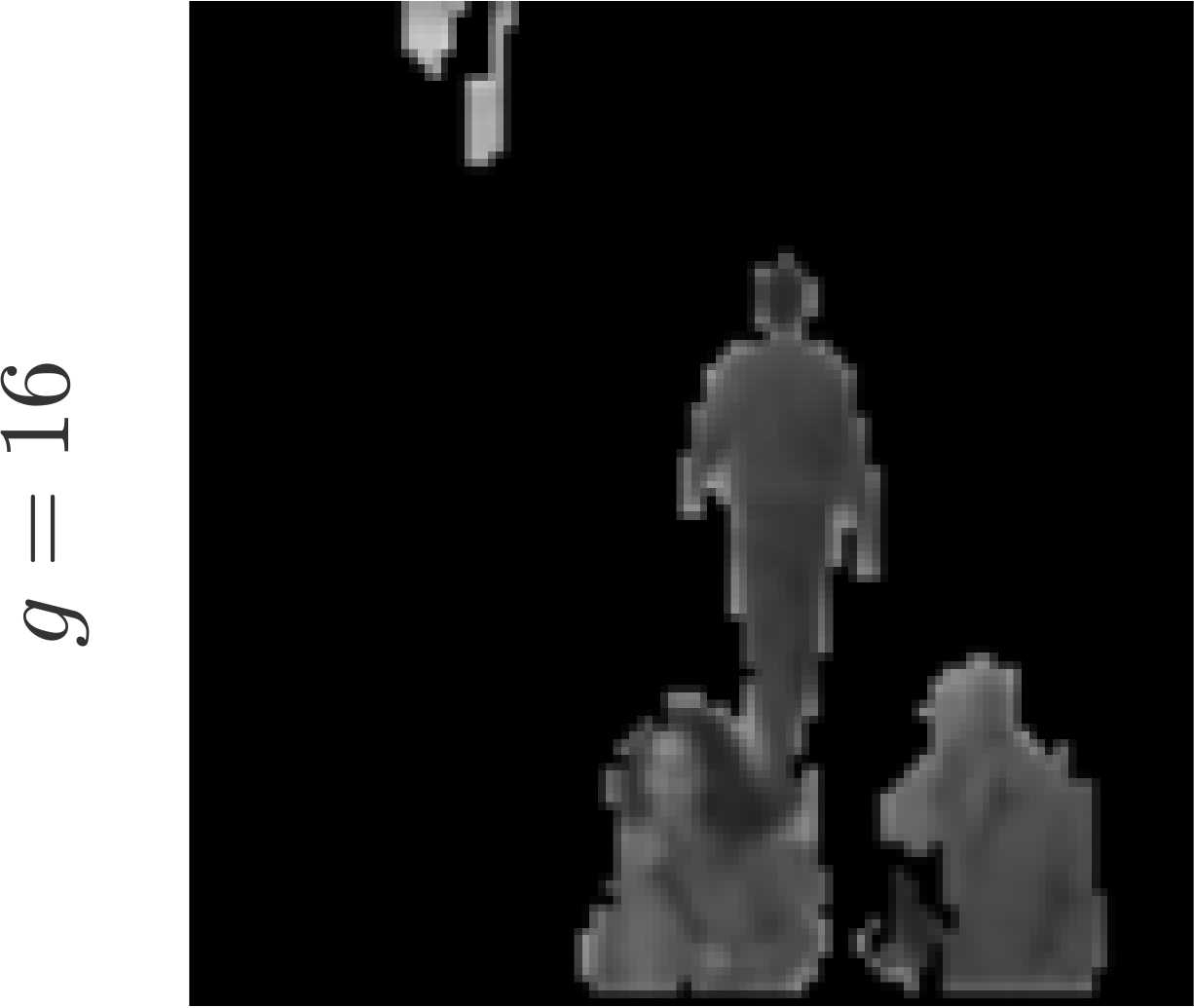}
	\includegraphics[width = 0.15\linewidth]{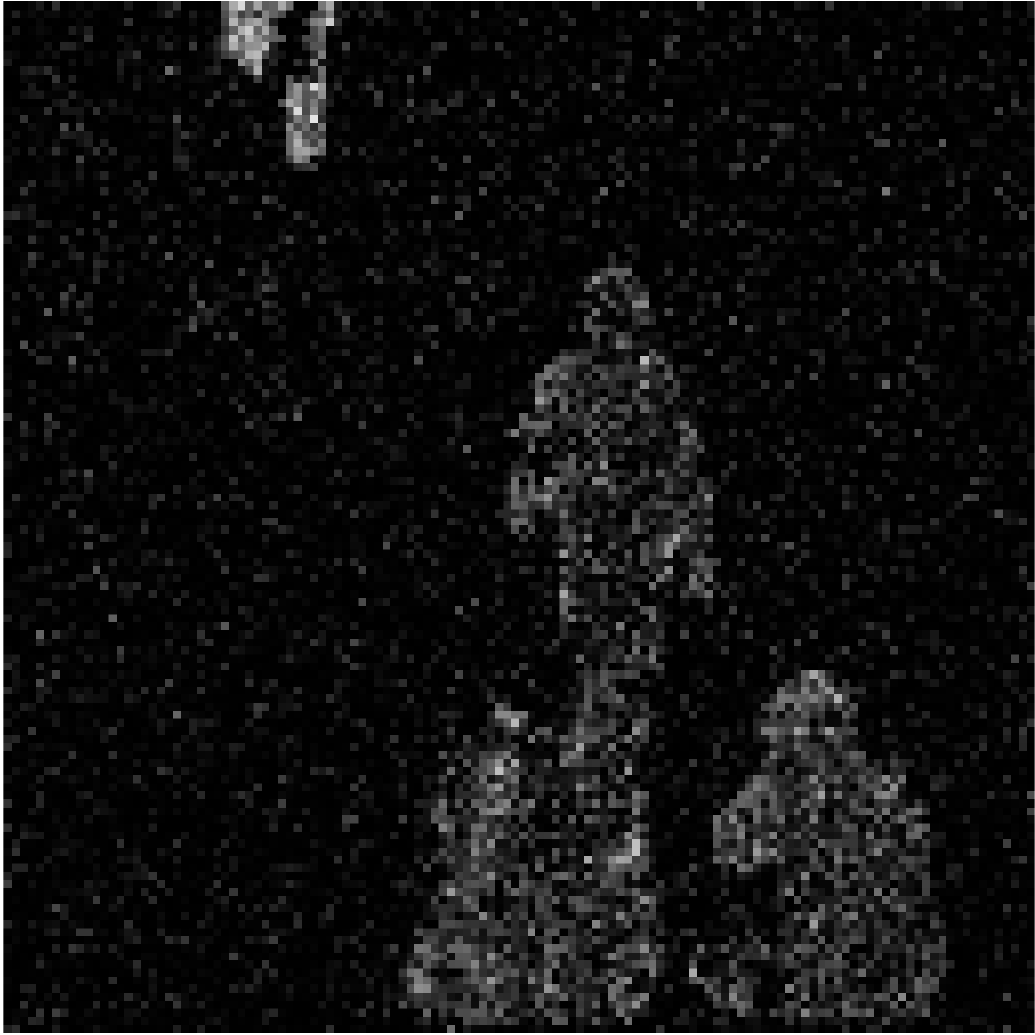}
	\includegraphics[width = 0.15\linewidth]{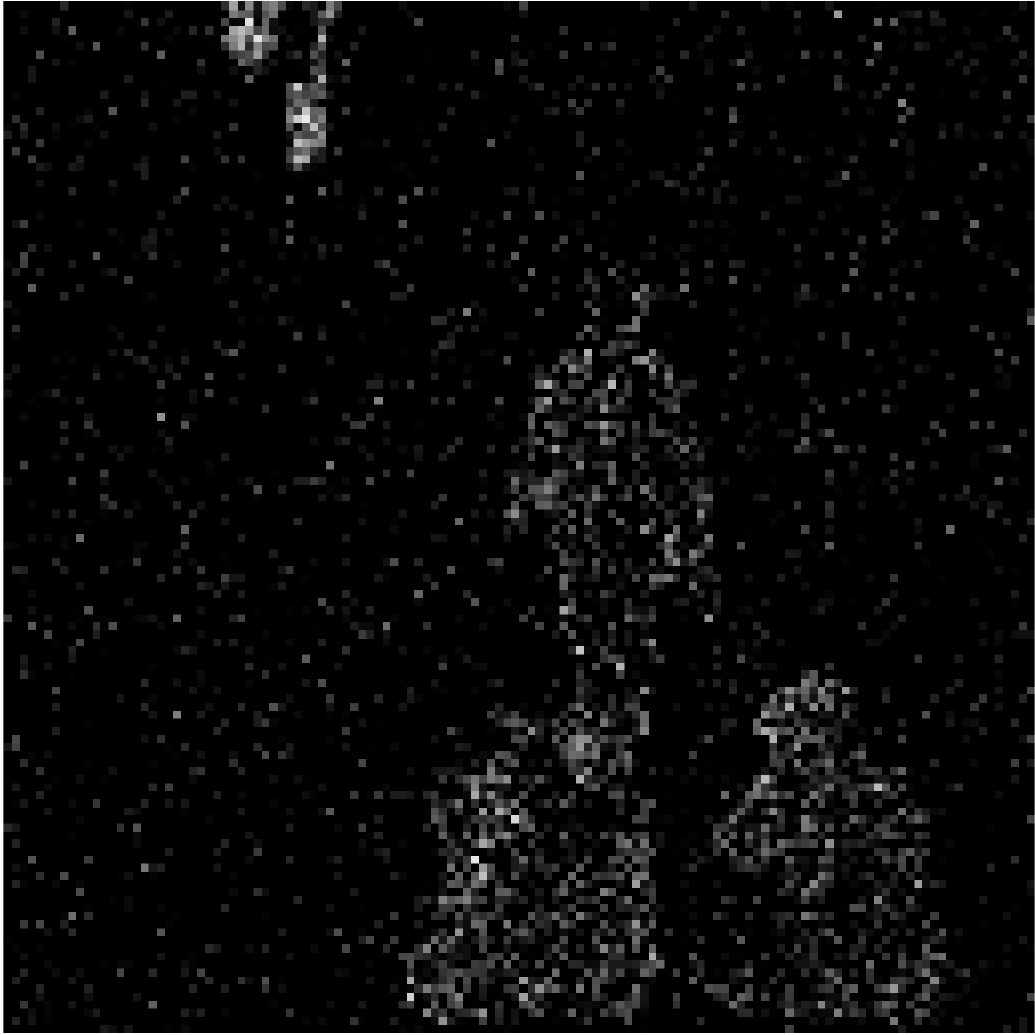}
	\includegraphics[width = 0.15\linewidth]{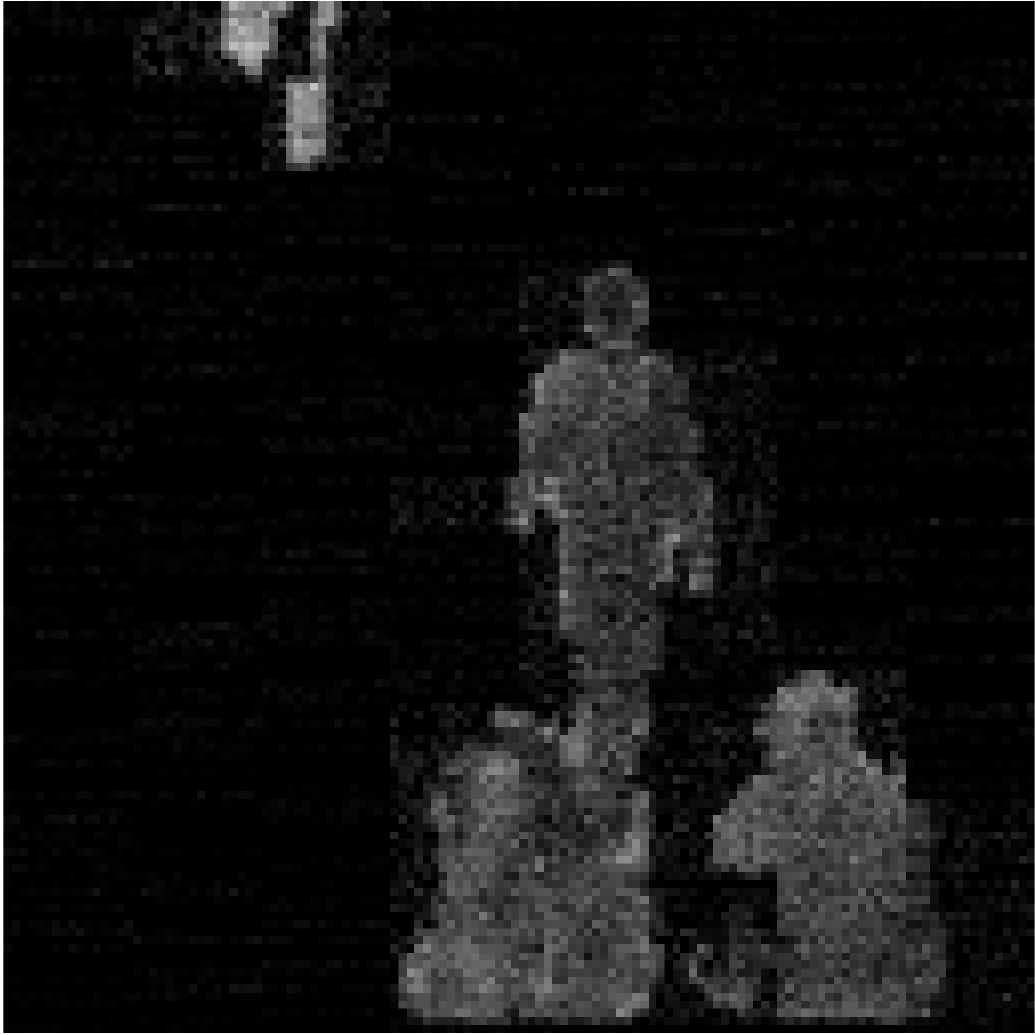}
	\includegraphics[width = 0.15\linewidth]{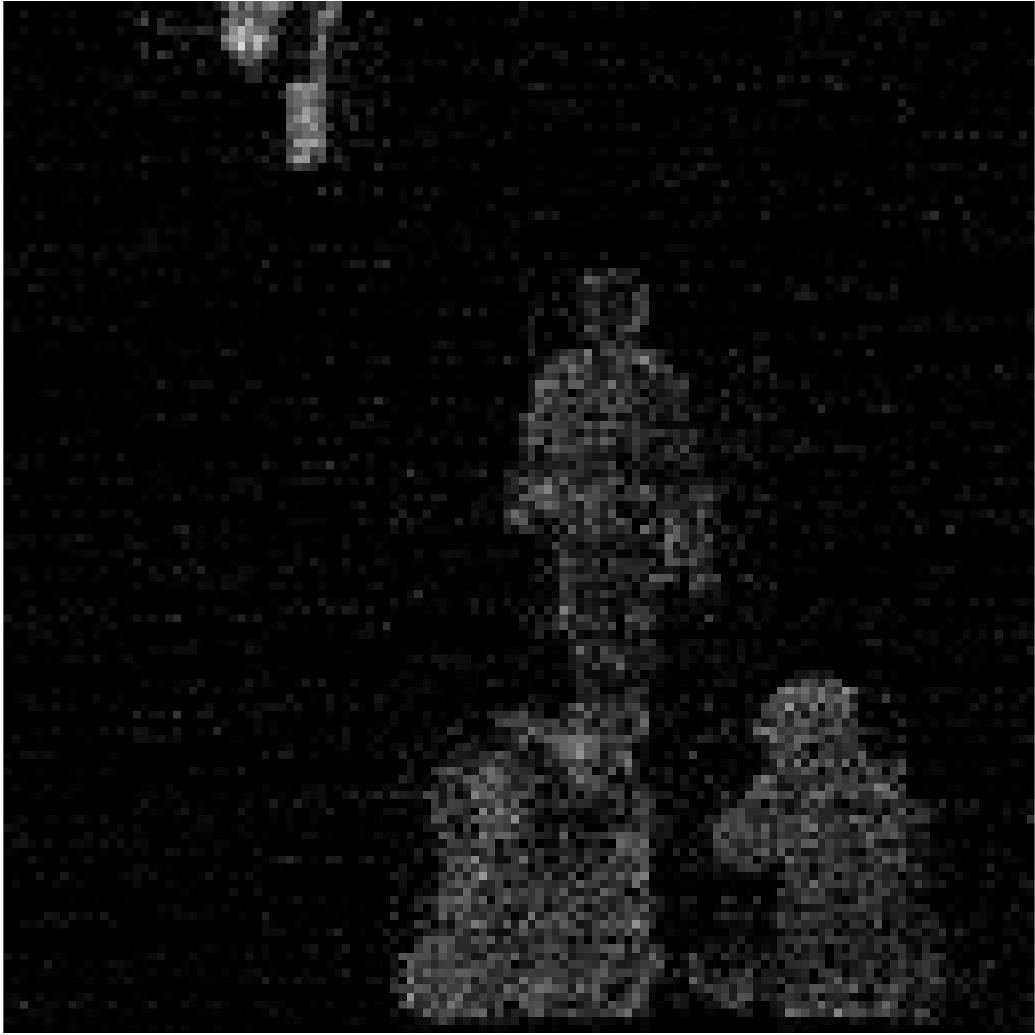} \vspace{-2mm} \\
	\begin{minipage}{0.15\linewidth}
	\centering  \small{$\phantom{a}$}
	\end{minipage}	
	\begin{minipage}{0.15\linewidth}
	\centering  \small{$20.7 ~\rm{dB}$}
	\end{minipage}
	\begin{minipage}{0.15\linewidth}
	\centering \small{$18.5 ~\rm{dB}$}
	\end{minipage}
	\begin{minipage}{0.15\linewidth}
	\centering \small{$28.0 ~\rm{dB}$}
	\end{minipage}
	\begin{minipage}{0.15\linewidth}
	\centering \small{$22.2 ~\rm{dB}$}
	\end{minipage} 

\caption{Results from real data. Representative examples of subtracted frame recovery from compressed measurements. Here,  $\numsam = \lceil 0.3 \cdot \dim \rceil $ measurements are observed for $\dim = 2^{16}$. From top to bottom, each line corresponds to block sparse model $\model$ with groups of consecutive indices, where $g = 4$, $g = 8$, and $g = 16$, respectively. One can observe that one obtains worse recovery as the group size increases; thus a model with groups $g = 4$ is a good choice for this case. } {\label{fig:003}}
\end{figure*}

\begin{figure*}[b]
\centering
	\begin{minipage}{0.15\linewidth}
	\centering \small{Original}
	\end{minipage}
	\begin{minipage}{0.15\linewidth}
	\centering \small{$\ell_1$ + Exp.}
	\end{minipage}
	\begin{minipage}{0.15\linewidth}
	\centering \small{$\ell_1$ + Gaus.}
	\end{minipage}
	\begin{minipage}{0.15\linewidth}
	\centering \small{$\ell_{2,1}$ + Exp.}
	\end{minipage}
	\begin{minipage}{0.15\linewidth}
	\centering \small{$\ell_{2,1}$ + Gaus.}
	\end{minipage}\\
	\includegraphics[width = 0.18\linewidth]{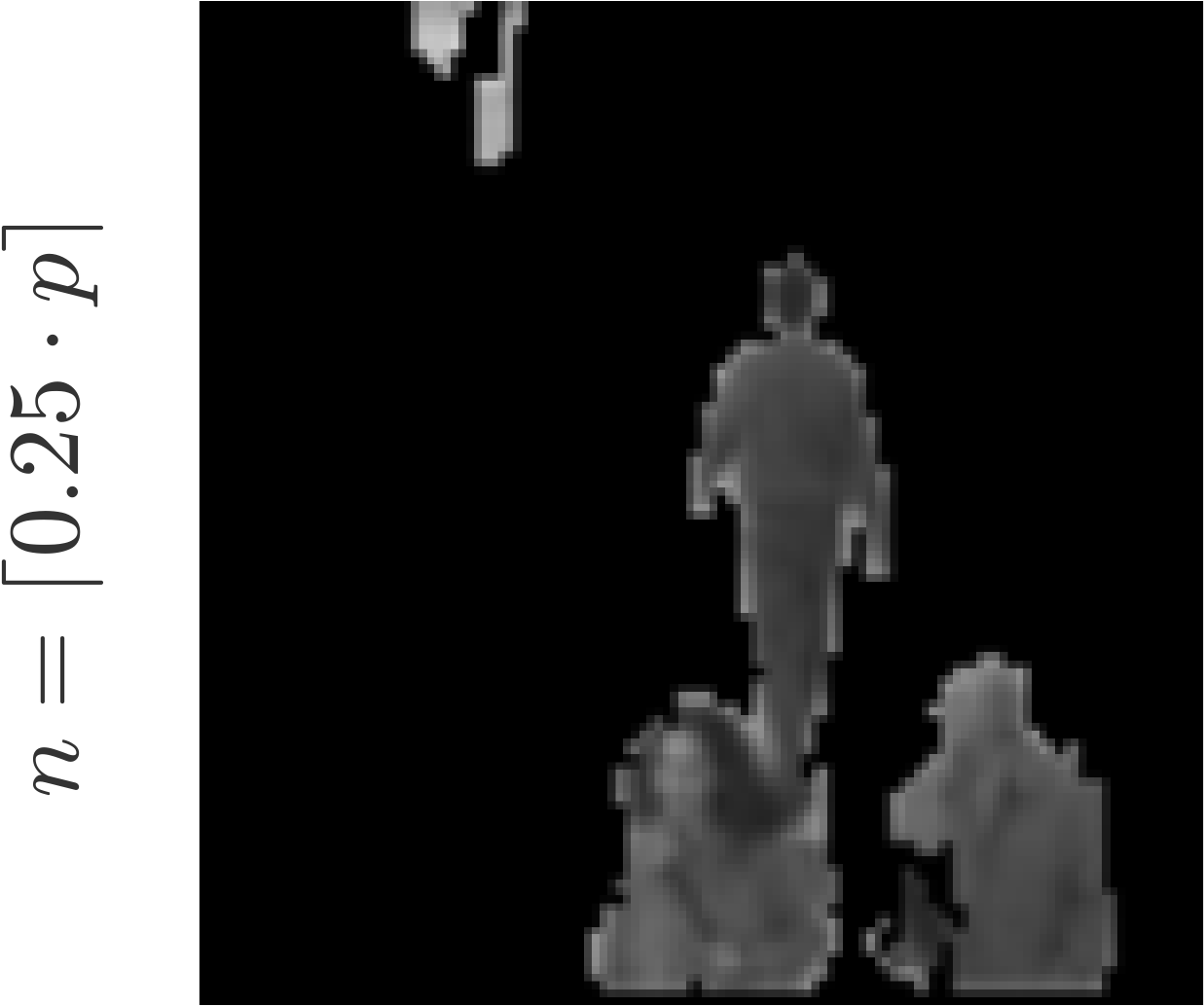}
	\includegraphics[width = 0.15\linewidth]{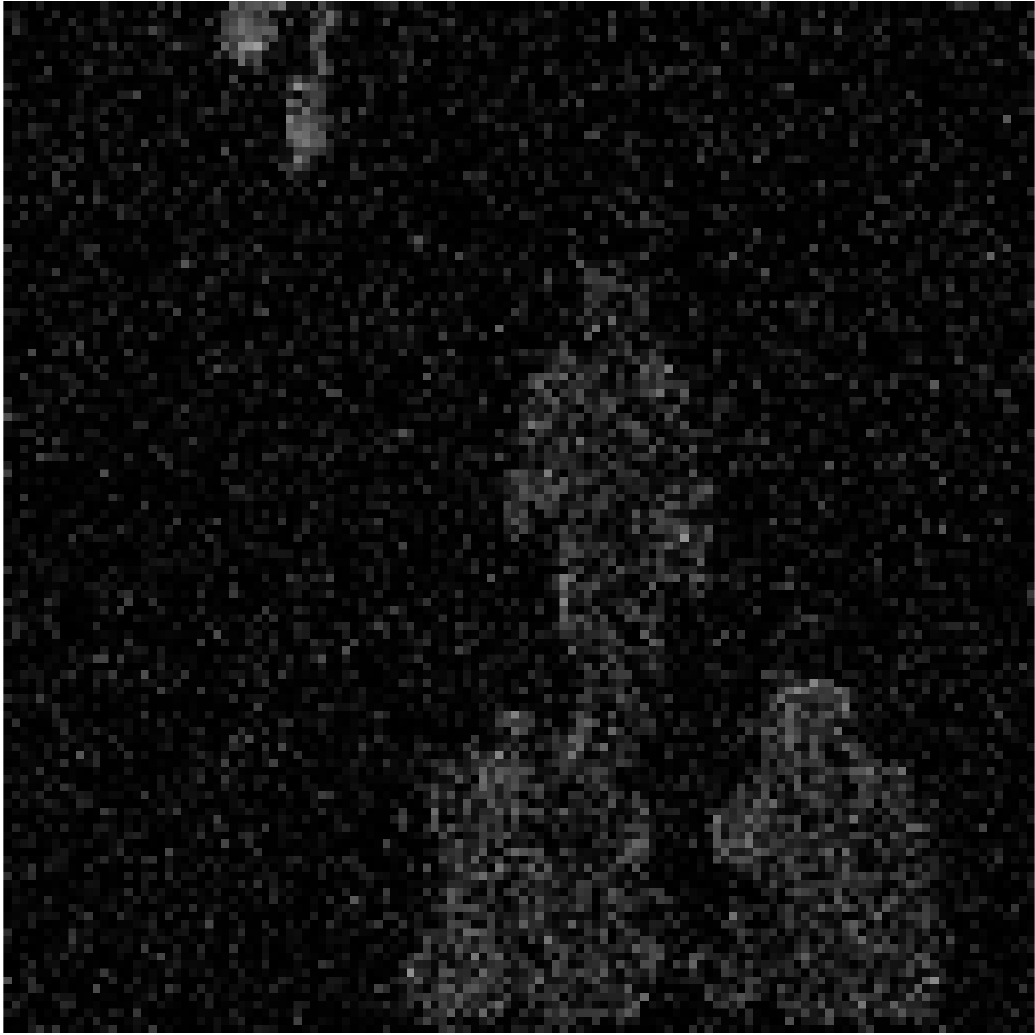}
	\includegraphics[width = 0.15\linewidth]{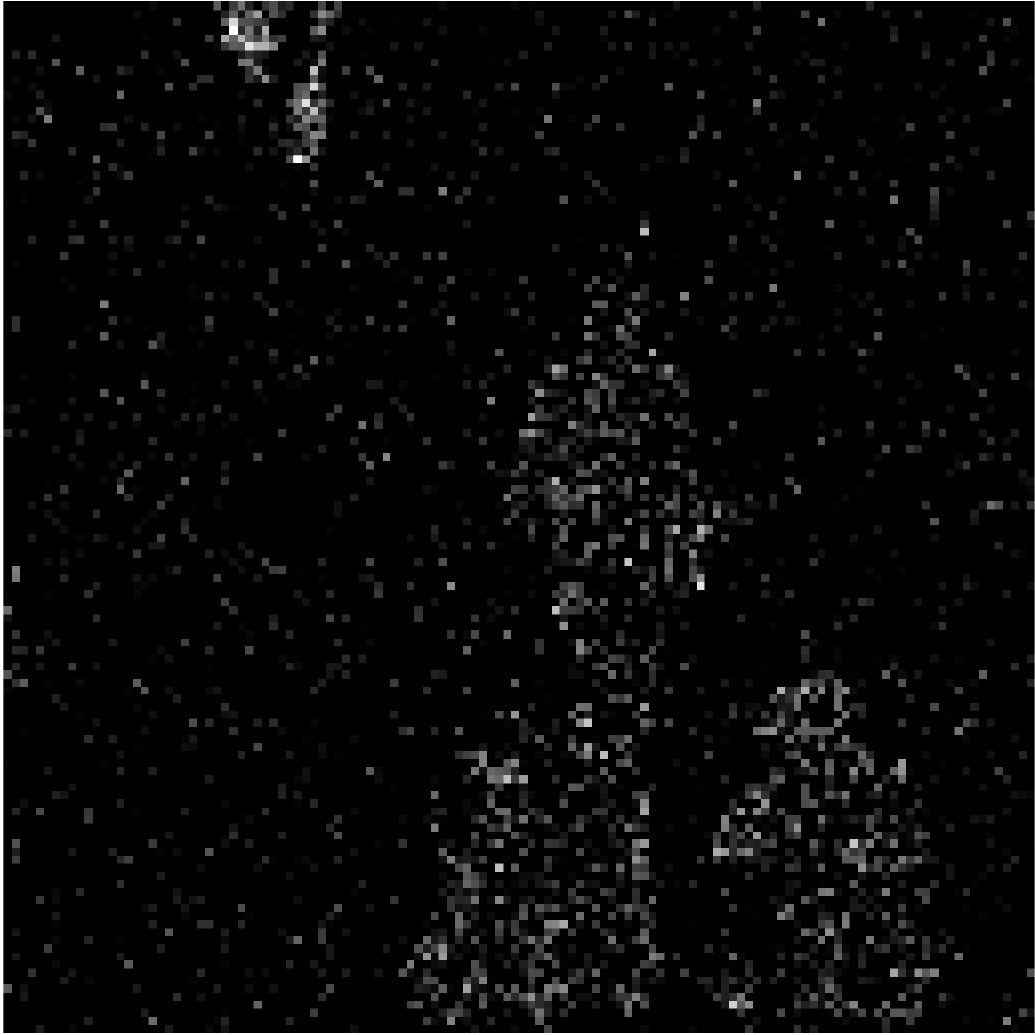}
	\includegraphics[width = 0.15\linewidth]{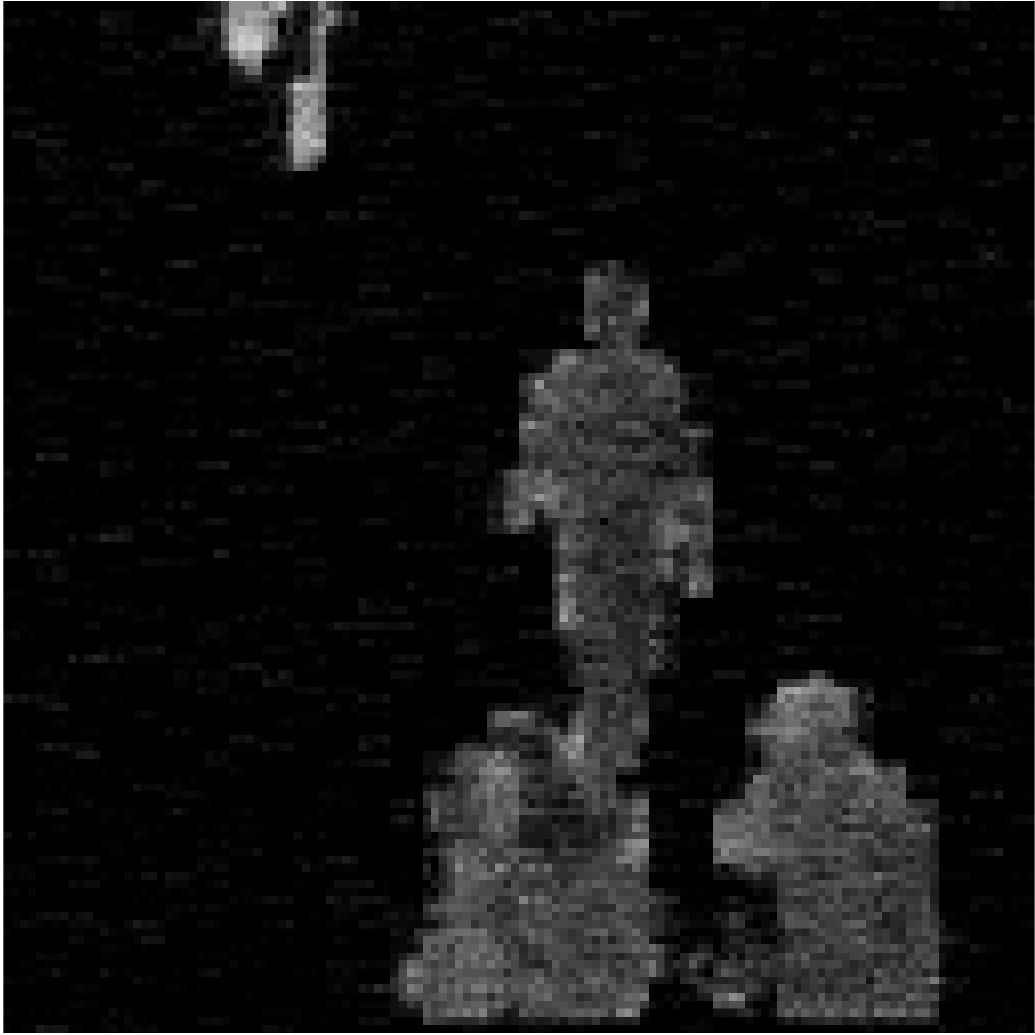}
	\includegraphics[width = 0.15\linewidth]{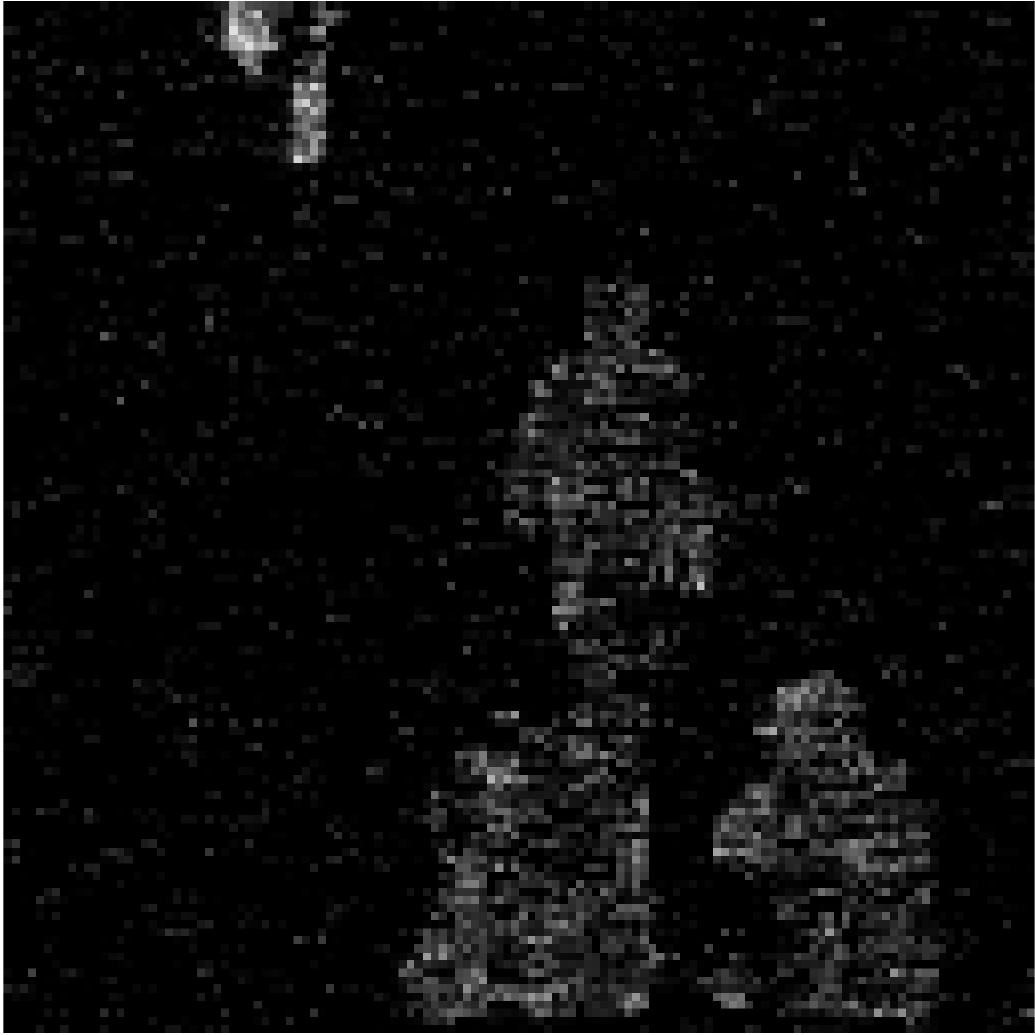} \vspace{-2mm} \\	
	\begin{minipage}{0.15\linewidth}
	\centering  \small{$\phantom{a}$}
	\end{minipage}	
	\begin{minipage}{0.15\linewidth}
	\centering  \small{$19.8 ~\rm{dB}$}
	\end{minipage}
	\begin{minipage}{0.15\linewidth}
	\centering \small{$17.8 ~\rm{dB}$}
	\end{minipage}
	\begin{minipage}{0.15\linewidth}
	\centering \small{$29.0 ~\rm{dB}$}
	\end{minipage}
	\begin{minipage}{0.15\linewidth}
	\centering \small{$20.3 ~\rm{dB}$}
	\end{minipage}\\
	\includegraphics[width = 0.18\linewidth]{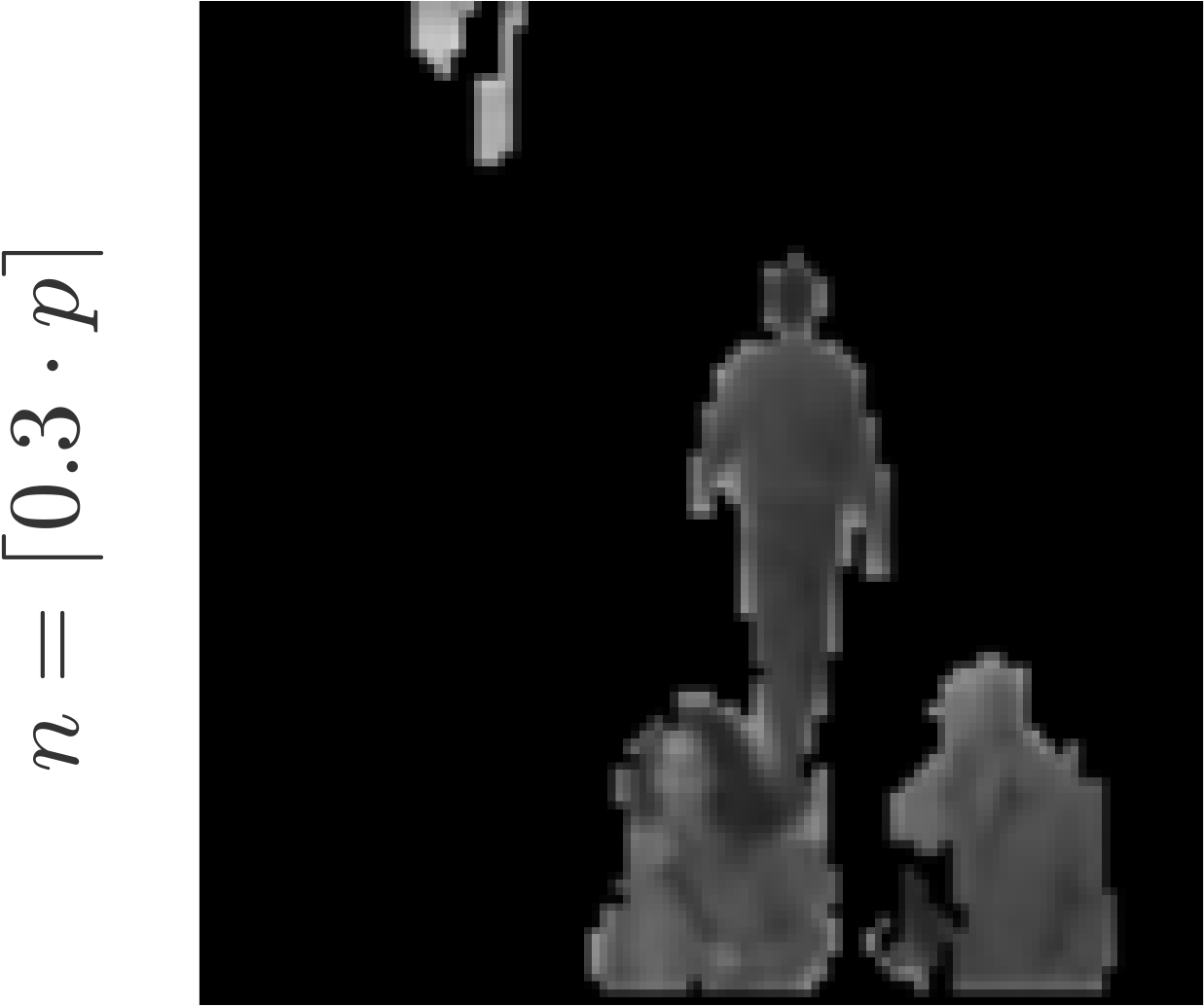}
	\includegraphics[width = 0.15\linewidth]{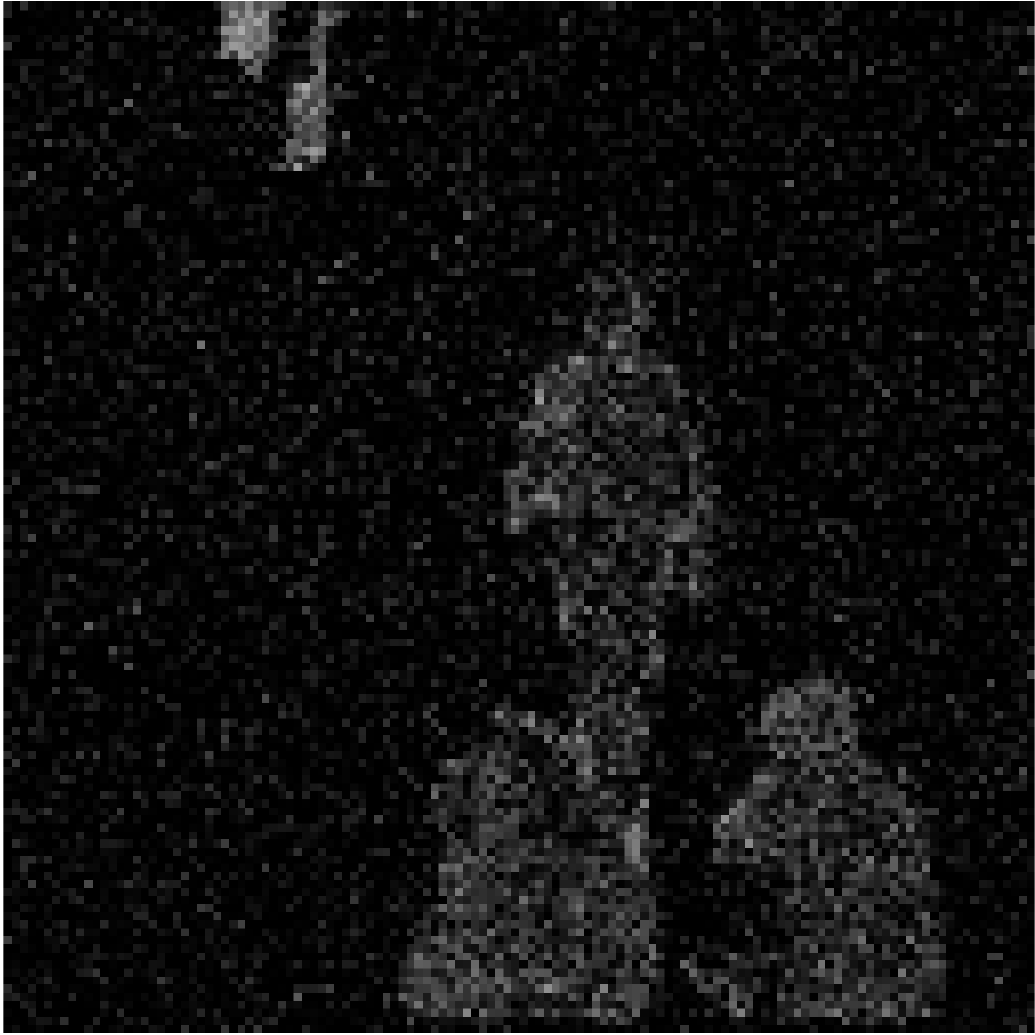}
	\includegraphics[width = 0.15\linewidth]{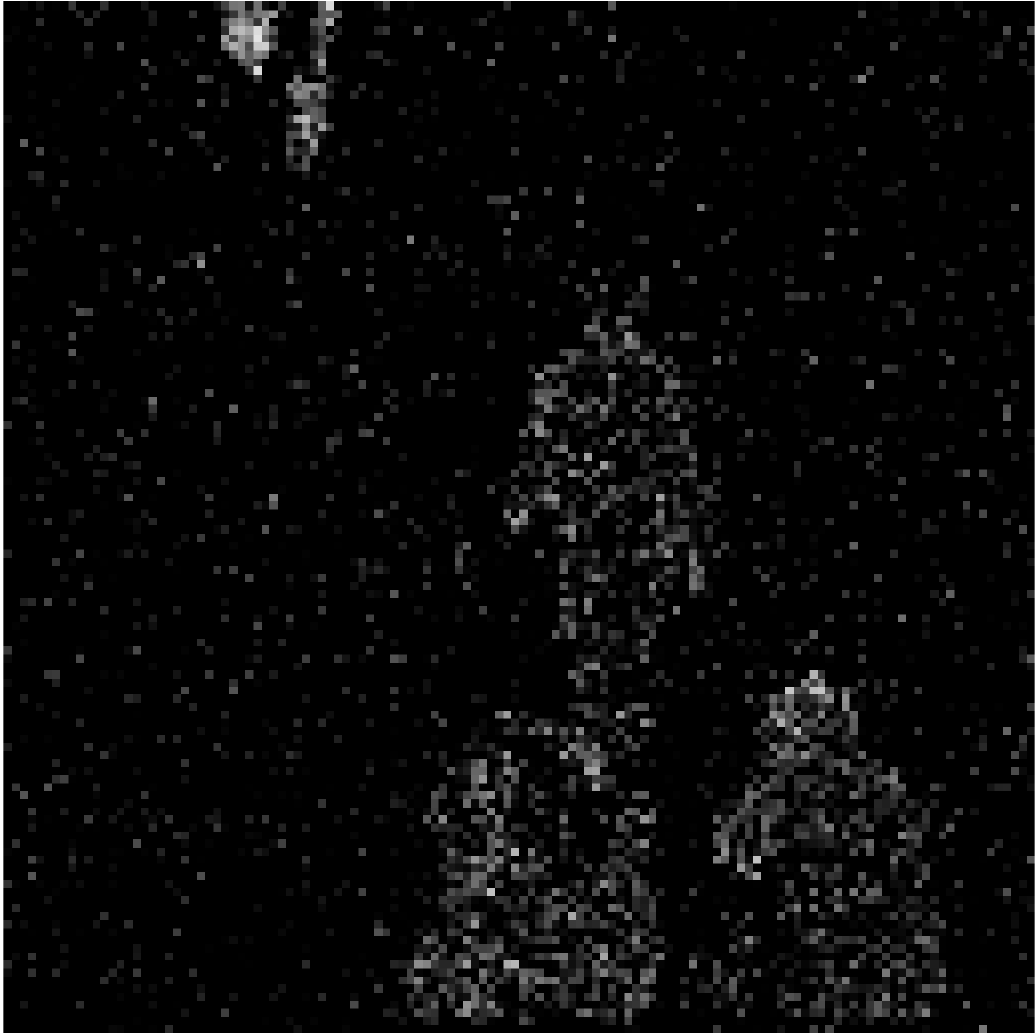}
	\includegraphics[width = 0.15\linewidth]{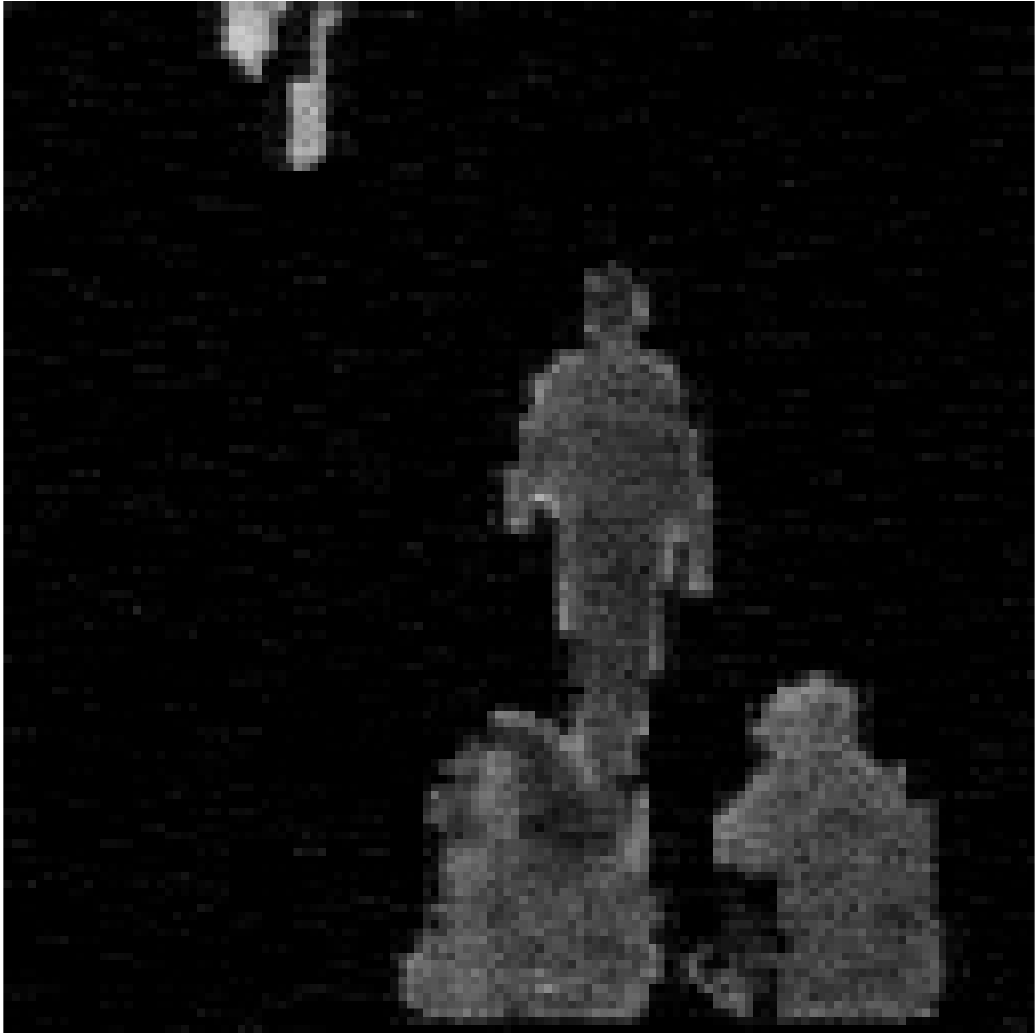}
	\includegraphics[width = 0.15\linewidth]{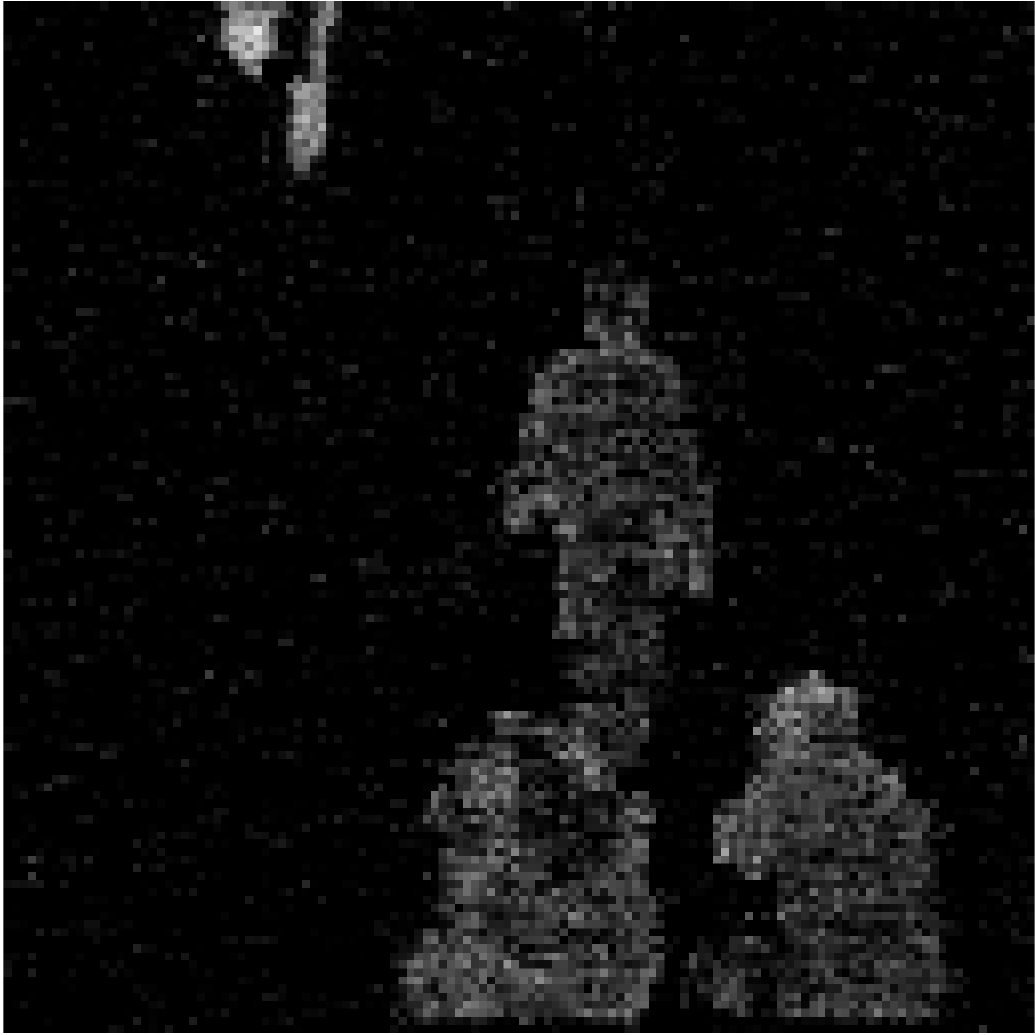} \vspace{-2mm} \\
	\begin{minipage}{0.15\linewidth}
	\centering  \small{$\phantom{a}$}
	\end{minipage}	
	\begin{minipage}{0.15\linewidth}
	\centering  \small{$20.6 ~\rm{dB}$}
	\end{minipage}
	\begin{minipage}{0.15\linewidth}
	\centering \small{$18.5 ~\rm{dB}$}
	\end{minipage}
	\begin{minipage}{0.15\linewidth}
	\centering \small{$31.5 ~\rm{dB}$}
	\end{minipage}
	\begin{minipage}{0.15\linewidth}
	\centering \small{$22.8 ~\rm{dB}$}
	\end{minipage}\\
	\includegraphics[width = 0.18\linewidth]{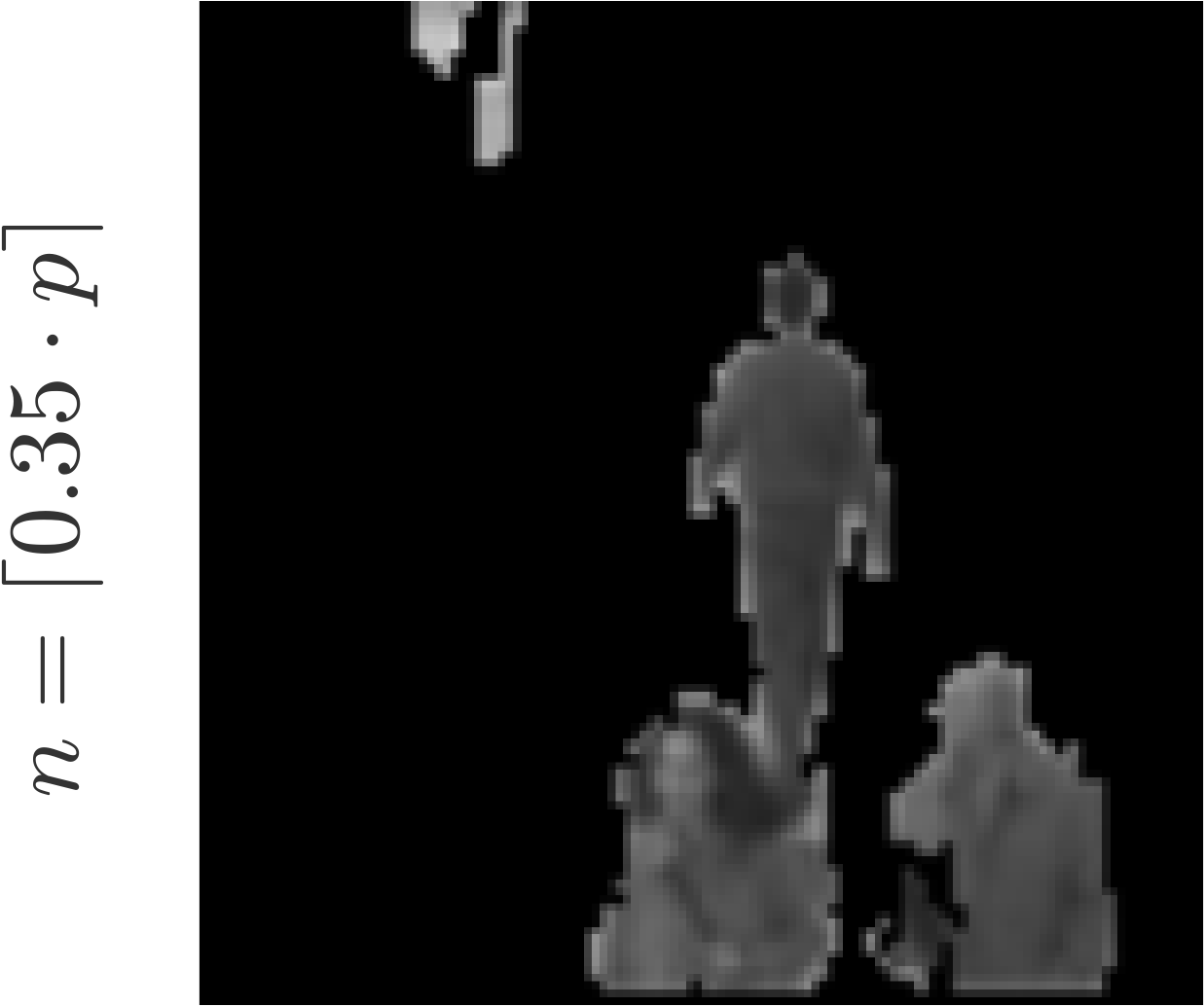}
	\includegraphics[width = 0.15\linewidth]{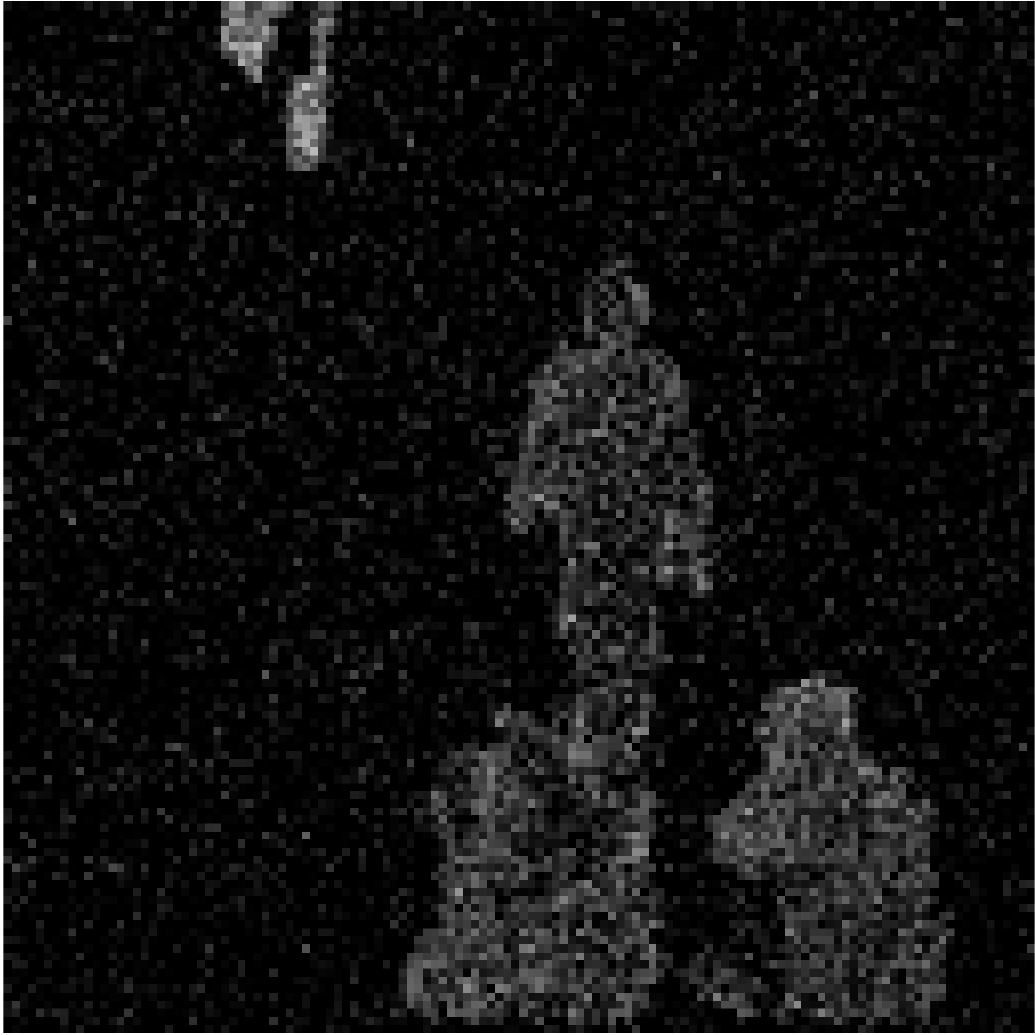}
	\includegraphics[width = 0.15\linewidth]{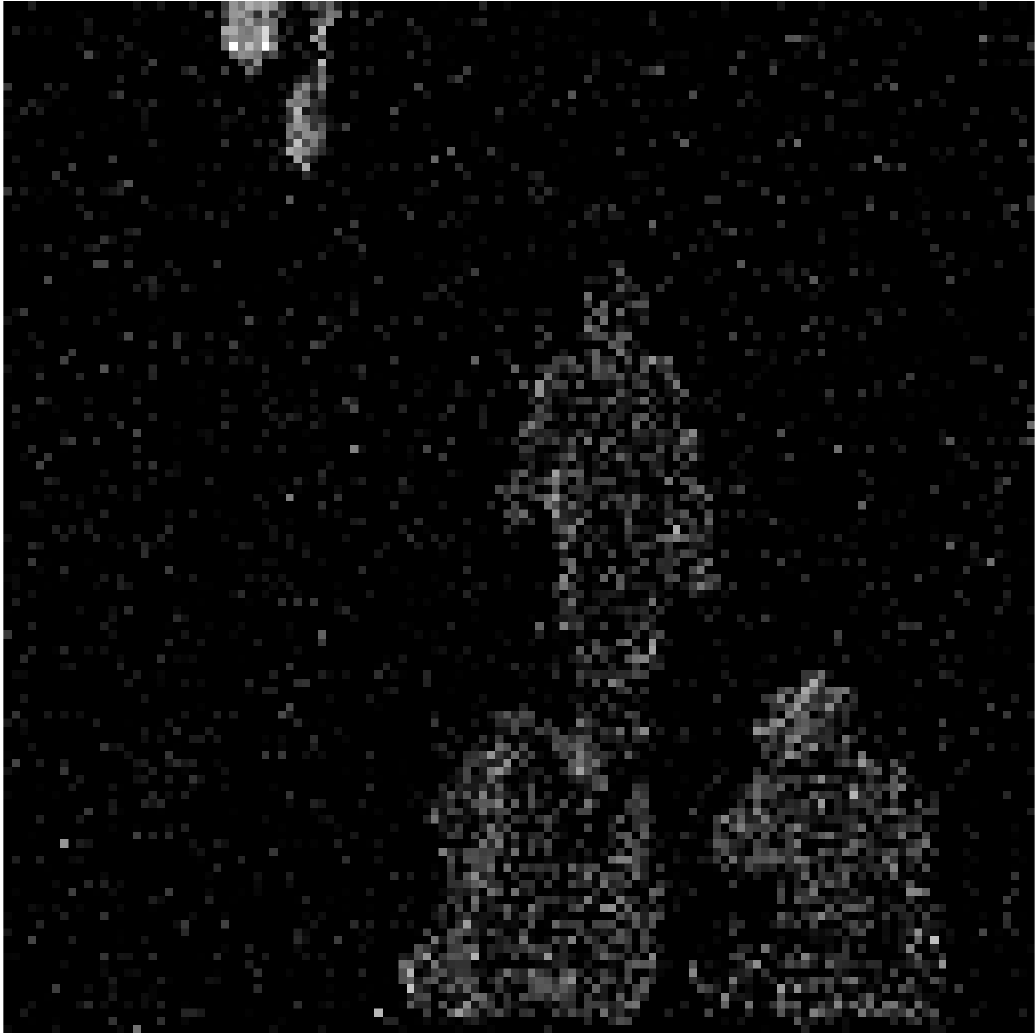}
	\includegraphics[width = 0.15\linewidth]{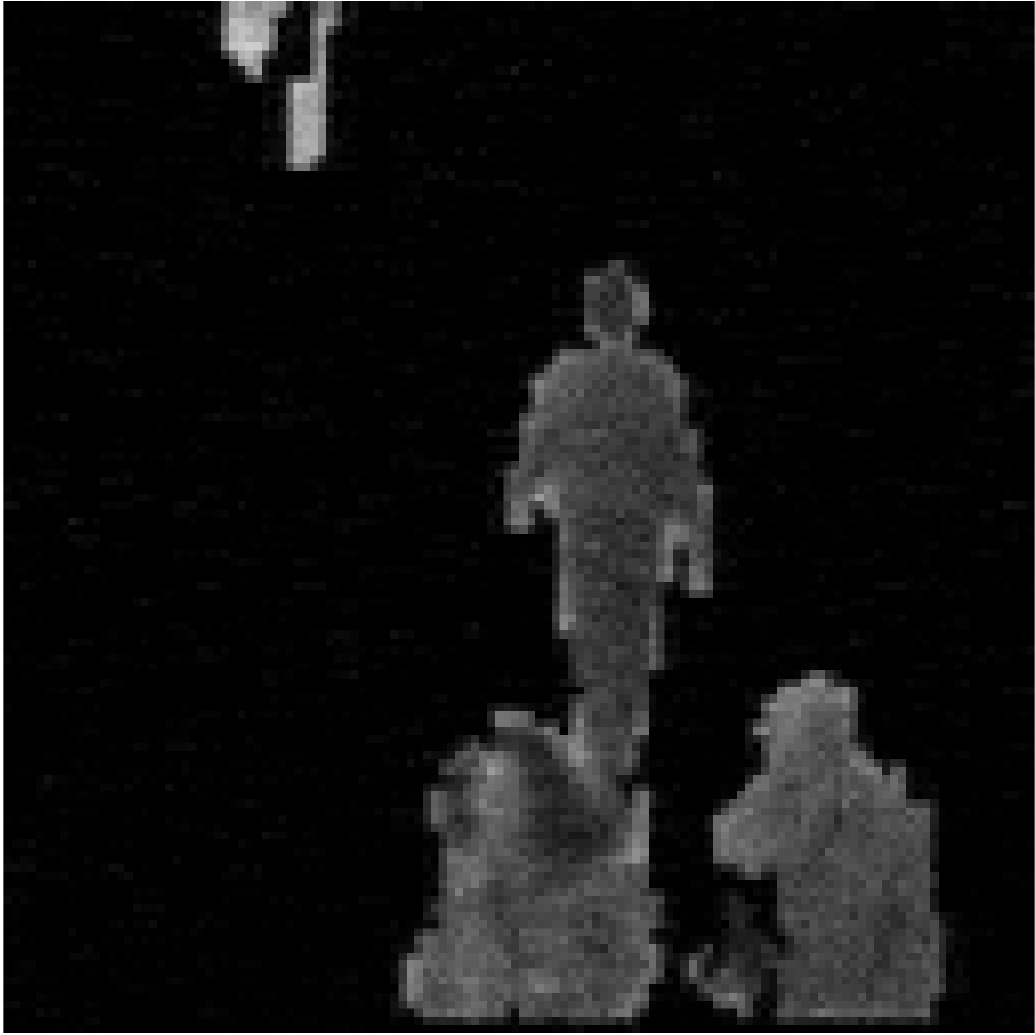}
	\includegraphics[width = 0.15\linewidth]{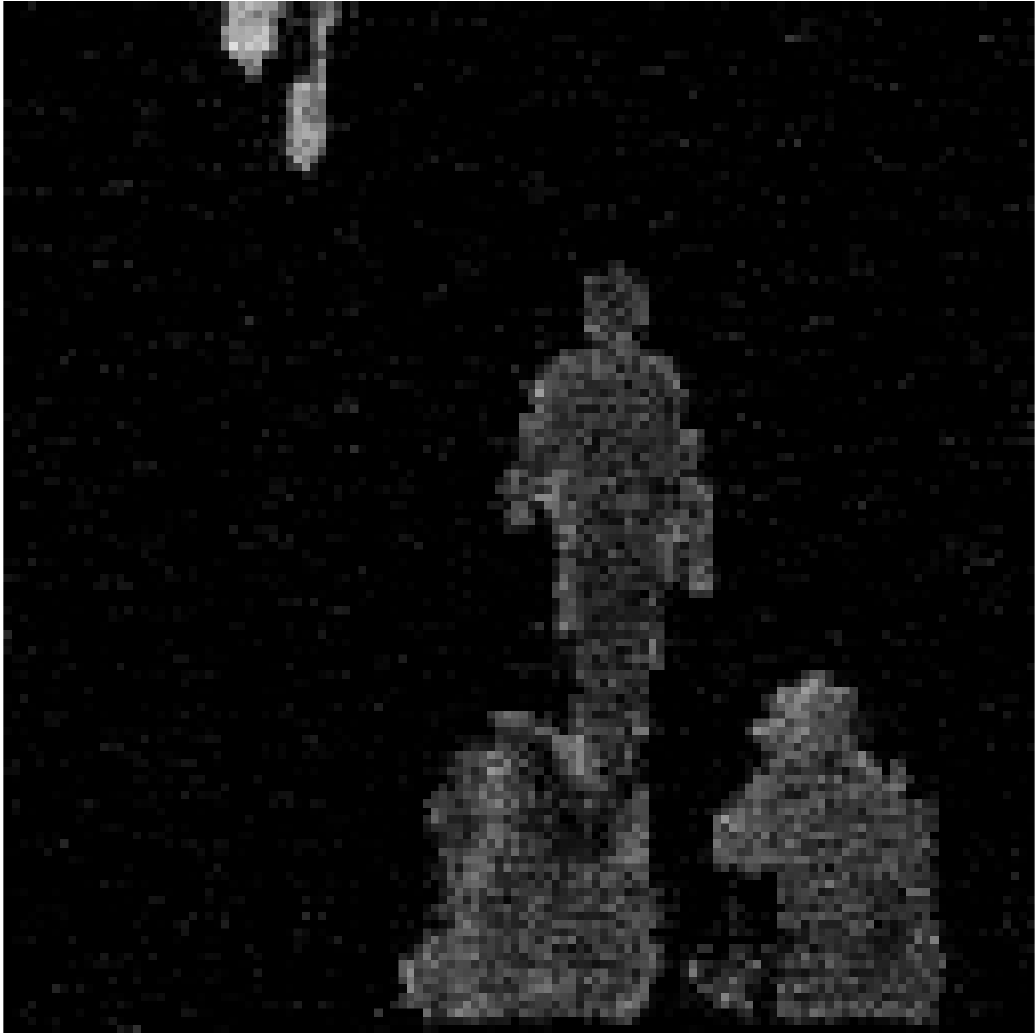} \vspace{-2mm} \\
	\begin{minipage}{0.15\linewidth}
	\centering  \small{$\phantom{a}$}
	\end{minipage}	
	\begin{minipage}{0.15\linewidth}
	\centering  \small{$21.7 ~\rm{dB}$}
	\end{minipage}
	\begin{minipage}{0.15\linewidth}
	\centering \small{$19.3~\rm{dB}$}
	\end{minipage}
	\begin{minipage}{0.15\linewidth}
	\centering \small{$34.9 ~\rm{dB}$}
	\end{minipage}
	\begin{minipage}{0.15\linewidth}
	\centering \small{$25.3 ~\rm{dB}$}
	\end{minipage} \\
	\includegraphics[width = 0.18\linewidth]{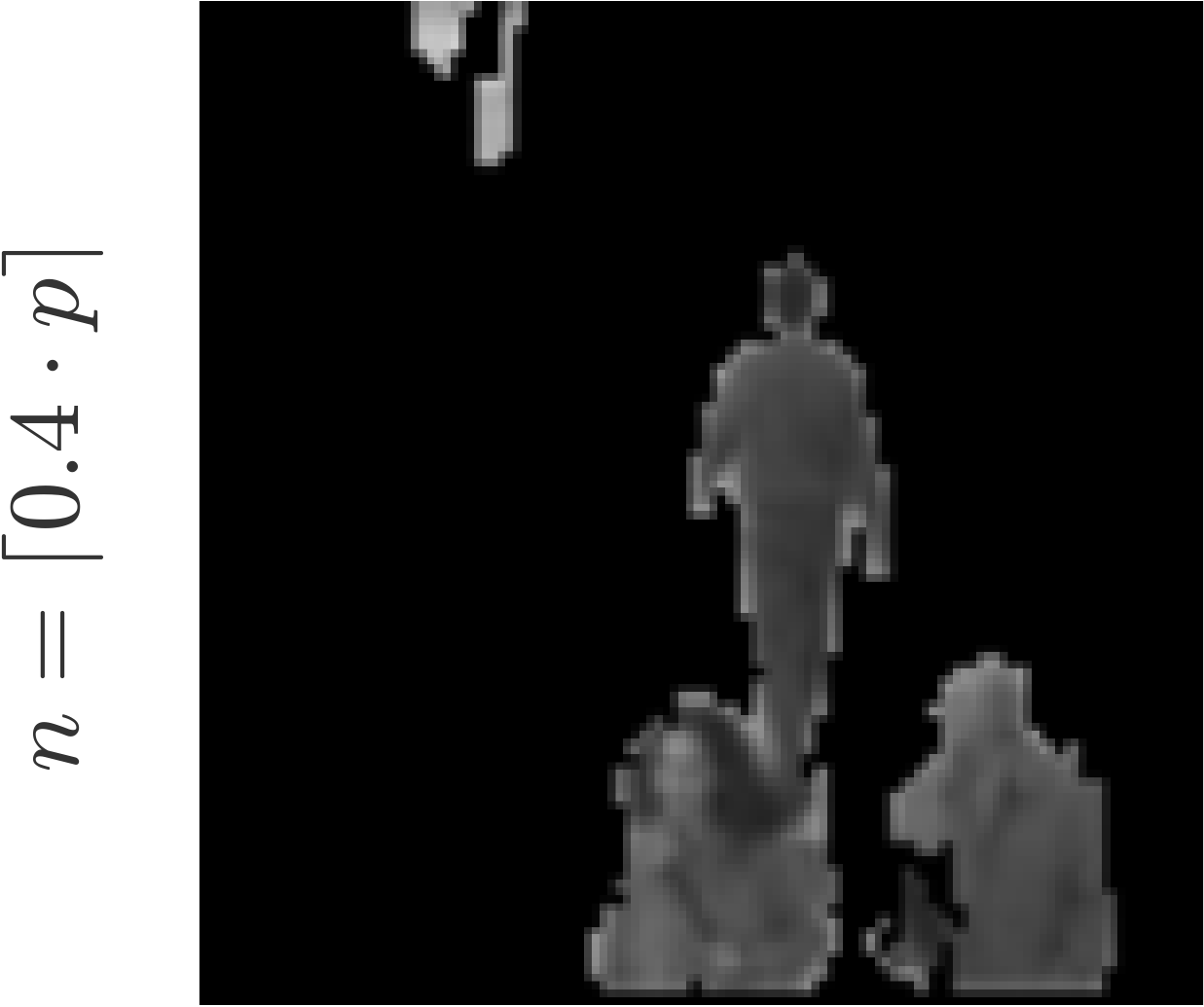}
	\includegraphics[width = 0.15\linewidth]{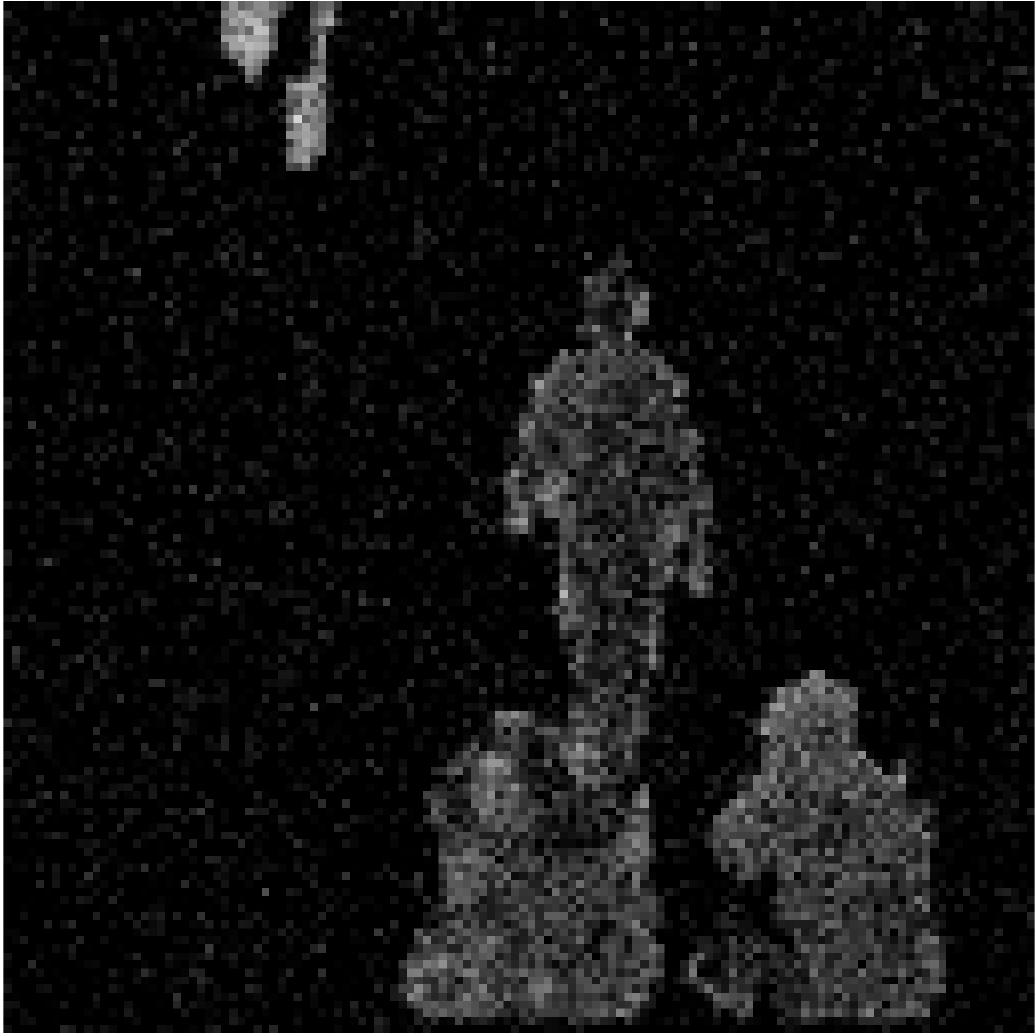}
	\includegraphics[width = 0.15\linewidth]{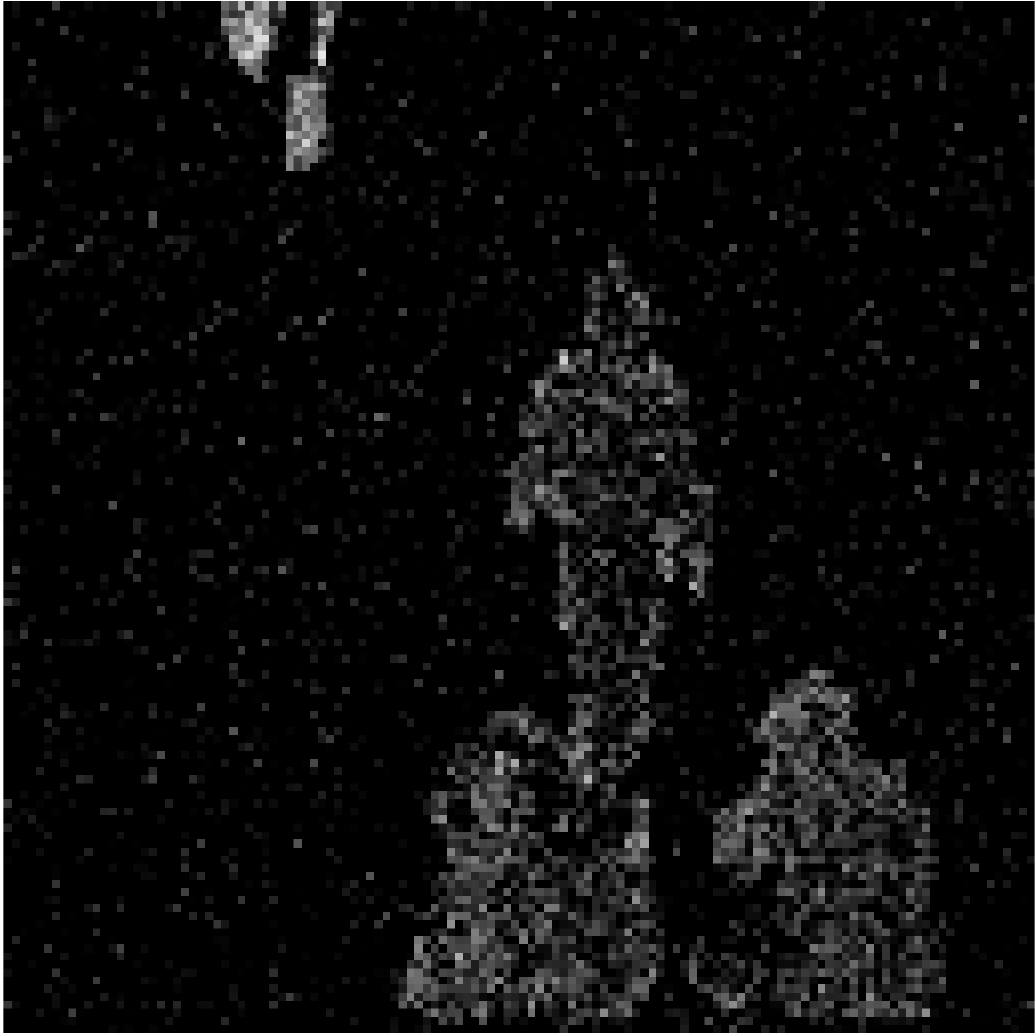}
	\includegraphics[width = 0.15\linewidth]{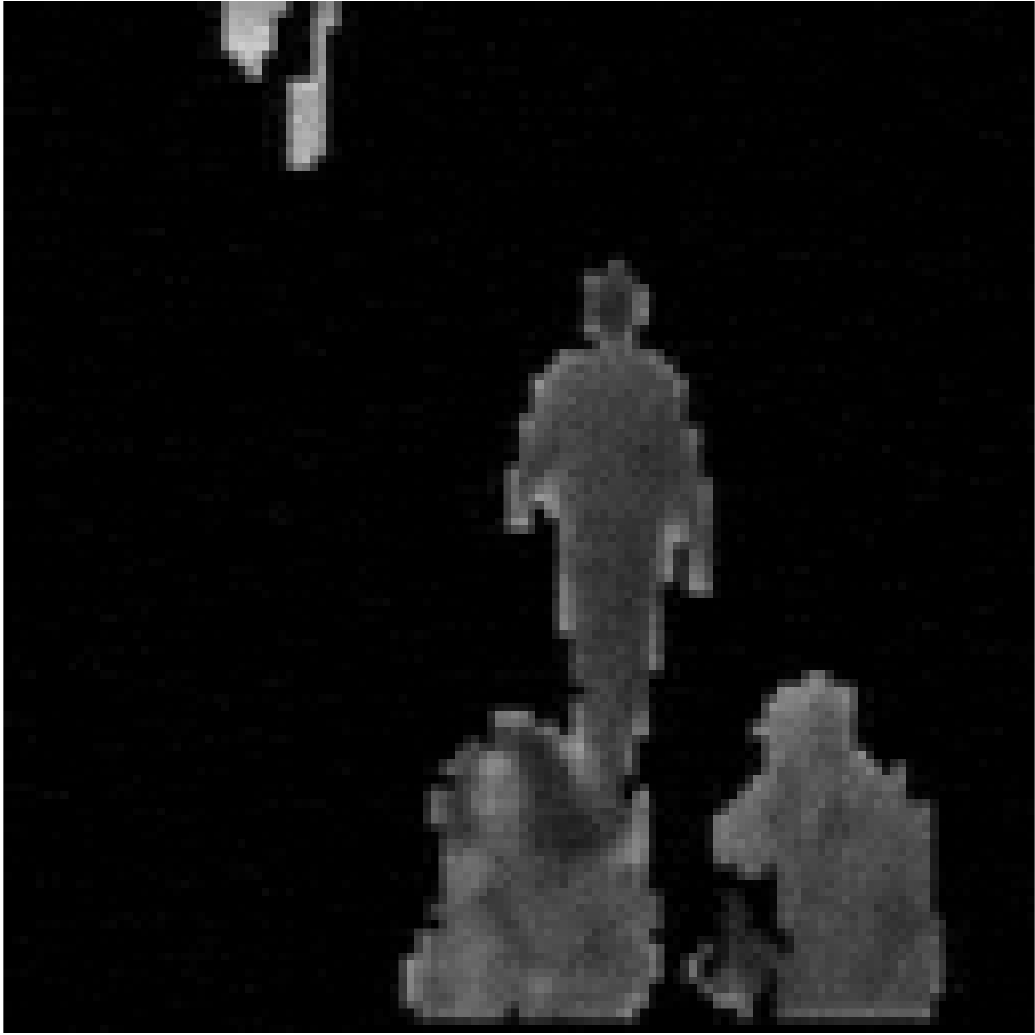}
	\includegraphics[width = 0.15\linewidth]{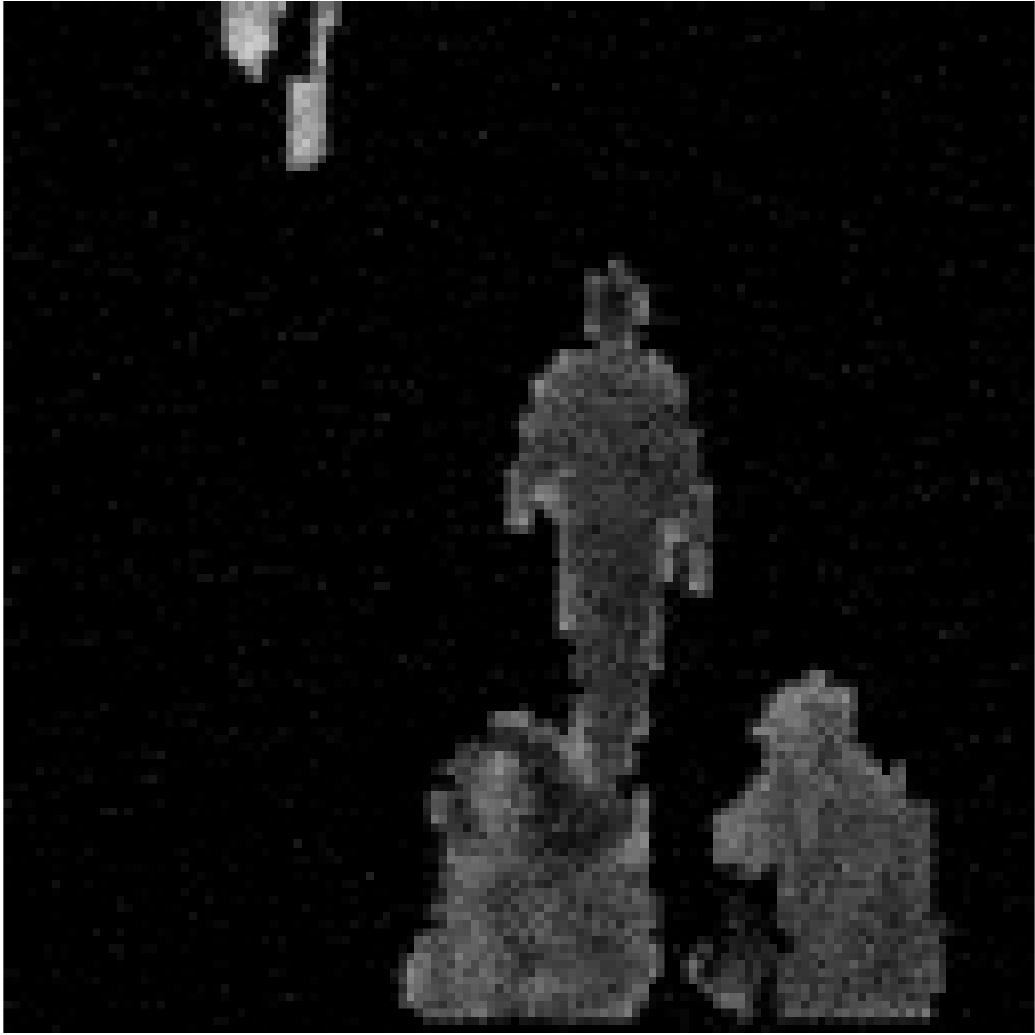} \vspace{-2mm} \\
	\begin{minipage}{0.15\linewidth}
	\centering  \small{$\phantom{a}$}
	\end{minipage}	
	\begin{minipage}{0.15\linewidth}
	\centering  \small{$23.9 ~\rm{dB}$}
	\end{minipage}
	\begin{minipage}{0.15\linewidth}
	\centering \small{$20.4 ~\rm{dB}$}
	\end{minipage}
	\begin{minipage}{0.15\linewidth}
	\centering \small{$39.1 ~\rm{dB}$}
	\end{minipage}
	\begin{minipage}{0.15\linewidth}
	\centering \small{$29.8 ~\rm{dB}$}
	\end{minipage} 

\caption{Results from real data. Representative examples of subtracted frame recovery from compressed measurements. Here, we consider a block sparse model fixed, with $g = 4$ block size per group. From top to bottom, the number of measurements range from $\lceil 0.25 \cdot \dim \rceil $ to $\lceil 0.4 \cdot \dim \rceil $, for $\dim = 2^{16}$. One can observe that one obtains better recovery as the number of measurements increases. } {\label{fig:004}}
\end{figure*}

\end{document}